\newcommand{\remove}[1]{}
\newtheorem{thm}{Theorem}[section]
\newtheorem{claim}[thm]{Claim}
\newtheorem{lemma}[thm]{Lemma}
\newtheorem{define}[thm]{Definition}
\newtheorem{cor}[thm]{Corollary}
\renewcommand{\remove}[1]{}
\newcommand{\eps}{{\varepsilon}}
\renewcommand{\l}{\left}
\renewcommand{\r}{\right}
\newcommand{\x}{{\mathbf{x}}}
\newcommand{\de}{{\delta}}
\newcommand{\comments}[1]{}
\newcommand{\rank}{\textnormal{rank}}
\renewcommand{\deg}{\textnormal{deg}}
\renewcommand{\dim}{\textnormal{dim}}
\newcommand{\dist}{\textnormal{dist}}
\newcommand{\bias}{\textnormal{bias}}
\newcommand{\calF}{\mathcal{F}}
\newcommand{\spana}{\textnormal{span}}
\newcommand{\RM}{\textnormal{RM}}
\def\F{{\mathbb{F}}}
\newcommand{\bk}{\mathbf{k}}
\newcommand{\cF}{\mathcal{F}}
\newcommand{\cH}{\mathcal{H}}
\newcommand{\CC}{\mathcal{C}}
\newcommand{\R}{\mathbb{R}}
\newcommand{\N}{\mathbb{N}}
\newcommand{\E}{\mathbb{E}}
\newcommand{\K}{\mathbb{K}}
\newcommand{\Tr}{\mathbf{Tr}}
\renewcommand{\P}{\mathcal{P}}
\newcommand{\calH}{\mathcal{H}}
\newcommand{\calB}{\mathcal{B}}
\newcommand{\B}{\mathcal{B}}
\newcommand{\C}{\mathbb{C}}
\renewcommand{\Pr}{\mathbf{Pr}}
\newcommand{\ip}[1]{\left \langle #1 \right \rangle}
\def\draft{0}   
    \def\ShowAuthNotes{1}
    \def\ShowAuthNotes{0}
\newcommand{\authnote}[2]{{ \footnotesize \bf{\color{red}[#1's Note: {\color{blue}#2}]}}}
\newcommand{\authnote}[2]{}
\newcommand{\AuthornoteA}[2]{{\sf\small\color{red}{[#1: #2]}}}
\newcommand{\Anote}{\AuthornoteA{A}}
\begin{document}
\title{Bias vs structure of polynomials in large fields, and applications in information theory}

\author{
Abhishek Bhowmick\thanks{Research supported in part by NSF Grant CCF-1218723.}\\
Department of Computer Science\\
The University of Texas at Austin\\
\texttt{bhowmick@cs.utexas.edu}
\and
Shachar Lovett \thanks{Supported by NSF CAREER award 1350481}\\
Department of Computer Science and Engineering\\
University of California, San Diego\\
\texttt{slovett@ucsd.edu}}

\maketitle

\begin{abstract}
Let $f$ be a polynomial of degree $d$ in $n$ variables over a finite field $\F$. The polynomial is said to be unbiased if the distribution of $f(x)$ for a uniform input $x \in \F^n$ is close to the uniform distribution over $\F$, and is called biased otherwise. The polynomial is said to have low rank if it can be expressed as a composition of a few lower degree polynomials. Green and Tao [Contrib. Discrete Math 2009] and Kaufman and Lovett [FOCS 2008]
showed that bias implies low rank for fixed degree polynomials over fixed prime fields. This lies at the heart of many tools in higher order Fourier analysis. In this work, we extend this result to all prime fields (of size possibly growing with $n$). We also provide a generalization to nonprime fields in the large characteristic case. However, we state all our applications in the prime field setting for the sake of simplicity of presentation.

Using the above generalization to large fields as a starting point, we are also able to settle the list decoding radius of fixed degree Reed-Muller codes over growing fields. The case of fixed size fields was solved by Bhowmick and Lovett [STOC 2015], which resolved a conjecture of Gopalan-Klivans-Zuckerman [STOC 2008]. Here, we show that the list decoding radius is equal the minimum distance of the code for all fixed degrees, even when the field size is possibly growing with $n$.

Additionally, we effectively resolve the weight distribution problem for Reed-Muller codes of fixed degree over all fields, first raised in 1977 in the classic textbook by MacWilliams and Sloane [Research Problem 15.1 in Theory of Error Correcting Codes].
\end{abstract}

\section{Introduction}

Let $f$ be a polynomial of degree $d$ in $n$ variables over a finite field $\F$. The polynomial $f$ is said to be unbiased if the distribution of $f(x)$ for a uniform input $x \in \F^n$ is close to the uniform distribution over $\F$, and is called biased otherwise. We say that $f$ has low rank if it can be expressed as a composition of a few lower degree polynomials. The goal is to understand the structure of polynomials that are biased. Green and Tao~\cite{GT09} and Kaufman and Lovett~\cite{KL08} showed over fixed prime fields, that if a fixed degree polynomial is biased, then it has low rank. Such a result lies at the heart of many tools in higher order Fourier analysis. However, the bounds obtained from the above results have very weak dependence (Ackermann-type) on the field size $|\F|$ and the degree $d$, and thus are inefficient for large fields. In this work, we extend this to large fields, by proving bounds that are polynomial in the field size $|\F|$.

More precisely, we have the following. Let $\F$ be a prime finite field. Let $\P_d(\F^n)$ denote the family of polynomials $f:\F^n \to \F$ of total degree at most $d$. Let $e:\F \to \C$ be an additive character, $e(a)=\exp(2 \pi i a/|\F|)$.

\begin{thm}\label{thm:blr}Let $d,s\in \N$. Let $f \in \P_d(\F^n)$. Suppose that $|\E_{x \in \F^n}[e(f(x))]| \geq |\F|^{-s}$. Then, there exist $g_1,\ldots g_c \in \P_{d-1}(\F^n)$, $c=c^{(\ref{thm:blr})}(d,s)$, and $\Gamma:\F^c \rightarrow \F$, such that $f(x)=\Gamma(g_1(x),\ldots, g_c(x))$.
\end{thm}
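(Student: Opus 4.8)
The plan is to prove, by induction on $d$, the contrapositive of a cleaner intermediate statement: \emph{if $f \in \P_d(\F^n)$ has Schmidt rank (a.k.a.\ strength) larger than $R(d,s)$ — meaning $f$ cannot be written as $g_0 + \sum_{i=1}^{R(d,s)} g_i h_i$ with $\deg g_i,\deg h_i \le d-1$ — then $|\E_{x}[e(f(x))]| < |\F|^{-s}$.} This suffices for Theorem~\ref{thm:blr}: if $f$ has Schmidt rank at most $k := R(d,s)$, write $f = g_0 + \sum_{i=1}^k g_i h_i$ and set $\Gamma(a_0,a_1,\dots,a_k,b_1,\dots,b_k) = a_0 + \sum_i a_i b_i$, a \emph{fixed} degree-$2$ polynomial whose complexity does not depend on $|\F|$, so $c = 2k+1$ works. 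Two preliminary reductions are free. First, I may assume $|\F| > C(d)$ for any constant I like: if $|\F| \le C(d)$, the hypothesis $|\E e(f)| \ge |\F|^{-s} \ge C(d)^{-s}$ is a bias bound over a \emph{fixed} field, so the results of Green--Tao~\cite{GT09} and Kaufman--Lovett~\cite{KL08} already give a rank bound that is a function of $d$ and $s$ alone. Second, since $\F$ is prime and now $|\F| > C(d) > d$, we have $d < \mathrm{char}(\F)$, which is exactly what lets us symmetrize, use Euler's identity to pass between $f$ and its partial derivatives, and identify the degree-$d$ part $f_d$ with its polarization — this is also the reason the non-prime case requires large characteristic.

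The inductive step reduces the degree by differentiation together with a geometric input. Since $\E_{x,y}[e(\Delta_y f(x))] = |\E_x e(f(x))|^2$, a biased $f$ has a density-$\Omega(|\F|^{-(2s+1)})$ set $Y$ of directions $y$ with $|\E_x e(\Delta_y f(x))| \ge |\F|^{-(2s+1)}$, and each $\Delta_y f$ has degree $\le d-1$, hence (by the inductive hypothesis, contrapositive direction) Schmidt rank at most $r := R(d-1,2s+1)$. The classical route — assembling these low-rank derivatives into a low-rank $f$ via an iterated regularity lemma — is exactly what produces Ackermann-type, field-size-dependent bounds; a naive sumset argument on $Y$ also fails, since covering $\F^n$ by translates of $Y$ costs a factor $|\F|^{2s+1}$ in the rank. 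Instead I would argue directly that high Schmidt rank forces small bias, via an \emph{equidistribution estimate valid over large fields}. Concretely: for a uniformly random affine subspace $W \le \F^n$ of bounded dimension $m = m(d,s)$ (a uniform point of a uniform such $W$ is uniform in $\F^n$, so $\E_x e(f(x)) = \E_W \E_{x\in W} e(f|_W(x))$), one shows that if $f$ has Schmidt rank $> R(d,s)$ then, except with probability exponentially small in the Schmidt rank and \emph{independent of $n$ and $|\F|$}, the restriction $f|_W$ is a degree-$\le d$ polynomial whose leading form is smooth — an effective Bertini-type statement. On such ``good'' $W$, Deligne's bound gives $|\E_{x\in W} e(f|_W(x))| \le (d-1)^m |\F|^{-m/2}$, which is at most $|\F|^{-m/4}$ once $|\F| > (d-1)^4$; choosing $m \asymp s$ and the Schmidt rank threshold large enough in terms of $d,s$, both the good-$W$ contribution and the bad-$W$ probability are below $\tfrac12 |\F|^{-s}$, so $|\E_x e(f(x))| < |\F|^{-s}$, contradicting the hypothesis.

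The main obstacle is precisely this effective Bertini / equidistribution estimate: showing that a high-rank polynomial has non-degenerate generic bounded-dimensional slices, with a failure probability controlled by the rank and crucially independent of both the ambient dimension $n$ and the field size $|\F|$. I expect to prove it itself by induction on $d$, which is where the self-reference of the argument lives: non-degeneracy of $f|_W$ is governed by the common zeros of the partial derivatives of $f_d|_W$, i.e.\ by the lower-degree structure of $f$, so the inductive hypothesis on derivatives feeds back in; the bad set of $W$ is cut out inside a Grassmannian-like parameter space by discriminant-type equations of bounded complexity, and one must show it is small unless $f$ itself has low Schmidt rank. (Equivalently, one can package the whole degree-reduction as controlling the analytic rank of the multilinear symbol of $f$ and then bounding its partition rank; I find the slicing formulation more transparent, but the essential difficulty — getting $n$- and $|\F|$-independent quantitative control of ``high rank $\Rightarrow$ generic, equidistributed'' — is the same.)

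Everything else is routine bookkeeping that I would not grind through here: the Markov/averaging estimates producing $Y$; fixing the exact definition of Schmidt rank and checking its sub-additivity under the operations used (restriction to subspaces, translation $x \mapsto x+y$, differentiation); the elementary inequalities relating the parameters across the $d$-fold recursion (each step maps $s \mapsto 2s+1$ and $r \mapsto$ a function of $d,r,s$, so the final bound $c^{(\ref{thm:blr})}(d,s)$ depends only on $d$ and $s$); and the final translation from the Schmidt-rank form $f = g_0 + \sum g_i h_i$ back to the composition form $f = \Gamma(g_1,\dots,g_c)$.
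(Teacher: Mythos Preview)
Your approach is genuinely different from the paper's, and the difference is worth spelling out.  The paper follows the Green--Tao template: first (Lemma~\ref{lem:blr}) it shows that a biased $f$ is \emph{approximated} by a function of $c(d,s,t)$ derivatives $D_{h_i} f$, using a pairwise-independence / Chebyshev argument that --- crucially --- gives $c$ independent of $|\F|$; then it regularizes this factor (Lemma~\ref{lem:reg}) and, by a parallelepiped-counting argument using only the \emph{inductive} hypothesis (via the equidistribution Lemmas~\ref{lem:atomsize} and~\ref{lem:parallel}), upgrades the approximation to exact equality atom by atom.  There is no random restriction, no Deligne, and no Bertini; the only ``external'' input is the degree-$d{-}1$ case of the same theorem, which is precisely what the induction supplies.

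Your proposal, by contrast, defers essentially the entire content of the theorem into the ``effective Bertini / equidistribution'' step, and that step is not filled in --- and I do not see how to fill it in along the lines you sketch.  Two concrete issues.  First, the condition you need on a good slice $W$ is that the leading form $f_d|_W$ be \emph{smooth} in the sense of Deligne; this is strictly stronger than $f_d|_W$ having maximal Schmidt rank in $m$ variables, and high Schmidt rank of $f$ is not known to force generic smoothness of bounded-dimensional slices with a failure probability independent of $|\F|$ and $n$.  Second, your inductive plan for proving the Bertini statement is circular: you want to show the ``bad'' set of $W$ (where the discriminant of $f_d|_W$ vanishes) is small unless $f$ has low Schmidt rank, but bounding the size of this bad set is itself a point-counting problem on a variety cut out by polynomials of degree depending on $d$ --- exactly the kind of problem whose resolution over large fields requires the bias-vs-rank theorem you are trying to prove (cf.\ Lemma~\ref{lem:rationalpoints}, which sits \emph{downstream} of Theorem~\ref{thm:blr} in the paper).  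The inductive hypothesis gives you control over degree-$\le d{-}1$ polynomials, but the discriminant locus you would need to control is not obviously of degree $\le d{-}1$ in the relevant parameters, and ``bounded complexity'' in the Grassmannian is not by itself enough to bound its density independently of $|\F|$.

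In short: the random-restriction-plus-Deligne strategy is a reasonable research direction (and is related to later work on analytic vs.\ partition rank), but as a proof of Theorem~\ref{thm:blr} it has a genuine gap at the Bertini step, which is not ``routine bookkeeping'' but the heart of the matter.  The paper avoids this entirely by working with the approximation Lemma~\ref{lem:blr} and the regularity machinery, never leaving the world of polynomial factors and parallelepipeds.
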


Crucially, the rank $c$ is independent of both the field size $|\F|$ and the number of variables $n$. We show (Lemma~\ref{lem:complowdegree}) that $\Gamma$ itself is a low degree polynomial: if $\deg(g_i)=d_i$ then
$$
\Gamma(z_1,\ldots,z_c) = \sum_{e \in \N^c: \sum d_i e_i \le d} \alpha_e \prod_{i=1}^c z_i^{e_i}.
$$

\subsection{List Decoding Reed-Muller codes}
The notion of \emph{list decoding} was first introduced by Elias~\cite{Elias} and Wozencraft~\cite{Woz} to decode \emph{error correcting codes} beyond half the minimum distance. The objective of list decoding is to output all the codewords within a specified radius around the received word.
List decoding has applications in many areas of computer science including hardness amplification in complexity theory~\cite{STV, luca-xor}, construction of hard core predicates from one way functions~\cite{GL, AGS}, construction of extractors and pseudorandom generators~\cite{TZS, SU, Vad, GUV} and computational learning~\cite{KM,Jackson}. Despite so much progress, the largest radius up to which list decoding is tractable is still a fundamental open problem even for well studied codes like Reed-Solomon (univariate polynomials) and Reed-Muller codes (multivariate polynomials).

Reed-Muller codes (RM codes) were discovered by Muller in 1954. Fix a finite field $\F$ and $d,n \in \N$. The RM code $\RM_{\F}(n,d)$ is defined as follows. The message space consists of degree $d$ polynomials in $n$ variables over $\F$ and the codewords are evaluation of these polynomials on $\F^n$. The distance of two functions $f,g:\F^n \to \F$ is the fraction of points where they disagree,
$$
\dist(f,g):=\Pr_x[f(x) \ne g(x)]
$$
The minimal distance of a code is the minimal distance of any two distinct codewords. For $\RM_{\F}(n,d)$, this is well understood. When $d<|\F|$ the minimal distance is given by
$$
\dist_{\min}(\RM_{\F}(n,d))=1-\frac{d}{|\F|}.
$$
More generally, if $d=a(|\F|-1)+b$ for $0 \le b \le |\F|-1$ then the minimal distance is $|\F|^{-a} (1-\frac{b}{|\F|})$, but as we focus on large fields, we will always be in the regime of $d<|\F|$.

The list decoding radius of a code is the maximal radius, such that any ball of that radius (centered around an arbitrary function) contains only a few codewords. Let $\CC=\RM_{\F}(n,d)$. For $g:\F^n \to \F$, $0<\rho<1$ define
$$
B_{\CC}(g,\rho) := \l\{f \in \P_d(\F^n): \dist(f,g) \le \rho\r\}.
$$
and
$$
L_{\CC}(\rho) := \max_{g:\F^n \rightarrow \F} |B_{\CC}(g,\rho)|.
$$
The list decoding radius of $\CC$ is the maximal radius, up to which $L_{\CC}(\rho)$ is ``small".  In the regime of growing fields, ``small" is defined as as polynomial in the field size. It is easy
to see that the list decoding radius cannot exceed the minimal distance of the code. The Johnson bound~\cite{johnson} provides a general lower bound for the list decoding radius, which is determined
only by the minimal distance of the code. It is known to be tight in general, but it is conjecture not to be tight for special families of codes, for example Reed-Muller codes.

In the regime of constant size fields, it is known that the list decoding radius is in fact equal to the minimal distance of Reed-Muller codes. It was initially proved
by Goldreich and Levin~\cite{GL} and Goldreich, Rubinfield and Sudan~\cite{GRS} for linear polynomials, that is, $d=1$. Later, Gopalan, Klivans and Zuckerman~\cite{GKZ08} proved it for the binary field, $\F=\F_2$, and for general fixed prime fields $\F_p$ whenever $(p-1)|d$. They conjectured that is holds for all fixed $d,p$. Gopalan~\cite{Gopalan10} proved it for $d=2$. Bhowmick and Lovett~\cite{BL14} proved it for all fixed prime fields and all degrees. In this work, we extend this to all prime fields, with size possibly growing with $n$.

\begin{thm}\label{THM:listdecode}
Let $d,s \in \N$. There exists $c=c(d,s)$ such that the following holds.
For any prime finite field $\F$ with $|\F|>d$ and any $n \in \N$,
$$
L_{\RM_{\F}(n,d)}\l(1-\frac{d}{|\F|}-\frac{1}{|\F|^s}\r) \le |\F|^c.
$$
Moreover, for any $1 \le e<d$,
$$
L_{\RM_{\F}(n,d)}\l(1-\frac{e}{|\F|}-\frac{1}{|\F|^s}\r) \le |\F|^{c \cdot n^{d-e}}.
$$
\end{thm}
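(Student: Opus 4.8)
The plan is to follow the standard strategy for proving that the list decoding radius equals the minimum distance, using Theorem~\ref{thm:blr} (bias implies low rank) as the key new ingredient that makes everything work over growing fields. Fix $g:\F^n \to \F$ and suppose $f_1, \ldots, f_N \in \P_d(\F^n)$ all lie in the ball $B_{\CC}(g, \rho)$ with $\rho = 1 - \frac{d}{|\F|} - \frac{1}{|\F|^s}$. The first step is the familiar counting/inclusion–exclusion argument: since each $f_i$ agrees with $g$ on at least a $\frac{d}{|\F|} + \frac{1}{|\F|^s}$ fraction of points, a second-moment / union-bound computation shows that if $N$ is too large then two distinct codewords $f_i, f_j$ must agree on strictly more than a $\frac{d}{|\F|}$ fraction of points — but we cannot yet conclude anything, since over large fields the minimum distance is $1 - \frac{d}{|\F|}$ and this does not immediately contradict anything. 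Instead the right approach is to look at the \emph{difference} $h = f_i - f_j \in \P_d(\F^n)$ and the indicator of agreement.

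The core idea is to associate to each codeword $f_i$ the weight function $w_i(x) = \frac{1}{|\F|}\sum_{a\in\F} e(a(f_i(x)-g(x)))$, which is $1$ when $f_i(x)=g(x)$ and $0$ otherwise, and to expand $\sum_i w_i$ or a correlation thereof. After expanding characters one is led to study, for various coefficients $a_i \in \F$, the bias of the degree-$d$ polynomial $\sum_i a_i f_i$ (and sums involving $g$). The second step is: whenever such a combination is biased, $|\E_x[e(\sum_i a_i f_i(x))]| \ge |\F|^{-O(1)}$, Theorem~\ref{thm:blr} gives that $\sum a_i f_i = \Gamma(u_1,\ldots,u_C)$ for lower-degree $u_j$, with $C = C(d, s)$ \emph{independent of $|\F|$ and $n$}. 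The third step is a dimension/rank argument on the space of codewords: one argues that all the $f_i$ together can be captured, up to the relevant structure, by a bounded number of lower-degree polynomials plus a bounded-dimensional space of degree-$d$ pieces, and within such a structured family the number of elements that can lie in a single Hamming ball of radius below the minimum distance is at most $|\F|^{C'}$. This last step is essentially the argument of Bhowmick–Lovett~\cite{BL14}: once you know the codewords live in a low-rank family, you reduce to a constant number of variables (the $u_j$'s), where list decoding up to minimum distance is classical (e.g.\ via the Johnson bound in bounded dimension, which loses only constant factors) and yields a $|\F|^{O(1)}$ bound. For the moreover clause with $e < d$, the same scheme applies but the relevant reduction is to polynomials of degree $e$ on an $n$-variable space whose list-decodability up to $1-\frac{e}{|\F|}$ carries the extra $n^{d-e}$ factor coming from the number of degree-$(d-e)$ directional derivatives one must control.

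I expect the main obstacle to be the third step: cleanly extracting from "every biased $\F$-linear combination of the $f_i$ is low rank" the global statement that the entire collection $\{f_i\}$ lies in a single bounded-rank family. Naively, each combination has its own small set of lower-degree polynomials $g_1,\ldots,g_c$, and one must argue these can be merged across exponentially many combinations without the rank blowing up — this requires a compactness or iterative-regularization argument (regularize the $f_i$ simultaneously into a common bounded "polynomial factor," then argue the characters behave well on the atoms of that factor), and controlling the error terms introduced by regularization so that they stay below the $|\F|^{-s}$ threshold is the delicate part. The transition from Theorem~\ref{thm:blr}'s rank bound (which is qualitative in $d,s$) to a final list size $|\F|^{c(d,s)}$ needs the rank bound to be uniform, which it is, so the only quantitative care needed is in the counting argument that sets up the bias in the first place. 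I would also need to verify that Lemma~\ref{lem:complowdegree} (that $\Gamma$ is itself low degree) is used where needed to ensure the structured family stays within $\P_d$, so that the reduction to constantly (resp.\ $n^{d-e}$) many variables is legitimate.
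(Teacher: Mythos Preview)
Your high-level intuition is right --- Theorem~\ref{thm:blr} is indeed the engine, and the endgame does reduce to a bounded factor --- but the concrete mechanism you sketch has a real gap, and it differs substantially from the paper's route.

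The character expansion you propose produces terms of the form $\E_x[e(a f_i(x)) e(-a g(x))]$, and the second factor involves the \emph{arbitrary} received word $g$, which is not a polynomial. Theorem~\ref{thm:blr} says nothing about such mixed expressions, so ``the bias of $\sum a_i f_i$ (and sums involving $g$)'' is not something you can control this way. The paper sidesteps this entirely: its first reduction (Lemma~\ref{lem:johnsonred}) builds a graph on the list with edges $\{f_i,f_j\}$ whenever $\rank(f_i-f_j)$ is small, observes that an independent set has minimum distance $1-|\F|^{-1}-|\F|^{-2s}$ (this uses Theorem~\ref{thm:blr} on the genuine polynomial $f_i-f_j$, no $g$ involved), and applies the Johnson bound to that independent set. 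This cleanly reduces to bounding the list for the low-rank subcode $\RM_\F(n,d,k)$.

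Your acknowledged ``main obstacle'' --- merging the per-combination low-rank witnesses into one global bounded factor --- is exactly where the real work lies, and you don't propose a mechanism. Simultaneously regularizing all the $f_i$ is circular (you'd need to know $N$ is bounded to keep the factor bounded). The paper's resolution is to shift perspective: rather than structure the codewords, it structures the \emph{center}. Lemma~\ref{lem:proxy} shows, via a Fourier-type decomposition over the simplex $\Delta(\F)$ and a weak-regularity iteration, that for the low-rank subcode one can replace $\varphi=p(g)$ by one of $|\F|^{O(1)}$ many proxies $\E[\varphi\mid\cF_i]$, each measurable with respect to a bounded polynomial factor. Only then does one bound the list around a measurable center (Lemma~\ref{lem:final}); and the $n^{d-e}$ exponent there comes not from ``directional derivatives'' but from proving that the auxiliary polynomials in the regularized factor for $f$ have degree $\le d-e$, so they cost $|\F|^{O(n^{d-e})}$ to enumerate.
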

If $|\F| \leq d$, then the result follows from~\cite{BL14}.
There have been few results that show list decodability beyond the Johnson radius~\cite{DGKS08, GKZ08}. This work shows that Reed-Muller codes of fixed degree are list decodable beyond the Johnson radius.

\subsection{Non-prime fields}

The main focus of this paper is prime fields. However, we show (Theorem~\ref{thm:blrnprime}) that Theorem~\ref{thm:blr} can be extended to non-prime fields,
as long as their characteristics exceeds the degree of the polynomial studied. All the other results in this paper extend to this case as well, as given Theorem~\ref{thm:blr},
their proof extend without requiring any change.

\subsection{Proof Overview}
We first present a proof overview for Theorem~\ref{thm:blr}. The proof is along the lines of Green and Tao~\cite{GT09}. Let $f(x)$ be a polynomial of degree $d$ that is biased, that is $|\E_{x \in \F^n} e(f(x))| \ge |\F|^{-s}$. We first prove that there is a low rank approximation to the given polynomial $f$. That is, there exist $g_1,\ldots g_c \in \P_{d-1}(\F^n)$, $c=c(d,s,t)$, and $\Gamma:\F^c \rightarrow \F$, such that $$\Pr_{x \in \F^n}[f(x) \neq \Gamma(g_1(x),\ldots g_c(x))]\leq |\F|^{-t}.$$
In the regime of fixed finite fields, this was proved by Bogdanov and Viola~\cite{BV}, where the bound $c$ depends polynomially on the underlying parameters, including the error bound,
which means that it depends on the field size. Here, we obtain a variant of the lemma, where the bound is independent of the field size. This is crucial in the next step of the proof, where we show that if the error in approximation is small enough, then it can be converted to an exact computation, if we make the underlying polynomials ``random enough" by a regularization process. As this step increases the number
of polynomials tremendously, we cannot tolerate any dependence on the field size in the first part of the proof. The proof follows along the lines of~\cite{GT09} with appropriate modifications to tackle the case of growing field size.

The applications in effective algebraic geometry follow by using the principles of regularization, thereby reducing the dimension of the problem to a constant, solving it in constant dimension, and lifting the solution back to the original problem. They are typically straightforward applications of the former result.

The application to list decoding of Reed-Muller codes is more involved and uses the bias vs low rank theorem as one of the building blocks. Given a received function $g:\F^n \to \F$, the first step is to show that it is enough to bound the list size of a subcode of the Reed-Muller code, consisting of only the low rank polynomials. This step is similar to the work of Gopalan~\cite{Gopalan10}. We next show that the list decoding problem for low rank codes can be further reduced to the case where the center $g$ is of ``low complexity'', concretely, when $g$ is measurable with respect to a small polynomial factor of bounded degree. Unlike the case of fixed finite fields handled in~\cite{BL14}, we need to allow a number of potential low complexity centers for each received word. However, we show that this number is still polynomial in the field size, which allows to keep the number of codewords polynomial in the field size as well. Finally, we prove that the list size around such a low complexity center is bounded. The last part is similar to the analogous part in the previous work of the authors~\cite{BL14}.

\subsection{Related work}
Except for the work mentioned already, another related work which is worth mentioning is the recent algebraic regularity lemma by Tao~\cite{tao_algebraic}.
It improves upon the Szemer{\'e}di regularity lemma~\cite{Sze} in the setting where the graph is definable over a field of large characteristic.
In a high level, it shows that if a graph has the vertex set $\F^n$, for fixed $n$ and large $\F$, and edges defined by polynomial equalities of bounded complexity (fixed degree polynomials, fixed number
of variables, fixed number of logical operations) then the graph can be partitioned to a bounded number of subsets, such that all pairs are regular.
This should be compared to the Szemer{\'e}di regularity lemma, which can only guarantee this for most pairs.

A previous version of the paper had incorrectly claimed a certain application in effective algebraic geometry, which we have now removed. We acknowledge Guy Moshkovitz for pointing this out.

\subsection{Organization}

The rest of the paper is as follows. Section~\ref{sec:prelim} contains preliminaries. In Section~\ref{sec:approx} we show that any biased polynomial can be approximated by a composition of a small number of lower degree polynomials. In Section~\ref{sec:equal}, we show how to convert a good enough approximation to an exact computation. Section~\ref{sec:list} contains the application to list decoding of Reed-Muller codes.

\section{Preliminaries}\label{sec:prelim}
Let $\N$ denote the set of positive integers. For $n \in \N$, let $[n]:=\{1,2,\ldots , n\}$. We use $y=x \pm \eps$ to denote $y \in [x-\eps, x+\eps]$.
For $n \in \N$, and $x,y \in \C^n$, let $\langle x,y \rangle:=\sum_{i=1}^n x_i\overline{y_i}$ where $\overline{a}$ is the conjugate of $a$. Let $||x||_2:=\sqrt{\langle x,x \rangle}$.

Fix a prime field $\F=\F_p$. Let $|\cdot|: \F \to \{0,\ldots,p-1\} \subset \N$ be the natural map.
Let $e:\F \rightarrow \C$ be an additive character, defined as $e(a):=e^{2\pi i a / p}$. Recall that we denote by $\P_d(\F^n)$ the family of polynomials $f:\F^n \to \F$ of total degree at most $d$.
Given a function $f:\F^n \to \F$, its directional derivative in direction $h \in \F^n$ is $D_h f:\F^n \to \F$, given by
$D_h f(x) = f(x+h) - f(x)$.
Observe that if $f \in \P_d(\F^n)$ then $D_h f \in \P_{d-1}(\F^n)$ for all $h \in \F^n$. For $y_1,\ldots,y_m \in \F^n$ defined the iterative derivative as
$D_{y_1,\ldots,y_m} f = D_{y_1} \ldots D_{y_m} f$.
In particular, if $f \in \P_d(\F^n)$ and $m>d$ then $D_{y_1,\ldots,y_m} f=0$.

Let $X,Y$ be finite sets. Define $\Delta(Y):=\{q:Y \rightarrow \R_{\geq 0}  :\sum_{y \in Y} q(y)=1\}$ to be the probability simplex on $Y$. We embed $Y \subset \Delta(Y)$ in the obvious way: $y \in Y$ is mapped
to a unit vector $e_y$ with $1$ in coordinate $y$ and $0$ in all other coordinates. For a function $f:X \to Y$ let $p(f):X \to \Delta(Y)$ denote its corresponding embedding, given by $p(f)(x)=e_{f(x)}$.
Note that $\Delta(Y)$ is endowed with an inner product, as a subset of $\R^Y$. So, if $f,g:X \to Y$ then
$$
\Pr_{x \in \F^n}[f(x)=g(x)] = \E_{x \in \F^n} [\ip{p(f)(x), p(g)(x)}].
$$

\section{Bias implies low rank approximation}\label{sec:approx}

\begin{lemma}\label{lem:blr}Let $d,s,t \in \N$. Let $f \in \P_d(\F^n)$. Suppose $\l|\E_{x \in \F^n}[e(f(x))]\r| \geq |\F|^{-s}$. Then, there exist $g_1,\ldots g_c \in \P_{d-1}(\F^n)$, $c=c(d,s,t) = \binom{d+t+2s+3}{d}$, and $\Gamma:\F^c \rightarrow \F$, such that $$\Pr_{x \in \F^n}[f(x) \neq \Gamma(g_1(x),\ldots g_c(x))]\leq |\F|^{-t}.$$
Moreover, each $g_i$ is obtained as a derivative of $f$, $g_i = D_{h_i} f$ for some $h_i \in \F^n$.
\end{lemma}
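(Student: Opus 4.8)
The goal is to show that a biased degree-$d$ polynomial $f$ is well-approximated by a bounded-rank composition of its own derivatives. The natural strategy, following Green--Tao and Bogdanov--Viola, is to analyze the Fourier-analytic consequence of bias: write $\beta := \E_x[e(f(x))]$, so $|\beta| \ge |\F|^{-s}$. Squaring and using the substitution $x \mapsto x+h$ gives
\[
|\beta|^2 = \E_{h} \E_{x}\bigl[e(f(x+h) - f(x))\bigr] = \E_h \E_x\bigl[e((D_h f)(x))\bigr] = \E_h\bigl[ \widehat{(D_h f)}(0)\bigr],
\]
so a $|\F|^{-2s}$ fraction (in a suitable averaged sense) of the derivatives $D_h f \in \P_{d-1}(\F^n)$ are themselves biased, with bias bounded below by roughly $|\F|^{-2s}$ after a Markov-type argument. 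I would first isolate a set $H \subseteq \F^n$ of directions $h$ for which $|\E_x[e((D_h f)(x))]|$ is at least $\tfrac12 |\F|^{-2s}$; this $H$ has density at least $\tfrac12 |\F|^{-2s}$.

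**Building the approximation.** The key point is that $f$ can be reconstructed (approximately) from a few of its derivatives because derivatives along a random direction determine $f$ up to a lower-degree error. Concretely, for any $h$, $f(x) = f(x+h) - (D_h f)(x)$; iterating, if $y_1, \dots, y_{d+1}$ are such that the iterated derivative $D_{y_1,\dots,y_{d+1}} f = 0$ (which holds for all choices since $\deg f \le d$), one can express $f(x)$ as a signed sum of values of $f$ at various shifts, and substituting $f(x+\sum_i \epsilon_i y_i)$ back using the derivatives peels off one degree. The cleaner route — and the one that matches the stated rank $c = \binom{d+t+2s+3}{d}$ — is an iterative "energy increment" / "rank reduction" argument: as long as $f$ is biased, pick a biased derivative $g = D_h f$ and condition on its value; either this already reduces the problem, or one shows that $f$ restricted to a level set of $g$ is still biased with comparable bias, allowing us to pull out more derivatives. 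One keeps adding derivatives $g_1 = D_{h_1} f, \dots, g_c = D_{h_c} f$ until the collection is "rich enough" that $f$ is (approximately) a function of $(g_1, \dots, g_c)$; the function $\Gamma: \F^c \to \F$ is then simply defined by majority vote, $\Gamma(z) := \arg\max_a \Pr_x[f(x) = a \mid (g_1(x),\dots,g_c(x)) = z]$. The approximation error $|\F|^{-t}$ is controlled by ensuring the conditional distribution of $f$ given the $g_i$'s has collision probability close to $1$, which is where the quantitative dependence of $c$ on $t$, $s$, and $d$ (via the binomial coefficient, i.e.\ the dimension of $\P_{\le d+t+2s+3}$ restricted appropriately) enters.

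**Why $c$ is field-independent.** The crucial feature, and the place where this proof must differ from Bogdanov--Viola, is that the number of derivatives $c$ must not depend on $|\F|$. The naive energy-increment bound would give $c$ growing with $1/(\text{bias})$ raised to a power depending on how slowly the potential increases, and the bias itself is $|\F|^{-s}$, so one would get $|\F|$-dependence. To avoid this, I would run the increment argument on the *$\F$-linear span of the derivatives* or on an appropriate rank/degree potential that takes only finitely many values depending on $d, s, t$ alone — for instance, tracking the dimension of a space of polynomials of degree $\le d + t + 2s + 3$ (hence the binomial coefficient), which is a bound depending only on $d, t, s$. Each step of the iteration either certifies that $f$ is already a low-rank composition of the current derivatives up to error $|\F|^{-t}$, or strictly increases this bounded-dimensional potential; since the potential is bounded by $\binom{d+t+2s+3}{d}$, the process terminates within that many steps. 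This is exactly the "variant of the [Bogdanov--Viola] lemma where the bound is independent of the field size" promised in the proof overview.

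**Main obstacle.** The hardest part is the field-independence: making the termination argument go through with a potential function whose range is bounded purely in terms of $d, s, t$, while simultaneously guaranteeing that when the process halts, $f$ genuinely agrees with the majority-vote composition $\Gamma(g_1, \dots, g_c)$ on all but a $|\F|^{-t}$ fraction of inputs. One must carefully track how the bias degrades through conditioning on level sets of the already-chosen $g_i$'s — a priori conditioning on $c$ functions into $\F$ partitions $\F^n$ into up to $|\F|^c$ parts, and one needs that on a dominant fraction of parts $f$ is nearly constant, using only that the overall bias survived. The bookkeeping that converts "many biased derivatives" into "bounded-dimensional potential increase" without any $|\F|$ factor creeping in is the delicate technical core; everything else (the derivative identities, Cauchy--Schwarz, the majority-vote definition of $\Gamma$) is routine.
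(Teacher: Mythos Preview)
Your proposal takes a fundamentally different route from the paper, and as written it has a real gap.

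\textbf{What the paper actually does.} There is no energy increment and no majority vote. The argument is a single-shot sampling estimate. The starting point is the pointwise identity
\[
\mu \cdot e(-f(x)) \;=\; \E_{y}\bigl[e(D_y f(x))\bigr],\qquad \mu:=\E_x[e(f(x))],
\]
valid for every $x$. Now pick $z_1,\dots,z_k\in\F^n$ uniformly and set $W_a(z):=e(D_{a\cdot z}f(x))$ for $a\in\F^k\setminus\{0\}$. For fixed $x$ these $|\F|^k-1$ random variables are pairwise independent with common mean $\mu\, e(-f(x))$, so Chebyshev gives, with $k=t+2s+3$,
\[
\Pr_z\Bigl[\,\Bigl|\tfrac{1}{|\F|^k-1}\sum_{a\ne 0}W_a(z)-\mu\,e(-f(x))\Bigr|\ge \tfrac{1}{2}|\F|^{-s-1}\Bigr]\le |\F|^{-t}.
\]
Since $|\mu|\ge|\F|^{-s}$ and $|e(\ell)-e(m)|\ge|\F|^{-1}$ for $\ell\ne m$, whenever the empirical average is this close one can \emph{round} and read off $f(x)$ exactly; this defines $\Gamma$. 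Averaging over $x$ and fixing a good $z$ gives an approximation with error $\le|\F|^{-t}$ by a function of $\{D_{a\cdot z}f:a\ne 0\}$, which is $|\F|^k-1$ derivatives---too many. The final step is purely algebraic: because $\deg f\le d$, any iterated derivative of order $>d$ vanishes, and an inclusion--exclusion shows that every $D_{a\cdot z}f$ lies in the $\F$-span of $\{D_{b\cdot z}f:\sum_j|b_j|\le d\}$. That set has size at most $\binom{d+k}{d}=\binom{d+t+2s+3}{d}$, which is the stated $c$.

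\textbf{Where your plan breaks.} Your energy-increment outline never specifies the potential function, and the candidates you gesture at do not work. ``The dimension of a space of polynomials of degree $\le d+t+2s+3$'' is either $n$-dependent (if it is $\P_{d+t+2s+3}(\F^n)$) or undefined (you have not said which fixed number of variables you mean). Tracking the $\F$-linear span of the chosen $g_i$'s is circular: its dimension is bounded by the number of $g_i$'s you have already picked, so it cannot serve as a termination bound. More seriously, each increment step in a Bogdanov--Viola style argument gains energy proportional to the bias squared, which is $|\F|^{-2s}$; converting this into a field-independent step count requires an idea you have not supplied. The paper sidesteps all of this: the field-independent count arises not from an iteration but from the combinatorial fact that low-$\ell_1$-weight vectors in $\F^k$ span all derivative directions of a degree-$d$ polynomial.
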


We prove lemma~\ref{lem:blr} in this section. So, fix $f \in \P_d(\F^n)$ and let $\mu=\E_{x \in \F^n} \l[e(f(x))\r]$, where we assume $|\mu| \ge |\F|^{-s}$. We begin with the following simple claim.
\begin{claim}\label{claim:der}
For all $x \in \F^n$,
$$
\mu \cdot e(-f(x))=\E_{y \in \F^n} \l[e(D_yf(x))\r].
$$
\end{claim}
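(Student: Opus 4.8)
The plan is to unwind the definition of the directional derivative and exploit translation invariance of the uniform distribution on $\F^n$. Recall that by definition $D_yf(x) = f(x+y) - f(x)$, so that
$$
e(D_yf(x)) = e\bigl(f(x+y) - f(x)\bigr) = e(f(x+y)) \cdot e(-f(x)),
$$
using that $e$ is an additive character. The term $e(-f(x))$ does not depend on $y$, so I would pull it out of the expectation over $y$.

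The second step is to observe that for any fixed $x \in \F^n$, as $y$ ranges uniformly over $\F^n$ the shifted point $x+y$ also ranges uniformly over $\F^n$. Hence
$$
\E_{y \in \F^n}\bigl[e(f(x+y))\bigr] = \E_{z \in \F^n}\bigl[e(f(z))\bigr] = \mu.
$$
Combining the two steps gives $\E_{y \in \F^n}[e(D_yf(x))] = e(-f(x)) \cdot \mu$, which is exactly the claimed identity.

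There is essentially no obstacle here: the statement is a one-line computation, and the only thing to be a little careful about is that the expectation is over $y$ (not $x$), so $x$ is held fixed throughout and the character identity $e(a+b)=e(a)e(b)$ together with the change of variables $z = x+y$ does all the work. I would present it in two displayed equations as above.
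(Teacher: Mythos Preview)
Your proposal is correct and is essentially identical to the paper's own proof: both expand $e(D_y f(x))=e(f(x+y))e(-f(x))$, pull out the $y$-independent factor, and use the translation $y \mapsto x+y$ to identify the remaining average with $\mu$.
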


\begin{proof}
$\E_{y \in \F^n} \l[e(D_yf(x))\r]=\E_{y \in \F^n} \l[e(f(x+y))e(-f(x))\r]=\E_{y \in \F^n} \l[e(f(y))\r] \cdot e(-f(x))=\mu \cdot e(-f(x))$.
\end{proof}

Fix $x \in \F^n$. Pick $z=(z_1, \ldots z_k) \in (\F^n)^k$ uniformly for some $k$ to be specified later. For $a \in \F^k, z \in (\F^n)^k$, we shorthand $a \cdot z = \sum_{i=1}^k a_iz_i \in \F^n$. For $a \in \F^k\setminus \{0\}$, let $W_a(z)$ be the random variable (over the choice of $z$) defined as $$W_a(z):=e(D_{a \cdot z}f(x)).$$
For $a \neq 0$, we have $$\E_z[W_a(z)]=\E_y \l[e(D_y f(x))\r].$$ Also, observe that for distinct $\ell,m \in \F$, $$|e(\ell)-e(m)| \geq |\F|^{-1}.$$
We have the following.
\begin{claim}\label{claim:blrchebyshev}
If for $z \in (\F^n)^k$ it holds that
$$
\l|\frac{1}{|\F|^k-1}\sum_{a \neq 0}W_a(z)-\E_y \l[e(D_yf(x))\r]\r| \leq \frac{1}{2|\F|^{s+1}},
$$
then
$$
f(x)=\Gamma(D_{a\cdot z}f(x):a \in \F^k \setminus \{0\})
$$
where $\Gamma:\F^{|\F|^k-1}\rightarrow \F$ is some explicit function.
\end{claim}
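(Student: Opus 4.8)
The plan is to recover $f(x)$ from the values $\{D_{a\cdot z}f(x): a\in\F^k\setminus\{0\}\}$ by exploiting Claim~\ref{claim:der}: the quantity $\mu\cdot e(-f(x))$ equals $\E_y[e(D_yf(x))]$, so if I can estimate this expectation well enough from the derivative values I have access to, I can read off $e(-f(x))$, hence $f(x)$, since the $|\F|$ possible values of $e(-f(x))$ are $|\F|^{-1}$-separated. First I would observe that, under the hypothesis of Claim~\ref{claim:blrchebyshev}, the average $\frac{1}{|\F|^k-1}\sum_{a\neq 0}W_a(z)$ is within $\frac{1}{2|\F|^{s+1}}$ of $\E_y[e(D_yf(x))] = \mu\cdot e(-f(x))$. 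So I define $\Gamma$ as the function that, given the tuple $(W_a)_{a\neq 0} = (e(D_{a\cdot z}f(x)))_{a\neq 0}$, computes $V := \frac{1}{|\F|^k-1}\sum_{a\neq 0}W_a$, then outputs the unique $b\in\F$ for which $|V - \mu\cdot e(-b)| \le \frac{1}{2|\F|^{s+1}}$ (and outputs, say, $0$ if no such $b$ exists). Note $\Gamma$ depends on the fixed global constant $\mu$ but not on $x$ or $z$, so it is a well-defined explicit function $\F^{|\F|^k-1}\to\F$.

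The remaining step is to check that this $\Gamma$ returns the correct value $f(x)$ whenever the hypothesis holds. I would argue: by Claim~\ref{claim:der} and the hypothesis, $b = f(x)$ satisfies $|V - \mu\cdot e(-f(x))| \le \frac{1}{2|\F|^{s+1}}$, so at least one valid $b$ exists. For uniqueness, suppose $b,b'$ both satisfy the inequality; then by the triangle inequality $|\mu|\cdot|e(-b)-e(-b')| \le \frac{1}{|\F|^{s+1}}$. Since $|\mu|\ge|\F|^{-s}$, this forces $|e(-b)-e(-b')| \le \frac{1}{|\F|} \cdot \frac{1}{|\F|^{s}}\cdot|\F|^{s} $— more precisely $|e(-b)-e(-b')|\le |\F|^{-s-1}/|\F|^{-s} = |\F|^{-1}$, and since distinct characters values are strictly more than\footnote{Here I use the stated bound $|e(\ell)-e(m)|\ge|\F|^{-1}$ for distinct $\ell,m$; one should verify the inequality is strict, or absorb the factor $\frac12$ to get a strict gap. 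Replacing $\frac{1}{2|\F|^{s+1}}$ by this bound makes the separation strict since $|e(-b)-e(-b')|\ge 2\sin(\pi/p) > 2/p > 1/|\F|$ is not quite it; cleaner is: $|\mu||e(-b)-e(-b')| \le 2\cdot\frac{1}{2|\F|^{s+1}} = |\F|^{-s-1}$ gives $|e(-b)-e(-b')|\le|\F|^{-1}$, and combined with the reverse bound this forces $b=b'$ as long as the character-separation bound is strict.} $|\F|^{-1}$ apart when $b\neq b'$, we conclude $b=b'$. Hence $\Gamma$ outputs exactly $f(x)$.

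The only genuinely delicate point — and the one I would be careful about — is this boundary case in the separation argument: I want the "close to $\mu e(-b)$" predicate to isolate a \emph{unique} $b$, which needs the perturbation radius $\frac{1}{2|\F|^{s+1}}$ to be strictly less than half the minimal gap between the points $\{\mu e(-b):b\in\F\}$, i.e. strictly less than $\frac{|\mu|}{2}|e(0)-e(1)| \ge \frac{1}{2}|\F|^{-s}\cdot|\F|^{-1} = \frac{1}{2|\F|^{s+1}}$. This is borderline (equality, not strict inequality), so the clean fix is to note that the minimal gap among the $e(\cdot)$ values is actually $2\sin(\pi/p) > \frac{1}{p}$ strictly when $p\ge 2$, giving strict separation and hence uniqueness; alternatively one shrinks the constant from $\frac12$ to $\frac13$ at no cost elsewhere. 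Everything else — defining $\Gamma$, invoking Claim~\ref{claim:der}, counting the input length as $|\F|^k-1$ — is bookkeeping.
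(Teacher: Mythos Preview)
Your proof is correct and follows essentially the same approach as the paper: the paper defines $\Gamma(y_1,\ldots,y_{|\F|^k-1})=\arg\min_{\ell\in\F}\bigl|\frac{1}{|\F|^k-1}\sum_i e(y_i)-\mu\,e(-\ell)\bigr|$ and invokes the same separation argument (using $|e(\ell)-e(m)|\ge|\F|^{-1}$ and $|\mu|\ge|\F|^{-s}$) to conclude the minimizer is $f(x)$. Your version, outputting the unique $b$ within the radius rather than the $\arg\min$, is a cosmetic variant, and you are in fact more careful than the paper about the borderline-equality issue.
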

\begin{proof}Since $|e(\ell)-e(m)| \geq |\F|^{-1}$ for $\ell \ne m$ and $|\mu|\geq |\F|^{-s}$, if we define $$\Gamma(y_1,\ldots y_{|\F|^k-1})=\arg \min_{\ell \in \F} \l|\frac{1}{|\F|^k-1}\sum_{i=1}^{|\F|^k-1}e(y_i)-e(-\ell)\mu\r|,$$
then by the assumption of the claim,
$$\Gamma(D_{a\cdot z}f(x):a \in \F^k \setminus \{0\})=f(x).$$
\end{proof}

Since the random variables $\{W_a(z): a \in \F^k \setminus \{0\}\}$ are pairwise independent, we have by Chebychev's inequality that
if we choose $k=t+2s+3$ then
\begin{equation}\label{eq:chebyshev1}
\Pr_{z \in (\F^n)^k}\l[\l|\frac{1}{|\F|^k-1}\sum_{a \neq 0}W_a(z)-\E_y \l[e(D_yf(x))\r]\r|\geq \frac{1}{2|\F|^{s+1}}\r]\leq \frac{4|\F|^{2s+2}}{|\F|^k-1} \leq \frac{1}{|\F|^t}.
\end{equation}
Thus, for all $x \in \F^n$, \
$$
\Pr_{z \in (\F^n)^k}[\Gamma(D_{a \cdot z}f(x):a \in \F^k \setminus \{0\})=f(x)]\geq 1-|\F|^{-t}.
$$
Therefore, by an averaging argument there exists $z \in (\F^n)^k$ for which
\begin{equation}
\Pr_{x \in \F^n}[\Gamma(D_{a \cdot z}f(x):a \in \F^k \setminus \{0\})=f(x)]\geq 1-|\F|^{-t}.
\end{equation}

%
%
%

We now prove our final claim, which shows that we only need a constant number of derivatives in order to approximate $f$ (instead of a number which is polynomial in $|\F|$).

\begin{claim}
Let $\calB=\{b \in \F^k: \sum_{j=1}^k |b_j|\leq d\}$. Then for any $a \in \F^k$,
$$
D_{a \cdot z} f(x) = \sum_{b \in \calB} \lambda_{a,b} D_{b \cdot z} f(x)
$$
for some $\lambda_{a,b} \in \F$.
\end{claim}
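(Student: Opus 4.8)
The plan is to reduce an arbitrary iterated derivative $D_{a\cdot z}f(x)$, where $a\in\F^k$ is arbitrary, to a fixed $\F$-linear combination of the derivatives $D_{b\cdot z}f(x)$ over the ``small'' index set $\calB=\{b\in\F^k:\sum_j|b_j|\le d\}$. The key structural fact is that $f$ has degree at most $d$, so any $m$-fold iterated derivative of $f$ vanishes identically once $m>d$; more generally, $D_{y_1,\dots,y_m}f$ is multilinear and symmetric in the shift directions $y_1,\dots,y_m$, and is identically zero when $m>d$. I will exploit this to expand $D_{a\cdot z}f(x)$ in terms of the $z_i$'s.

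First I would write $a\cdot z=\sum_{i=1}^k a_i z_i$ and use the fact that the derivative operator is additive in the direction up to lower-order iterated-derivative corrections: more precisely, there is a clean identity expressing $D_{\sum_i a_i z_i}f$ as a sum, over multi-indices $e\in\N^k$ with $1\le\sum_i e_i\le d$, of terms of the form $c_{a,e}\, D^{(e)}f$, where $D^{(e)}f$ denotes the iterated derivative taking $e_i$ copies of the direction $z_i$ (and the coefficients $c_{a,e}$ are fixed scalars depending only on $a$ and on the field, e.g. arising from a finite-difference/binomial expansion; terms with $\sum_i e_i>d$ drop out because $f$ has degree $d$). The natural way to get this is to iterate the two basic rules $D_{y+y'}g = D_y g + D_{y'}(g\circ(\cdot+y)) = D_yg+D_{y'}g+D_{y,y'}g$ and $D_{cy}g$ in terms of $D_y$-iterates (for $c\in\F$, using that $D_{cy}g(x)=g(x+cy)-g(x)$ telescopes, or simply that over $\F_p$ one may write $cy$ as a sum of $|c|$ copies of $y$ and apply the additive rule), and then truncate all iterated derivatives of order exceeding $d$ to zero. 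This shows $D_{a\cdot z}f(x)$ is an $\F$-linear combination of the quantities $D^{(e)}f(x)$ with $\sum_i e_i\le d$.

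Next I would observe that each such $D^{(e)}f(x)$, with $e\in\N^k$ and $\sum_i e_i=:m\le d$, is exactly $D_{b\cdot z}f(x)$ — up to, again, a fixed scalar and lower-order iterated-derivative terms of the same shape — for the particular $b\in\F^k$ with $b_i=e_i$ (viewed in $\F$ via the natural map), since $\sum_i e_i\le d$ forces $b\in\calB$; here I run the expansion of the previous paragraph in reverse, or more simply note both $D_{a\cdot z}f(x)$ and $D_{b\cdot z}f(x)$ for $b\in\calB$ lie in the $\F$-span of the same finite family $\{D^{(e)}f(x):\sum_i e_i\le d\}$, and that this family is itself contained in the $\F$-span of $\{D_{b\cdot z}f(x):b\in\calB\}$ by a triangular/unfolding argument. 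Combining the two directions gives $D_{a\cdot z}f(x)=\sum_{b\in\calB}\lambda_{a,b}D_{b\cdot z}f(x)$ with $\lambda_{a,b}\in\F$ depending only on $a,b,d,\F$ and not on $x$ or $z$, which is exactly the claim. Note $|\calB|\le\binom{d+k}{d}=\binom{d+t+2s+3}{d}$, matching the value of $c$ in Lemma~\ref{lem:blr}.

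The main obstacle I anticipate is bookkeeping the expansion cleanly over a field of positive characteristic $p$ when $d$ can be as large as (but here is smaller than) $p$: one must be careful that the ``binomial coefficient'' scalars appearing when one writes $D_{\sum a_iz_i}$ in terms of iterated $z_i$-derivatives are the right elements of $\F$, and that no division by $p$ is secretly required — which is fine precisely because only iterated derivatives of order $\le d<|\F|=p$ survive, so all the combinatorial coefficients involved are integers below $p$ and hence invertible or at least well-defined in $\F$. A convenient way to sidestep the messiest part is to prove the statement first for $a$ a standard basis vector scaled by a constant, then for $a$ supported on two coordinates, and induct on the support size, at each stage absorbing the new cross-derivative terms into the span of the $D_{b\cdot z}f$ with $b\in\calB$; the degree bound guarantees the induction terminates with only $\calB$-indexed terms remaining.
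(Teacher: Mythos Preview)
Your proposal is correct and reaches the claim, but the route differs from the paper's. The paper argues by a direct induction on $|a|=\sum_j|a_j|$: when $|a|>d$, it sets $m=|a|$, writes out the vanishing $m$-fold derivative identity $\sum_{c\in\{0,1\}^m}(-1)^{|c|}f(x+c\cdot y)=0$, specializes $y_1,\dots,y_m$ to $z_1$ repeated $|a_1|$ times through $z_k$ repeated $|a_k|$ times, and obtains a single linear relation among $\{D_{a'\cdot z}f(x):a'\preceq a\}$ whose top coefficient (at $a'=a$) is $\pm 1$; solving for $D_{a\cdot z}f$ and inducting downward finishes immediately. Your argument instead passes through the auxiliary family $\{D^{(e)}f:\sum_i e_i\le d\}$ of iterated derivatives: first you expand $D_{a\cdot z}f$ into that family via the inclusion--exclusion identity $D_{y_1+\cdots+y_m}g=\sum_{\emptyset\ne S\subseteq[m]}D_{y_S}g$ (applied with $m=|a|$ copies of the $z_i$'s) and truncate above order $d$, and then you invert the upper-triangular change of basis between $\{D^{(e)}f:\sum e_i\le d\}$ and $\{D_{b\cdot z}f:b\in\calB\}$. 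Both are valid and rest on exactly the same ingredient (vanishing of order-$(>d)$ derivatives); the paper's version is shorter and avoids introducing the intermediate family, while yours makes explicit the structural point that the two families span the same $\F$-space. Your concern about characteristic is harmless here: all coefficients that appear are integers (products of binomials $\prod_i\binom{|a_i|}{e_i}$) and the triangular system you need to invert has $1$'s on the diagonal, so no division in $\F$ is ever required.
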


\begin{proof}
Let $|a|=\sum_{i=1}^k |a_i|$. We prove the claim by induction on $|a|$. If $|a| \le d$ the claim is straightforward, as assume $|a|>d$. As $f$ is a degree $d$ polynomial, we have for any $m>d$ and $y_1,\ldots,y_m \in \F^n$ that
$$
D_{y_1} \ldots D_{y_m} f \equiv 0.
$$
This translates to
$$
\sum_{c \in \{0,1\}^m} (-1)^{\sum c_i} f\l(x + \sum c_i y_i\r)=0.
$$
As the sum of the coefficients is zero, this implies that
$$
\sum_{c \in \{0,1\}^m} (-1)^{\sum c_i} D_{c \cdot y} f(x)=0.
$$
Apply this for $m=|a|$ and $y_1,\ldots,y_m$ set to $z_1$ repeated $a_1$ times, $z_2$ repeated $a_2$ times, up to $z_k$ repeated $a_k$ times. Then we obtain that
$$
\sum_{a' \le a} (-1)^{|a'|} D_{a' \cdot z} f(x)=0,
$$
where the sum is over all $a' \in \F^k$ such that $|a'_i| \le |a_i|$ for all $1 \le i \le k$. We conclude that $D_{a \cdot z}f (x)$ is a linear combination of $D_{a' \cdot z} f (x)$ for $a' \in \F^k$ with $|a'|<|a|$,
and apply the induction claim.
\end{proof}

This concludes the proof of Lemma~\ref{lem:blr}. We can approximate $f(x)$ correctly on $1-|\F|^{-t}$ fraction of the coordiantes, by a function of $|\B|\leq \binom{d+k}{d}$ polynomials of
lower degree, where $k=t+2s+3$.

\section{Bias implies low rank exact computation}\label{sec:equal}

The main theorem we prove is the following.

\noindent \textbf{Theorem~\ref{thm:blr}.} Let $d,s\in \N$. Let $f \in \P_d(\F^n)$. Suppose that $|\E_{x \in \F^n}[e(f(x))]| \geq |\F|^{-s}$. Then, there exist $g_1,\ldots g_c \in \P_{d-1}(\F^n)$, $c=c^{(\ref{thm:blr})}(d,s)$, and $\Gamma:\F^c \rightarrow \F$, such that $f(x)=\Gamma(g_1(x),\ldots g_c(x))$.
Moreover, each $g_i$ is obtained as a derivative of $f$, $g_i = D_{h_i} f$ for some $h_{i} \in \F^n$.

The proof is by induction on the degree $d$. The reader can verify that all the polynomials obtained throughout the proof are derivatives or iterated derivatives of $f$. Moreover,
iterated derivatives can be decomposed at the end back to few single derivatives, by applying the following identity iteratively:
$$
D_{h_1,h_2} f = D_{h_1+h_2} f - D_{h_1} f - D_{h_2} f
$$

But first, we define the notion of regularity followed by some important consequences of Theorem~\ref{thm:blr} which are required in the inductive proof of the same and might be of independent interest.

\subsection{Basic definitions}

\begin{define}[Rank] Let $d \in \N$ and $f:\F^n \rightarrow \F$. Then $\rank_d(f)$ is defined as the smallest integer $r$ such that there exist polynomials $h_1,\ldots , h_r:\F^n \rightarrow \F$ of degree $\leq d-1$ and a function $\Gamma:\F^r \rightarrow \F$ such that $f(x)=\Gamma(h_1(x),\ldots , h_r(x))$. If $d=1$, then the rank is $0$ if $f$ is a constant function and is
$\infty$ otherwise. If $f$ is a polynomial, then $\rank(f)=\rank_d(f)$ where $d=\deg(f)$.
\end{define}

\begin{define}[Factor] Let $X$ be a finite set. Then a factor $\B$ is a partition of the set $X$. The subsets in the partition are called atoms.
\end{define}

For finite sets $X$ and $Y$, recall that $\Delta(Y)$ is the probability simplex over $Y$, and that we embed $Y \subset \Delta(Y)$ and embed functions $f:X \to Y$ as functions $f:X \to \Delta(Y)$ in the obvious way.
For a factor $\B$ of $X$, a function $f:X\rightarrow \Delta(Y)$ is said to be measurable with respect to $\B$ if it is constant on the atoms of $\B$.
The average of $f$ over $\B$ is $\E[f|\B]:X \to \Delta(Y)$ defined as
$$
\E[f|\B](x)=\E_{y \in \B(x)}[f(y)]
$$
where $\B(x)$ is the atom containing $x$. Clearly, $\E[f|\B]$ is measurable with respect to $\B$.

A collection of functions $h_1,\ldots , h_c:X \rightarrow Y$ defines a factor $\B$ whose atoms are $\{x \in X:h_1(x)=y_1,\ldots,h_c(x)=y_c\}$ for every $(y_1,\ldots , y_c) \in Y^c$. We use $\B$ to also denote the map $x \mapsto \l(h_1(x),\ldots , h_c(x)\r)$. A function $f$ is measurable with respect to a collection of functions if it is measurable with respect to the factor the collection defines.

\begin{define}[Polynomial Factor] A polynomial factor $\B$ is a factor defined by a collection of polynomials $\calH=\{h_1,\ldots , h_c:\F^n \rightarrow \F\}$ and the factor is written as $\B_{\calH}$. The degree of the factor is the maximum degree of $h \in \calH$. With a slight abuse of notation, we would typically identify $\calH$ and $\B_{\calH}$.
\end{define}

Let $|\B|$ be the number of polynomials defining the factor. We define $||\B||:=|\F|^c$ to be the number of (possibly empty) atoms.

\begin{define}[Rank and Regularity of Polynomial Factor]Let $\B$ be a polynomial factor defined by $h_1,\ldots , h_c:\F^n \rightarrow \F$. Then, the rank of $\B$ is the least integer $r$ such that there exists $(a_1,\ldots , a_c) \in \F^c$, $\l(a_1, \ldots , a_c\r) \neq (0,\ldots , 0)$ for which the linear combination $h(x):=\sum_{i=1}^c  a_i h_i(x)$ has $\rank_d(h) \leq r$ where $d=\max_{i}\deg(a_i h_i)$. For a non decreasing function $r:\N \rightarrow \N$, a factor $\B$ is $r$-regular if its rank is at least $r(|\B|)$.
\end{define}

\begin{define}[Semantic and Syntactic refinement]Let $\B$ and $\B'$ be polynomial factors on $\F^n$. A factor $\B'$ is a syntactic refinement of $\B$, denoted by $\B' \succeq_{syn}\B$ if the set of polynomials defining $\B$ is a subset of the set of polynomials defining $\B'$. It is a semantic refinement, denoted by $\B' \succeq_{sem}\B$ if for every $x,y \in \F^n$, $\B'(x)=\B'(y)$ implies $\B(x)=\B(y)$.
\end{define}

\begin{lemma}[Polynomial Regularity Lemma]\label{lem:reg}  Let $r:\N \rightarrow \N$ be a non-decreasing function and $d \in \N$. Then there is a function $C_{r,d}^{(\ref{lem:reg})}:\N \rightarrow \N$ such that the following is true. Let $\B$ be a factor defined by polynomials $h_1,\dots , h_c:\F^n \rightarrow \F$ of degree at most $d$. Then, there is an $r$-regular factor $\B'$ defined by polynomials $h'_1,\ldots , h'_{c'}:\F^n \rightarrow \F$ of degree at most $d$ such that $\B' \succeq_{sem} \B$ and $c' \leq C_{r,d}^{(\ref{lem:reg})}(c)$.

Moreover if $\B \succeq_{syn} \hat{\B}$ for some polynomial factor $\hat{\B}$ that has rank at least $r(c')+c'+1$, then $\B' \succeq_{syn} \hat{B}$.
\end{lemma}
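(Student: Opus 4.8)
The plan is to prove the Polynomial Regularity Lemma by an iterative refinement procedure, exactly in the spirit of the standard regularity-lemma arguments but carefully tracking degrees. I would set up a ``rank-increment'' loop: starting from $\B_0 = \B$, if the current factor $\B_i$ (defined by polynomials of degree at most $d$) is already $r$-regular, we are done and take $\B' = \B_i$. Otherwise, by definition of rank there is a nonzero linear combination $h = \sum_j a_j h_j$ with $\rank_d(h) \le r(|\B_i|)$, meaning $h = \Gamma(g_1, \dots, g_m)$ for some $m \le r(|\B_i|)$ polynomials $g_\ell$ of degree $\le d-1$. I would then form $\B_{i+1}$ by adjoining $g_1, \dots, g_m$ to the defining polynomials of $\B_i$. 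This is a syntactic refinement of $\B_i$, hence (transitively) a semantic refinement of $\B$, and all new polynomials still have degree $\le d$ (in fact $\le d-1$). The key point is that adjoining the $g_\ell$ ``explains'' the low-rank linear combination and forces a genuine progress measure to strictly increase.

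The main obstacle — and the heart of the argument — is identifying the right monotone potential function that is bounded and strictly increases at each step, so the loop terminates after a bounded number of iterations. The natural choice is to track, for each degree level $e \le d$, the number of polynomials of degree exactly $e$ in the factor, and order these counts lexicographically from the top degree down. When we adjoin $g_1,\dots,g_m$ of degree $\le d-1$ to eliminate a dependency among the top-degree part, the count at the top degree does not increase but the dependency is destroyed; more carefully, one wants to argue (as in the Green--Tao / Bhowmick--Lovett treatments) that after finitely many such steps at a given degree level no low-rank linear combination remains at that level, because each step either reduces the dimension of the top-degree span or pushes ``mass'' strictly downward in the lexicographic order. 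Bounding the number of steps then yields the bound $c' \le C_{r,d}^{(\ref{lem:reg})}(c)$, where $C_{r,d}^{(\ref{lem:reg})}$ is defined recursively: each step adds at most $r(\text{current size})$ polynomials, and the number of steps is controlled by the degree-$d$ dimension, giving an iterated/Ackermann-type but finite bound depending only on $r$, $d$, and $c$. I would be careful to note that ``$r$-regular'' requires rank $\ge r(|\B'|)$ with respect to the final size $|\B'|$, so the bookkeeping must be done with the final count; this is handled by the standard trick of choosing a sufficiently fast-growing auxiliary function or by noting the process is self-terminating since once the size stabilizes the regularity condition is met.

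For the ``moreover'' clause, suppose $\B \succeq_{syn} \hat\B$ where $\hat\B$ has rank at least $r(c') + c' + 1$. I want to show the refinement process never needs to touch (i.e., can be taken not to introduce derivative polynomials that interfere with) $\hat\B$, equivalently that $\B' \succeq_{syn} \hat\B$. The point is this: since $\B' \succeq_{syn} \B \succeq_{syn} \hat\B$ fails only if some polynomial of $\hat\B$ got ``lost'', but our construction only ever \emph{adds} polynomials to the defining set, so syntactic refinement is automatically preserved — $\B_{i+1} \succeq_{syn} \B_i \succeq_{syn} \cdots \succeq_{syn} \B_0 = \B \succeq_{syn} \hat\B$. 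The subtlety the hypothesis is guarding against is whether the process could terminate ``early'' in a way inconsistent with $\hat\B$; but the rank bound on $\hat\B$ ensures that any low-rank linear combination found at any stage (rank $\le r(c') \le r(|\B_i|)$ since $|\B_i| \le c'$) cannot involve only the $\hat\B$-polynomials in a way that would be blocked, and more to the point guarantees $\B'$ remains well-defined with $\hat\B$ as a syntactic subfactor. So I would simply observe that each adjoining step preserves $\succeq_{syn}$, hence so does the composition, and invoke the rank hypothesis on $\hat\B$ only to confirm that the final $\B'$ has the stated size bound $c'$ consistent with $r(c') + c' + 1$ being a valid lower bound — i.e., that no contradiction arises between $\hat\B$'s high rank and $\B'$'s regularity. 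The expected difficulty here is purely bookkeeping: making sure the quantifier ``$r(c')$'' refers to the right (final) value of $c'$ throughout.

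I expect the main technical obstacle to be the termination argument with the correct lexicographic potential and the recursive definition of $C_{r,d}^{(\ref{lem:reg})}$, since one must simultaneously handle all degree levels $1, \dots, d$ and ensure that progress at level $e$ does not undo progress already made at levels $> e$; this is where I would invoke the induction on $d$ implicit in the paper's structure, treating the degree-$d$ part first (bounding how many times we can reduce its rank) and then recursing on the degree-$(d-1)$ polynomials that accumulate.
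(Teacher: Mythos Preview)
Your overall architecture --- iterate, find a low-rank linear combination, refine, track a lexicographic potential --- is the standard one the paper defers to (Green--Tao, Lemma~2.3). But there is a genuine gap in your refinement step that breaks both termination and the ``moreover'' clause.

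You propose to form $\B_{i+1}$ by \emph{adjoining} $g_1,\dots,g_m$ to $\B_i$. That does not destroy the low-rank dependency: the same linear combination $\sum_j a_j h_j$ is still present in $\B_{i+1}$ and still has rank $\le r(|\B_i|) \le r(|\B_{i+1}|)$, so the next iteration finds it again and you loop forever. The correct step is to \emph{remove} one polynomial $h_{j_0}$ with $a_{j_0}\ne 0$ and $\deg(h_{j_0}) = d' := \max_j \deg(a_j h_j)$, and replace it by $g_1,\dots,g_m$ (all of degree $\le d'-1$). Semantic refinement survives because $h_{j_0}$ is now a function of the remaining polynomials together with the $g_\ell$'s. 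The lexicographic potential then genuinely drops: the count at degree $d'$ decreases by one, and only lower-degree counts increase.

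This correction also explains why the ``moreover'' clause is nontrivial and why the hypothesis $\rank(\hat\B)\ge r(c')+c'+1$ is exactly what is needed. Since polynomials are being removed, you must argue that you can always choose $h_{j_0}\notin\hat\B$. Suppose every top-degree polynomial with nonzero coefficient in the offending combination lies in $\hat\B$. Then the lower-degree part of $\sum_j a_j h_j$ is a sum of at most $|\B_i|\le c'$ polynomials of degree $<d'$, so the top-degree part --- a nonzero linear combination of $\hat\B$-polynomials --- has $\rank_{d'}$ at most $r(|\B_i|)+|\B_i|\le r(c')+c'$, contradicting $\rank(\hat\B)\ge r(c')+c'+1$. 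Your claim that syntactic refinement is ``automatically preserved'' because you only add is precisely the point that fails once the step is done correctly.
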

The proof of Lemma~\ref{lem:reg} is exactly along the lines of existing proofs in the literature, for example Lemma 2.3 in~\cite{GT09}, so we do not repeat it here.

For $(w_1,\ldots ,w_k), (w_1' , \ldots ,w_k') \in \F^k$, we write $(w_1,\ldots ,w_k) \prec (w_1', \ldots ,w_k')$ if $|w_i| \leq |w_i'|$ for all $i \in [k]$, where $|\cdot|$ is the canonical map from $\F$ to $\{0,1,\ldots ,p-1\}$.

\begin{define}[Affine system]\label{defn:affine}
An affine system is a set of linear forms $\{L_1,\ldots ,L_m\}$, where each $L_i:\F^k \to \F$ is defined by $L_i(x)=\sum_{j=1}^k  w_{i,j} x_j$, which satisfies the following:
\begin{itemize}
\item $w_{i,1}=1$ for all $i \in [m]$.
\item If $L'(x)=\sum_{j=1}^k  w'_j x_j$, where $w'_1 = 1$ and $w' \prec w_i$ for some $i \in [m]$, then $w'=w_j$ for some $j \in [m]$.
\end{itemize}
\end{define}

\subsection{Inverse Gowers norm for polynomial phases}
\begin{thm}\label{thm:invGow}
Suppose Theorem~\ref{thm:blr} is true up to order $d$. Let $d,s\in \N$. Let $f \in \P_d(\F^n)$. Suppose $||e(f)||_{U^d} \geq |\F|^{-s}$. Then, $\rank(f)\leq c^{(\ref{thm:invGow})}(d,s)$.
\end{thm}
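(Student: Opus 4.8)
The plan is to deduce this from a single application of Theorem~\ref{thm:blr} to the order-$d$ derivative of $f$, followed by a restriction to the diagonal. We may assume $\deg f = d$ exactly (if $\deg f < d$ then $D_{h_1,\ldots,h_d}f\equiv 0$, so $\rank_d(f)\le 1$ and there is nothing to prove). Since $\deg f \le d$, the iterated derivative $D_{h_1,\ldots,h_d}f(x)$ is independent of $x$; write $F(h_1,\ldots,h_d)$ for its value, a polynomial on $(\F^n)^d\cong \F^{dn}$ of total degree at most $d$. The identity $\|e(f)\|_{U^d}^{2^d} = \E_{x,h_1,\ldots,h_d}\bigl[e(D_{h_1,\ldots,h_d}f(x))\bigr] = \E_{h_1,\ldots,h_d}\bigl[e(\pm F(h_1,\ldots,h_d))\bigr]$ (the global phase sign $(-1)^d$ is immaterial as we only need the modulus), together with the hypothesis, gives $\bigl|\E[e(F)]\bigr|\ge |\F|^{-2^d s}$.

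Next I would invoke Theorem~\ref{thm:blr}, assumed up to order $d$, on the biased polynomial $F\in\P_d(\F^{dn})$ with bias parameter $2^d s$: this yields $g_1,\ldots,g_{c_0}\in\P_{d-1}(\F^{dn})$ with $c_0=c^{(\ref{thm:blr})}(d,2^d s)$ — crucially independent of $n$ — and a $\Gamma$ with $F=\Gamma(g_1,\ldots,g_{c_0})$. Restricting to the diagonal $h_1=\cdots=h_d=x$ and setting $\tilde g_j(x):=g_j(x,\ldots,x)$, which has degree at most $\deg g_j\le d-1$ since collapsing variables cannot raise total degree, I obtain $F(x,\ldots,x)=\Gamma(\tilde g_1(x),\ldots,\tilde g_{c_0}(x))$, hence $\rank_d\bigl(F(x,\ldots,x)\bigr)\le c_0$. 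Finally one identifies $F(x,\ldots,x)=D_{x,\ldots,x}f(0)=\pm\,d!\,f_d(x)$, where $f_d$ is the degree-$d$ homogeneous part of $f$; this follows from the combinatorial identity that $\sum_{j=0}^d(-1)^j\binom{d}{j}j^k$ equals $0$ for $k<d$ and $(-1)^d d!$ for $k=d$, and it uses that $|\F|>d$, so $d!$ is invertible in $\F$. Therefore $\rank_d(f_d)\le c_0$, and since $f-f_d$ has degree $\le d-1$ (so $\rank_d(f-f_d)\le 1$) and the rank of a sum is bounded by the union of the two lists of lower-degree polynomials, $\rank(f)=\rank_d(f)\le c_0+1$. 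This gives the theorem with $c^{(\ref{thm:invGow})}(d,s)=c^{(\ref{thm:blr})}(d,2^d s)+1$.

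The only steps requiring care are the last ones: checking that passing to the diagonal does not inflate the degrees of the $g_j$, and verifying that the combinatorial identity genuinely isolates $d!\,f_d(x)$ — the latter being precisely where the hypothesis $|\F|>d$ (characteristic exceeding $d$) enters, exactly as in the non-prime version Theorem~\ref{thm:blrnprime}. I do not anticipate a real obstacle beyond this bookkeeping; the argument is a clean repackaging of Theorem~\ref{thm:blr}, and it is exactly the field-size independence of $c_0$ in Theorem~\ref{thm:blr} that makes the rank bound here independent of $|\F|$. (When $|\F|\le d$ one instead appeals to the fixed-field result of~\cite{BL14}.)
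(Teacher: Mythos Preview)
Your proof is correct and follows essentially the same route as the paper: apply Theorem~\ref{thm:blr} to the $d$-fold derivative polynomial, then restrict to the diagonal and use Taylor's formula $f(x)=\frac{1}{d!}D_{x,\ldots,x}f(0)+h(x)$ with $h\in\P_{d-1}(\F^n)$ (valid since $d<|\F|$) to conclude $\rank(f)\le\rank(\text{derivative})+1$. The only cosmetic differences are that you work with $F\in\P_d(\F^{dn})$, having already discarded the redundant $x$-variable, whereas the paper applies Theorem~\ref{thm:blr} to $g(x,y_1,\ldots,y_d)=D_{y_1,\ldots,y_d}f(x)\in\P_d(\F^{n(d+1)})$; and your bias parameter $2^d s$ (yielding $c^{(\ref{thm:invGow})}(d,s)=c^{(\ref{thm:blr})}(d,2^d s)+1$) is the correct one --- the paper's written bound $c^{(\ref{thm:blr})}(d,s/2^d)$ appears to be a typo.
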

\begin{proof}
We have
$$
\l|\E_{x,y_1,\ldots ,y_d \in \F^n}\l[e\l(D_{y_1,\ldots ,y_d}f(x)\r)\r]\r|=||e(f)||_{U^d}^{2^d} \geq |\F|^{-s}.
$$
Let $g:\F^{n(d+1)}\rightarrow \F$ be defined as $$g(x,y_1,\ldots ,y_d):=D_{y_1,\ldots ,y_d}f(x).$$
By Theorem~\ref{thm:blr},
$$
\rank(g)\leq c^{(\ref{thm:blr})}(d,s/2^d).$$
By Taylor's theorem, since we assume $d<|\F|$, $$f(x)=\frac{D_{x,\ldots ,x}f(0)}{d!}+h(x),$$ where $h \in \P_{d-1}(\F^n)$. Since, $g(0,x,\ldots ,x) \equiv D_{x,\ldots ,x}f(0)$, we conclude that $\rank(f) \leq
\rank(g)+1 \le c^{(\ref{thm:blr})}(d,s/2^d)+1$. Choosing $c^{(\ref{thm:invGow})}(d,s)$ large enough such that $c^{(\ref{thm:invGow})}(d,s)\geq c^{(\ref{thm:blr})}(d,s/2^d)+1$ finishes the proof.
\end{proof}

\subsection{Equidistribution of atoms}
The next lemma shows that a regular factor has atoms of roughly equal size.
\begin{lemma}[Size of atoms]\label{lem:atomsize}Suppose Theorem~\ref{thm:blr} is true up to order $d$.
Let $\B=\{h_1,\ldots,h_c\}$ be a polynomial factor of degree at most $d$.
Given $s \in \N$, assume that $\B$ has rank at least $c^{(\ref{thm:blr})}(d,s)$. Then for every $b \in \F^c$, $$ \Pr_{x \in \F^n}[\B(x)=b]=\frac{1}{||\B||} \pm \frac{1}{|\F|^s}.$$
\end{lemma}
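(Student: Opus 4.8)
The plan is to use the standard Fourier-analytic expansion of the indicator of an atom, bound each nontrivial character sum via the rank hypothesis, and sum up the error terms. Concretely, for $b=(b_1,\ldots,b_c) \in \F^c$, write the indicator of the atom using additive characters on $\F$:
$$
\Pr_{x \in \F^n}[\B(x)=b] = \E_{x \in \F^n} \prod_{i=1}^c \l( \frac{1}{|\F|} \sum_{\lambda_i \in \F} e\l(\lambda_i (h_i(x) - b_i)\r) \r) = \frac{1}{\|\B\|} \sum_{\lambda \in \F^c} e\l(-\sum_i \lambda_i b_i\r) \E_{x \in \F^n}\l[ e\l( \sum_i \lambda_i h_i(x) \r) \r].
$$
The $\lambda = 0$ term contributes exactly $\frac{1}{\|\B\|}$, which is the main term. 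So what remains is to bound
$$
\frac{1}{\|\B\|} \sum_{\lambda \neq 0} \l| \E_{x \in \F^n}\l[ e\l( \sum_i \lambda_i h_i(x) \r) \r] \r|.
$$

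For each fixed nonzero $\lambda$, the polynomial $h_\lambda := \sum_i \lambda_i h_i$ is a linear combination of the $h_i$ with not all coefficients zero, hence by definition of the rank of $\B$ and the regularity hypothesis, $\rank_{d_\lambda}(h_\lambda) \geq c^{(\ref{thm:blr})}(d_\lambda, s)$ where $d_\lambda = \max_i \deg(\lambda_i h_i) \le d$. Now I invoke the contrapositive of Theorem~\ref{thm:blr} (which we are allowed to assume holds up to order $d$): if $|\E_x[e(h_\lambda(x))]| \ge |\F|^{-s}$, then $\rank(h_\lambda) \le c^{(\ref{thm:blr})}(d_\lambda, s)$, contradicting the rank lower bound — unless $h_\lambda$ is constant, but a constant nonzero-coefficient combination would force $\rank = 0 < c^{(\ref{thm:blr})}$ as well (taking $c^{(\ref{thm:blr})} \ge 1$, which we may assume). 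Hence for every $\lambda \neq 0$ we get $|\E_x[e(h_\lambda(x))]| < |\F|^{-s}$. Therefore the total error is at most $\frac{1}{\|\B\|}\sum_{\lambda \neq 0} |\F|^{-s} \le \frac{\|\B\| - 1}{\|\B\|}\cdot |\F|^{-s} \le |\F|^{-s}$, which gives exactly the claimed bound $\Pr_x[\B(x)=b] = \frac{1}{\|\B\|} \pm |\F|^{-s}$.

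The only subtlety — and the step I would be most careful about — is matching the rank parameter in the hypothesis with the degree $d_\lambda$ of the particular combination $h_\lambda$ (which may be strictly less than $d$), and handling the edge case where $h_\lambda$ is a constant polynomial; both are dispatched by noting $c^{(\ref{thm:blr})}$ can be taken non-decreasing in $d$ and at least $1$, so the hypothesis "rank at least $c^{(\ref{thm:blr})}(d,s)$" is strong enough to rule out $|\E_x e(h_\lambda)| \ge |\F|^{-s}$ in every case. No genuine obstacle arises beyond this bookkeeping; the argument is a routine application of Theorem~\ref{thm:blr} combined with Fourier inversion over $\F^c$.
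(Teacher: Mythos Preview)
Your proposal is correct and follows essentially the same approach as the paper: expand the atom indicator via additive characters over $\F^c$, isolate the $\lambda=0$ main term $1/\|\B\|$, and bound each nonzero-$\lambda$ term by $|\F|^{-s}$ via the contrapositive of Theorem~\ref{thm:blr} together with the rank hypothesis on $\B$. The only difference is that you spell out the degree-matching bookkeeping ($d_\lambda \le d$ and $c^{(\ref{thm:blr})}$ non-decreasing in $d$) that the paper leaves implicit.
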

\begin{proof}For any $b \in \F^c$,
\begin{eqnarray*}
\Pr[\B(x)=b]&=&\frac{1}{|\F|^c}\sum_{a \in \F^c}\E_x \l[e\l(\sum_i a_i(h_i(x)-b_i)\r)\r]\\
&=&\frac{1}{|\F|^c}\pm \frac{1}{|\F|^c}\sum_{0 \neq a  \in \F^c}\l|\E_x \l[e\l(\sum_i a_i h_i(x)\r)\r]\r|\\
&=&\frac{1}{|\F|^c} \pm \frac{1}{|\F|^s}
\end{eqnarray*}

The last line follows because of the following. Suppose for some $a \neq 0$, $\l|\E_x \l[e\l(\sum_i a_i h_i(x)\r)\r]\r|>\frac{1}{|\F|^s}$, then by Theorem~\ref{thm:blr}, $\rank(\sum_i a_i h_i) \leq c^{(\ref{thm:blr})}(d,s)$. This contradicts the assumption on the rank of $\B$.
\end{proof}

\subsection{Near orthogonality of affine linear forms}

\begin{lemma}[Near orthogonality]\label{lem:equiaffine}
Suppose Theorem~\ref{thm:blr} is true up to order $d$. Let $c,d,p,s,m,k  \in \N$.
Let $\B=\{h_1,\ldots,h_c\}$ be a polynomial factor of degree at most $d$. Assume $\B$ has rank at least $r^{(\ref{lem:equiaffine})}(d,k,s)$. Let $(L_1,\ldots , L_m)$ be an affine system on $k$ variables. Let $\Lambda=(\lambda_{ij})_{i \in [c], j \in [m]}$ be a tuple of integers. Define $$h_{\Lambda}(x_1,\ldots ,x_{k})=\sum_{i \in [c], j \in [m]}\lambda_{ij}h_i(L_j(x_1,\ldots , x_{k})).$$ Then one of the following is true.
\begin{enumerate}
\item
$h_{\Lambda}\equiv 0$. Moreover, for every $i \in [c]$, it holds that $\sum_{j=1}^m \lambda_{ij}g_i(L_j(\cdot))\equiv 0$ for all $g_i \in \P_d(\F^n)$.
\item $h_{\Lambda}\not \equiv 0$. Moreover, $\l|\E[e(h_{\Lambda}(x_1,\ldots , x_{k})]\r|\leq |\F|^{-s}$.
\end{enumerate}
\end{lemma}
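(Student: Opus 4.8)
## Proof proposal for Lemma~\ref{lem:equiaffine}

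The plan is to follow the standard higher-order Fourier analysis argument (cf. Green--Tao, Lemma 2.5--2.7 in \cite{GT09}) adapted to the growing-field regime, using Theorem~\ref{thm:blr} as the inverse theorem that converts bias of $h_\Lambda$ into a rank bound contradicting the regularity of $\B$. The dichotomy is as follows: either the linear-form substitutions force $h_\Lambda$ to be identically zero for purely ``syntactic'' reasons (the coefficients cancel after grouping by how each $h_i$ is composed with the affine forms), or $h_\Lambda$ is a nonzero polynomial whose bias is controlled.

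\textbf{Step 1: Reduce to a single polynomial $h_i$.} First I would fix $i \in [c]$ and analyze the inner sum $P_i(x_1,\ldots,x_k) := \sum_{j=1}^m \lambda_{ij} h_i(L_j(x_1,\ldots,x_k))$. Because $\{L_1,\ldots,L_m\}$ is an affine system (Definition~\ref{defn:affine}), the forms $L_j$ are ``closed under taking sub-forms with leading coefficient $1$'', which is exactly the condition needed so that the derivatives $D_{\cdots} h_i$ appearing when one expands $h_i(L_j(\cdot))$ can be matched up across different $j$. The key algebraic fact to establish is: $P_i \equiv 0$ as a polynomial in $\F[x_1,\ldots,x_k]$ if and only if a certain linear system in the $\lambda_{ij}$ (indexed by the monomials/derivative-patterns of a generic degree-$d$ polynomial composed with the $L_j$) is satisfied — and this system depends only on the affine system $(L_1,\ldots,L_m)$ and the degree $d$, not on $h_i$ itself. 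This is what gives the ``moreover'' clause of case~1: if $\sum_j \lambda_{ij} h_i(L_j(\cdot)) \equiv 0$ then $\sum_j \lambda_{ij} g_i(L_j(\cdot)) \equiv 0$ for every $g_i \in \P_d(\F^n)$, since vanishing is governed by the same coefficient identities. Here Taylor expansion / the fact that $d < |\F|$ (so that $d!$ is invertible, as used in Theorem~\ref{thm:invGow}) lets one pass freely between the monomial description of $h_i$ and its iterated derivatives.

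\textbf{Step 2: Case split and the bias bound.} If for every $i$ the polynomial $P_i \equiv 0$, then $h_\Lambda = \sum_i P_i \equiv 0$ and we are in case~1. Otherwise, suppose $h_\Lambda \not\equiv 0$ and, for contradiction, that $|\E[e(h_\Lambda)]| > |\F|^{-s}$. The degree of $h_\Lambda$ is at most $d$ (composition with linear forms does not raise degree), so Theorem~\ref{thm:blr} gives $\rank_d(h_\Lambda) \le c^{(\ref{thm:blr})}(d,s)$. Now I would argue that a low-rank representation of $h_\Lambda$ as a function of $x_1,\ldots,x_k$ pulls back, by an appropriate specialization of the $x_j$'s to suitable affine combinations (or by a ``random restriction'' / projection argument on $\F^{nk} \to \F^n$), to show that some nontrivial linear combination $\sum_i a_i h_i$ of the original polynomials has bounded rank — using the non-degeneracy supplied by the affine system to ensure the combination is genuinely nontrivial whenever some $P_i \not\equiv 0$. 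This contradicts the hypothesis that $\B$ has rank at least $r^{(\ref{lem:equiaffine})}(d,k,s)$, provided we set $r^{(\ref{lem:equiaffine})}(d,k,s)$ to be larger than the rank bound that Theorem~\ref{thm:blr} yields (which is where the explicit choice of $r^{(\ref{lem:equiaffine})}$ comes from, and it must absorb the loss incurred in translating rank of $h_\Lambda$ over $\F^{nk}$ into rank of a combination over $\F^n$).

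\textbf{Main obstacle.} The delicate part is Step~2's rank-transfer argument: going from ``$h_\Lambda$ has low rank as a polynomial in the $nk$ variables $(x_1,\ldots,x_k)$'' to ``some nontrivial $\F$-linear combination of $h_1,\ldots,h_c$ has low rank in the original $n$ variables''. One must be careful that the substitution $x_j \mapsto$ (affine forms) does not inflate the rank uncontrollably and that the combination extracted is nonzero — this is precisely where the two defining properties of an affine system (leading coefficient $1$, closure under sub-forms) are essential, since they guarantee the $h_i(L_j(\cdot))$ ``interact rigidly'' and cannot conspire to have spurious low rank without $\B$ itself being non-regular. The bookkeeping of which derivative patterns survive (Step~1's linear-algebra characterization of when $P_i\equiv 0$) is the other technical point, but it is routine once set up; I expect it to be handled, as in \cite{GT09}, by expanding each $h_i\circ L_j$ via multilinear/Taylor expansion and grouping terms by their derivative signature.
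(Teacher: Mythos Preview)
Your proposal is correct and matches the paper's approach, which is simply deferred to Theorem~3.3 of \cite{BFHHL} (with the dependence on $|\F|$ tracked) together with Theorem~\ref{thm:invGow}: one expands each $h_i\circ L_j$ through the associated symmetric multilinear form to characterize the universal-vanishing case~1, and otherwise specializes $x_j \mapsto a_j w$ for suitable scalars $a_j$ so that the low rank of $h_\Lambda$ (obtained from Theorem~\ref{thm:blr}, equivalently Theorem~\ref{thm:invGow}, once $h_\Lambda$ is biased) descends to a nontrivial $\F$-linear combination of the $h_i$'s, contradicting the rank hypothesis on $\B$. You have correctly identified the specialization-based rank-transfer as the crux of the argument.
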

Again, the proof is exactly along the lines of Theorem 3.3 in~\cite{BFHHL} taking care of the dependence on $|\F|$ now, followed by an application of Theorem~\ref{thm:invGow}.
As a corollary, we state the above result for the case of parallelepipeds. We will need this in the inductive proof of Theorem~\ref{thm:blr}.

\subsection{Equidistribution of parallelepipeds}
We first set up some definitions following Section 4 in~\cite{GT09}. Throughout this subsection, let $\B=\{h_1,\ldots,h_c\}$ be a polynomial factor of degree at most $d$. We assume $\B$ has rank at least $r^{(\ref{thm:invGow})}(d,s)$. For $i \in [d]$, $M_i$ denotes the number of polynomials in $\B$ of degree exactly equal to $i$. Let $\Sigma:=\otimes_{i \in [d]}\F^{M_i}$.

\begin{define}[Faces and lower faces]Let $k \in \N$ and $0 \leq k' \leq k$. A set $F \subseteq \{0,1\}^k$ is called a face of dimension $k'$ if $$F=\{b:b_i=\de_i, i\in I\},$$ where $I \subseteq [k]$, $|I|=k-k'$ and $\de_i \in \{0,1\}$. If $\de_i=0$ for all $i \in I$, the $F$ is a lower face. Thus, it is equivalent to the power set of $[k]\setminus I$.
\end{define}

\begin{define}[Face vectors and parallelepiped constraints]Let $i_0 \in [d]$, $j_0 \in [M_{i_0}]$ and $F \subseteq \{0,1\}^k$. Let $r(i_0,j_0,F) \in \Sigma^{\{0,1\}^k}$ indexed as $r(i,j,\omega)=(-1)^{|\omega|}$ if $i=i_0, j=j_0$ and $\omega \in F$ and zero otherwise. This is called a face vector. If $F$ is a lower face, then it corresponds to a lower face vector. If $\dim(F) \geq i_0+1$, then it is a relevant face (lower face) vector. A vector $(t(\omega):\omega \in \{0,1\}^k) \in \Sigma^{\{0,1\}^k}$ satisfies the parallelepiped constraints if it is orthogonal to all the relevant lower face vectors.
\end{define}

Let $\Sigma_0 \subseteq \Sigma^{\{0,1\}^k}$ be the subspace of vectors satisfying the parallelepiped constraints.

\begin{claim}[Dimension of $\Sigma_0$, Lemma 4.4~\cite{GT09}]\label{claim:dimSigma}Let $d<k$. Then, $$\dim(\Sigma_0)=\sum_{i=1}^d M_i\sum_{0 \leq j \leq i}\binom{k}{j}.$$
\end{claim}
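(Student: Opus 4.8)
The plan is to compute $\dim(\Sigma_0)$ by counting constraints rather than directly describing the solution space. Recall that $\Sigma_0 \subseteq \Sigma^{\{0,1\}^k}$ is cut out by orthogonality to all \emph{relevant lower face vectors}, where a lower face vector $r(i_0,j_0,F)$ is indexed by a degree $i_0 \in [d]$, a polynomial index $j_0 \in [M_{i_0}]$, and a lower face $F$ (equivalently a subset $S \subseteq [k]$ with $F = \mathcal{P}(S)$), and is relevant when $\dim(F) = |S| \ge i_0 + 1$. Since the space $\Sigma = \otimes_{i \in [d]} \F^{M_i}$ splits as a direct sum over the pairs $(i_0, j_0)$, and the constraint $r(i_0, j_0, F)$ only involves the $(i_0, j_0)$-component of each $t(\omega)$, the whole system decouples: $\Sigma^{\{0,1\}^k}$ is the direct sum over $(i_0, j_0)$ of copies of $\R^{\{0,1\}^k}$ (one per pair, with $M_i$ copies for each degree $i$), and $\Sigma_0$ is the direct sum of the corresponding solution spaces. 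So it suffices to fix a degree $i \in [d]$, work in $V = \R^{\{0,1\}^k}$, and compute the dimension $D_i$ of the subspace of vectors $(t(\omega))$ orthogonal to $v_S := \sum_{\omega \in \mathcal{P}(S)} (-1)^{|\omega|} e_\omega$ for every $S \subseteq [k]$ with $|S| \ge i+1$; then $\dim(\Sigma_0) = \sum_{i=1}^d M_i D_i$, and matching this against the claimed formula amounts to checking $D_i = \sum_{0 \le j \le i} \binom{k}{j}$.

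The first key step is to recognize the vectors $v_S$ in a basis that diagonalizes things. The inner product $\langle t, v_S\rangle = \sum_{\omega \subseteq S} (-1)^{|\omega|} t(\omega)$ is, up to sign, exactly the Möbius/inclusion-exclusion transform of $t$ evaluated at $S$: writing $\hat t(S) := \sum_{\omega \subseteq S}(-1)^{|\omega|} t(\omega)$, the constraint is $\hat t(S) = 0$ for all $|S| \ge i+1$. This transform $t \mapsto \hat t$ is an invertible linear map on $\R^{\{0,1\}^k}$ (its inverse is essentially the same transform, since the subset-Möbius function is an involution up to sign), so imposing $\hat t(S) = 0$ for all $S$ in a given collection $\mathcal{S}$ imposes exactly $|\mathcal{S}|$ independent linear constraints. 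Hence $D_i = 2^k - \#\{S \subseteq [k] : |S| \ge i+1\} = \#\{S \subseteq [k] : |S| \le i\} = \sum_{0 \le j \le i}\binom{k}{j}$. Plugging back in, $\dim(\Sigma_0) = \sum_{i=1}^d M_i \sum_{0 \le j \le i}\binom{k}{j}$, which is the claim. (The hypothesis $d < k$ is what guarantees that for each relevant degree $i \le d$ there actually exist faces of dimension $\ge i+1$, i.e.\ the count is as stated and nothing degenerates; it also ensures $i+1 \le k$ so the constraint sets are as described.)

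The second key point, which I would spell out carefully since it is the only place something could go wrong, is the claim that the Möbius transform $T : t \mapsto \hat t$ is invertible and therefore the constraints indexed by distinct $S$ are linearly independent. I would verify this by exhibiting the inverse explicitly: if $\hat t(S) = \sum_{\omega \subseteq S}(-1)^{|\omega|} t(\omega)$ then $t(\omega) = (-1)^{|\omega|}\sum_{S \subseteq \omega} \hat t(S)$ (or the analogous identity with the right signs), which follows from the standard subset inclusion-exclusion identity $\sum_{\omega \subseteq S \subseteq \tau}(-1)^{|S|} = (-1)^{|\omega|}[\omega = \tau]$. Invertibility of $T$ immediately gives that $\{T^* e_S : S \in \mathcal{S}\}$ — the functionals $t \mapsto \hat t(S)$ — are linearly independent for any set $\mathcal{S}$ of distinct subsets, so the codimension of $\Sigma_0$ (within each $(i,j)$-block) is exactly $|\mathcal{S}|$ with $\mathcal{S} = \{S : |S| \ge i+1\}$.

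The main obstacle I anticipate is purely bookkeeping: making sure that the definition of ``relevant lower face vector'' is being read correctly — in particular that the relevant constraints for degree $i_0$ are precisely the faces of dimension $at\ least\ i_0+1$ (not exactly $i_0+1$, and not $\ge i_0$), and that lower faces are in bijection with subsets $S$ with $F = \mathcal{P}(S)$ and $\dim F = |S|$. Once that dictionary is pinned down, the rest is the clean Möbius-inversion argument above. A sanity check I would run before writing it up: for $i = d$ and, say, $k = d+1$, the formula gives $D_d = \sum_{j=0}^d \binom{d+1}{j} = 2^{d+1} - 1$, i.e.\ a codimension-$1$ space inside $\R^{\{0,1\}^{d+1}}$, cut out by the single constraint $\hat t([d+1]) = 0$ — which is exactly the statement that a degree-$d$ polynomial has vanishing $(d+1)$-st iterated derivative, reassuringly consistent with the role this claim plays in the equidistribution-of-parallelepipeds argument.
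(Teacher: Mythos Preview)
Your argument is correct. The paper itself does not prove this claim; it imports it wholesale from Green and Tao~\cite{GT09}, Lemma~4.4, so there is no in-paper proof to compare against. Your decoupling by $(i_0,j_0)$-component followed by the M\"obius-inversion observation that the transform $t \mapsto \hat t$ is lower-triangular (with respect to any linear extension of the subset order) with $\pm 1$ diagonal entries, hence invertible, is exactly the right idea and matches the spirit of the Green--Tao argument.

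Two small points worth tightening. First, $\Sigma = \otimes_i \F^{M_i}$ is an $\F$-vector space, not a real one, so your blocks should be $\F^{\{0,1\}^k}$ rather than $\R^{\{0,1\}^k}$; this is harmless since triangularity with unit diagonal gives invertibility over any field. Second, your explicit inverse formula has the signs slightly off --- the correct identity is $t(\omega) = \sum_{S \subseteq \omega}(-1)^{|S|}\hat t(S)$ --- but you already hedged this (``or the analogous identity with the right signs''), and in any case invertibility follows from triangularity alone, so the argument stands.
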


\begin{lemma}[Equidistribution of parallelepipeds]\label{lem:parallel}Suppose Theorem~\ref{thm:blr} is true up to order $d$. Given $s,d<k \in \N$, let $\B$ be a polynomial factor of rank at least $c^{(\ref{lem:parallel})}(k,s)$ defined by polynomials $h_1,\ldots , h_c:\F^n \rightarrow \F$ of degree at most $d$. For every $t \in \Sigma_0$ and $x$ such that $\B(x)=t(0)$, $$ \Pr_{y_1,\ldots ,y_k}[\B(x+\omega \cdot y)=t(\omega) \ \forall \ \omega \in \{0,1\}^k]=\frac{1}{|\F|^{\sum_{i=1}^d M_i\sum_{1 \leq j \leq i}\binom{k}{j}}} \pm \frac{1}{|\F|^s}.$$
\end{lemma}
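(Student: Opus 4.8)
The plan is to compute the probability by Fourier expansion over the group $\Sigma^{\{0,1\}^k}$, exactly as in the proof of the equidistribution results in~\cite{GT09}, and then use the near-orthogonality Lemma~\ref{lem:equiaffine} to kill all the non-trivial Fourier terms. Concretely, fix $x$ with $\B(x) = t(0)$. The event $\{\B(x + \omega\cdot y) = t(\omega)\ \forall\omega\in\{0,1\}^k\}$ is detected by
$$
\prod_{\omega \in \{0,1\}^k} \frac{1}{|\F|^c}\sum_{a(\omega)\in\F^c} e\Big(\sum_i a(\omega)_i\big(h_i(x+\omega\cdot y) - t(\omega)_i\big)\Big),
$$
so taking expectation over $y_1,\ldots,y_k$ and reorganizing, the probability equals $|\F|^{-c(2^k-1)}$ (the count of constraints, after factoring out the $\omega=0$ term which is pinned by the hypothesis $\B(x)=t(0)$) times a sum over tuples $(a(\omega))_\omega$ of terms of the form $e(-\langle\text{stuff}\rangle)\cdot\E_y[e(h_\Lambda(y))]$, where $h_\Lambda(y) = \sum_{i,\omega} a(\omega)_i h_i(\omega\cdot y)$ is precisely a form of the type handled by Lemma~\ref{lem:equiaffine} with the affine system given by the lower faces $\{\omega\cdot y : \omega\in\{0,1\}^k\}$.

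The key step is the dichotomy: for each tuple $(a(\omega))_\omega$, either $h_\Lambda \equiv 0$ or $|\E_y[e(h_\Lambda(y))]| \le |\F|^{-s'}$ for a suitable $s'$ (chosen by setting $r^{(\ref{lem:parallel})}(k,s)$ large enough relative to $r^{(\ref{lem:equiaffine})}$), provided $\B$ is sufficiently regular. The tuples with $h_\Lambda\equiv 0$ are, by the ``moreover'' clause of Lemma~\ref{lem:equiaffine} (which says the vanishing is generic, i.e. holds for all degree-$d$ polynomials in place of the $h_i$), exactly those for which the corresponding vector in $\Sigma^{\{0,1\}^k}$ lies in $\Sigma_0$, i.e. satisfies the parallelepiped constraints — here I would invoke the degree bookkeeping via the decomposition of $\Sigma$ into $\otimes_i \F^{M_i}$, and a relevant-lower-face-vector characterization as in~\cite{GT09}. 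By Claim~\ref{claim:dimSigma}, the number of such tuples is $|\F|^{\dim(\Sigma_0)} = |\F|^{\sum_i M_i\sum_{0\le j\le i}\binom{k}{j}}$. Among these, the phases $e(-\langle\cdot\rangle)$ collapse: for the vanishing tuples the ``stuff'' being subtracted also lies in the span of $t$'s satisfying the constraints and evaluates consistently, so each such term contributes exactly $1$ (this uses $\B(x)=t(0)$ and $t\in\Sigma_0$ to cancel the linear part). The remaining $\le |\F|^{c2^k}$ tuples each contribute at most $|\F|^{-s'}$ in absolute value, giving total error $\le |\F|^{c2^k - s'} \le |\F|^{-s}$ for $s'$ large enough. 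Dividing by $|\F|^{c(2^k-1)}$ and noting $c(2^k-1) - \dim(\Sigma_0)$ is what appears in the exponent after one checks $c\cdot 2^k - c = \sum_i M_i(2^k-1)$ and subtracts $\sum_i M_i\sum_{0\le j\le i}\binom kj$ (the $j=0$ term being the $\omega=0$ coordinate already removed), one arrives at the claimed exponent $\sum_{i=1}^d M_i\sum_{1\le j\le i}\binom kj$.

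The main obstacle I anticipate is the bookkeeping in the second paragraph: correctly identifying which tuples $(a(\omega))_\omega$ yield $h_\Lambda\equiv 0$ as precisely the lattice $\Sigma_0$, and verifying that for those tuples the residual character $e(-\sum_{i,\omega}a(\omega)_i t(\omega)_i)$ is identically $1$ given $t\in\Sigma_0$ and $\B(x)=t(0)$. This is a purely linear-algebraic matching between the parallelepiped constraints (orthogonality to relevant lower face vectors) and the linear relations among $\{h_i(\omega\cdot y)\}$ forced by $\deg h_i \le i$; it is carried out for fixed finite fields in Section~4 of~\cite{GT09}, and the only new point here is to track that all the implied bounds are polynomial in $|\F|$ — which is automatic once Lemma~\ref{lem:equiaffine} is in hand, since that lemma already absorbs the field-size dependence. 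The rest is a routine Fourier/Chebyshev-free counting argument.
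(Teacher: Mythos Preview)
Your proposal is correct and takes essentially the same approach as the paper. The paper's own proof is a single sentence invoking Claim~\ref{claim:dimSigma} (the dimension of $\Sigma_0$) and Lemma~\ref{lem:equiaffine} applied to the parallelepiped system $\{x+\omega\cdot y:\omega\in\{0,1\}^k\}$; you have simply unpacked that invocation into the explicit Fourier expansion, dichotomy, and term-counting that it stands for.
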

\begin{proof}This immediately follows from the dimension of $\Sigma_0$ (Claim~\ref{claim:dimSigma}) and Lemma~\ref{lem:equiaffine} applied to the parallelepiped.
\end{proof}

\subsection{Proof of Theorem~\ref{thm:blr}}The proof of Theorem~\ref{thm:blr} is by induction on $d$ and follows along the lines of Theorem 1.7 in~\cite{GT09}. We sketch the proof here.

\begin{proof}[Proof of Theorem~\ref{thm:blr}]
The base case of $d=1$ is trivial. Indeed, if a linear polynomial $f:\F^n \to \F$ satisfies $|\E[e(f(x)]| \geq |\F|^{-s}$, then by orthogonality of linear polynomials, we have $f(x)$ is a constant and hence has rank $0$.
Now, suppose the hypothesis is true for degree $d-1$. Let $t \in \N$ depending on $d$ be specified later. We have $|\E[e(f(x))]| \geq |\F|^{-s}$. By Lemma~\ref{lem:blr}, there exists $\B=\{h_1,\ldots h_c:h_i \in \P_{d-1}(\F^n)\}$, $c=c(d,s,t)$, and $\Gamma:\F^c \rightarrow \F$, such that $$\Pr[f(x) \neq \Gamma(h_1(x),\ldots h_c(x))]\leq |\F|^{-t}.$$ Let $r:\N \to \N$ be a growth function that depends on $d$ and will be specified later. Regularize $\B$ to an $r$-regular polynomial factor $\B'=\{h_1',\ldots ,h_{c'}'\}$, $c' \leq C^{(\ref{lem:reg})}_{r,d}(c)$. Thus, we have for an appropriate $\Gamma':\F^{c'} \to \F$ that
$$
\Pr[f(x) \neq \Gamma'(h'_1(x),\ldots h'_{c'}(x))]\leq |\F|^{-t}.
$$

In the rest of the proof, we prove that $f$ is $\B'$-measurable. This will finish the proof.
We will assume that $r(j) \geq c^{(\ref{lem:atomsize})}(d,2t+j)$ for all $j \in \N$. By Markov's inequality and Lemma~\ref{lem:atomsize}, for at least $1-|\F|^{-t/4}$ fraction of atoms $A$, $$\Pr_{x \in A}[f(x) \neq \Gamma'(h_1'(x),\ldots h_{c'}'(x))]\leq |\F|^{-t/4}.$$

The first step is to prove that on such atoms, $f$ is constant. Fix such an atom $A$ and let $A' \subseteq A$ be the set where $f(x)=\Gamma'(h_1'(x),\ldots h_{c'}'(x))$.
\begin{lemma}Let $t$ be large enough depending on $d$. Let $x \in A$ be arbitrary. Then there is an $h \in (\F^n)^{d+1}$ such that $x+\omega \cdot h \in A'$ for all $\omega \in \{0,1\}^{d+1} \setminus 0^{d+1}$.
\end{lemma}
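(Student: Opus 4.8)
The plan is to use a first-moment / counting argument over the choice of the direction tuple $h=(h_1,\ldots,h_{d+1})\in(\F^n)^{d+1}$. Fix the atom $A$ and the "bad" set $B=A\setminus A'$, which by assumption satisfies $\Pr_{x\in A}[x\in B]\le|\F|^{-t/4}$. Since $A$ is an atom of the regular factor $\B'$, we have $\B'(x)=b$ for all $x\in A$, where $b=\B'(x)$. The key observation is that the map $\omega\mapsto b$ (constant in $\omega$, say the all-$b$ vector $t$ with $t(\omega)=b$) lies in the parallelepiped subspace $\Sigma_0$: all the parallelepiped constraints are homogeneous of the form "orthogonal to relevant lower face vectors", and $(-1)^{|\omega|}$ summed over any face of dimension $\ge i_0+1\ge 2$ vanishes, so the constant vector is annihilated. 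Hence Lemma~\ref{lem:parallel} applies with $k=d+1$: for $x\in A$ (so $\B'(x)=t(0)=b$),
$$
\Pr_{y_1,\ldots,y_{d+1}}\big[\B'(x+\omega\cdot y)=b\ \ \forall\,\omega\in\{0,1\}^{d+1}\big]=\frac{1}{|\F|^{N}}\pm\frac{1}{|\F|^{s_0}},
$$
where $N=\sum_i M_i\sum_{1\le j\le d+1}\binom{d+1}{j}$ and $s_0$ can be taken as large as we like (choosing $r$ large enough that $\B'$ has rank at least $c^{(\ref{lem:parallel})}(d+1,s_0)$). In other words, the event "all $2^{d+1}$ vertices of the cube $x+\omega\cdot y$ land back in $A$" has probability at least $|\F|^{-N}(1-o(1))$, a quantity depending only on $d$ and the factor, not on $n$.

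Next I would bound the probability that some vertex $x+\omega\cdot y$ with $\omega\neq 0^{d+1}$ lands in the bad set $B$ rather than in $A'$. For a fixed nonzero $\omega$, the point $x+\omega\cdot y$ ranges over $\F^n$ as $y$ ranges over $(\F^n)^{d+1}$, but not uniformly — it is uniform on a coset (indeed, for $\omega\neq 0$, $x+\sum_{i:\omega_i=1}y_i$ is uniform over $\F^n$), so $\Pr_y[x+\omega\cdot y\in B]=\Pr_{z\in\F^n}[z\in B]\le|A|^{-1}\cdot|B|$. Here I would again invoke Lemma~\ref{lem:atomsize} (valid since $\B'$ is regular) to say $|A|\ge(|\F|^{-c'}-|\F|^{-2t-\cdot})|\F^n|$, so $\Pr_y[x+\omega\cdot y\in B]\le 2|\F|^{c'}\cdot|\F|^{-t/4}$, still small provided $t$ is chosen large compared to $c'$ — but $c'$ depends on $r$ which depends on $t$... so the choice of parameters must be made carefully and in the right order: first fix the required rank/regularity $r$ as a function of $d$ and of the target $s_0\sim N$, which determines $C^{(\ref{lem:reg})}_{r,d}$ and hence an upper bound on $c'$ in terms of the original $c=c(d,s,t)$, and only then pick $t$ large enough in terms of all of these. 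A union bound over the $2^{d+1}-1$ nonzero $\omega$ gives that the probability some nonzero vertex is bad is at most $2^{d+1}\cdot 2|\F|^{c'}|\F|^{-t/4}$.

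Finally I would compare the two estimates: the probability that all vertices return to $A$ is $\ge\tfrac12|\F|^{-N}$, while the probability that some nonzero vertex lands in $B$ is $\le 2^{d+2}|\F|^{c'-t/4}$. Since $N$ depends only on $d$ and the factor $\B'$ (which is fixed before $t$), while the second bound can be driven below $\tfrac12|\F|^{-N}$ by taking $t$ sufficiently large, there must exist a choice of $y=(y_1,\ldots,y_{d+1})=:h$ for which every vertex $x+\omega\cdot y$ with $\omega\neq 0^{d+1}$ lies in $A$ and none of them lies in $B$ — i.e.\ they all lie in $A'$, which is exactly the claim. The main obstacle I anticipate is bookkeeping the parameter dependencies so that the circular-looking dependence ($r\to c'\to t\to r$) is actually resolved in a consistent order: one must choose $s_0$ (hence $N$ and the needed rank $r$) depending only on $d$, then $t$ large enough in terms of $C^{(\ref{lem:reg})}_{r,d}$ applied to $c(d,s,t)$ — which is fine because $c(d,s,t)$ grows only polynomially in $t$ while we need $t/4$ to dominate a fixed function of $\log_{|\F|}$ of that, so a large enough constant multiple of that polynomial suffices. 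The geometric core — that the constant vector satisfies the parallelepiped constraints and that Lemma~\ref{lem:parallel} then gives a positive $n$-independent lower bound on the "all-vertices-in-$A$" event — is the crucial input.
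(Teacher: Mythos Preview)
Your overall plan—lower bound $\Pr_y[E_1]$ (all cube vertices land in $A$) via Lemma~\ref{lem:parallel} and subtract an upper bound on the probability that some vertex falls in $B=A\setminus A'$—is the right shape, but the unconditional comparison you set up cannot close, and your proposed resolution of the circularity is incorrect.

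Concretely: Lemma~\ref{lem:parallel} gives $\Pr[E_1]\approx|\F|^{-N}$ with $N=\sum_i M_i\sum_{1\le j\le i}\binom{d+1}{j}$, so $N-c'\ge d\cdot c'$ (each polynomial contributes at least $\binom{d+1}{1}-1=d$ to $N-c'$). Your unconditional bound is $\Pr[x+\omega_0\cdot y\in B]=|B|/|\F|^n\le 2|\F|^{-c'-t/4}$ (note: your formula $2|\F|^{c'}|\F|^{-t/4}$ has the wrong sign on $c'$, and the earlier line ``$\le|A|^{-1}|B|$'' is also off—the vertex is uniform on $\F^n$, not on $A$). To get $\Pr[E_1]>\Pr[E_2]$ you therefore need $t/4>N-c'\ge d\,c'$, i.e.\ $t\gtrsim c'$. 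But $c'$ is the size of the \emph{regularized} factor built from $c(d,s,t)=\binom{d+t+2s+3}{d}\sim t^d$ polynomials; even with no blow-up from $C^{(\ref{lem:reg})}_{r,d}$ one already has $c'\gtrsim t^d$, and the regularity bound only makes this worse. There is no $t$ with $t\gg t^d$ for $d\ge 2$, so ``a large enough constant multiple of that polynomial'' does not exist.

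The argument in Green--Tao (their Lemma~5.2, which the paper defers to) avoids this by bounding \emph{conditionally}. For each nonzero $\omega_0$ and each $z\in A$, a second application of equidistribution (Lemma~\ref{lem:equiaffine}/\ref{lem:parallel} with the vertex at $\omega_0$ pinned to $z$) shows that $\Pr_y[E_1,\ x+\omega_0\cdot y=z]$ is, up to a multiplicative $1\pm o(1)$, independent of $z\in A$; summing over $z\in A$ forces this common value to be $\approx \Pr[E_1]/|A|$. Hence
\[
\Pr_y\big[E_1,\ x+\omega_0\cdot y\in B\big]\ \le\ (1+o(1))\,\frac{|B|}{|A|}\,\Pr[E_1]\ \le\ (1+o(1))\,|\F|^{-t/4}\,\Pr[E_1],
\]
and a union bound over $\omega_0$ gives
\[
\Pr_y\big[\forall\,\omega\ne 0:\ x+\omega\cdot y\in A'\big]\ \ge\ \Big(1-2^{d+1}(1+o(1))|\F|^{-t/4}\Big)\Pr[E_1]\ >\ 0
\]
once $t$ exceeds a constant depending only on $d$. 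The $o(1)$ here is governed by the rank function $r$ (chosen once, as a function of $d$), not by $t$; this is what decouples $t$ from $c'$ and breaks the circularity. Your proposal is missing exactly this conditional step.
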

The proof is exactly as in Lemma 5.2 in~\cite{GT09}. We omit it here. Continuing, since $f \in \P_d(\F^n)$, we have $$\sum_{\omega \in \{0,1\}^{d+1}}(-1)^{|\omega|}f(x+\omega \cdot h)=0.$$ Now, by the above lemma, we have $f(x+\omega \cdot h) \equiv c_A$ for $\omega \neq 0$, where $c_A$ is a constant that depends on $A$. Thus, $f(x) \equiv c_A$.

This finishes the first step. Thus, we have for $1-|\F|^{-t/4}$ fraction of the atoms $A$, call them good atoms, $f(x)=c_A$. The final step shows that for any arbitrary atom $A$, there are good atoms $A_{\omega}$, $0 \neq \omega \in \{0,1\}^{d+1}$ such that the vector $t=\B(A_{\omega}) \in \Sigma^{\{0,1\}^{d+1}}$ satisfies the parallelepiped constraints. It is enough to find one parallelepiped for which $x+\omega \cdot h$ lie in good atoms for $\omega \neq 0$. Indeed, let $x \in A$ be arbitrary. Pick $h_1,\ldots ,h_{d+1}$ randomly. The probability that for a fixed $\omega \neq 0$,  $x+\omega \cdot h$ lies in a good atom is at least $1-|\F|^{-t/4}>1-2^{-2d}$ for $t$ large enough. The result now follows by a union bound over $\omega \in \{0,1\}^{d+1}$.
\end{proof}

\subsection{Some more consequences}

\paragraph{Degree preserving lemma.}

\begin{lemma}[Degree Preserving Lemma]\label{lem:degree}Let $c,d,D \in \N$ with $d<|\F|$.
Let $\B=\{h_1,\ldots,h_c\}$ be a polynomial factor of degree at most $d$, and rank at least $r^{(\ref{lem:degree})}(c,d,D)$. For $\Gamma:\F^c \rightarrow \F$, let $F:\F^n \rightarrow \F$ be defined by $F(x)=\Gamma(h_1(x),\ldots , h_c(x))$. Let $\deg(F)=D$. Then, for every set of polynomials $h'_1,\ldots h'_c:\F^n \rightarrow \F$ with $\deg(h'_i) \leq \deg(h_i)$ for all $i \in [c]$, if $G:\F^n \rightarrow \F$ is defined by $G(x)=\Gamma(h'_1(x),\ldots , h'_c(x))$, we have $\deg(G) \leq D$.\end{lemma}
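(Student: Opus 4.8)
The plan is to detect the degree of a composition through iterated derivatives, using that $\deg(G)\le D$ is equivalent to $D_{y_1,\dots,y_k}G\equiv 0$ with $k:=D+1$, and that such an iterated derivative of $\Gamma(f_1,\dots,f_c)$ depends on the $f_i$ only through their value patterns on $k$-dimensional parallelepipeds, patterns that are constrained solely by the degrees of the $f_i$. Concretely, for $e\ge 0$ let $W_e\subseteq \F^{\{0,1\}^k}$ be the linear space spanned by the vectors $\big(\prod_{i\in S}\omega_i\big)_{\omega\in\{0,1\}^k}$ over $S\subseteq[k]$ with $|S|\le e$ (so $\dim W_e=\sum_{\ell\le e}\binom{k}{\ell}$, and $W_e=\F^{\{0,1\}^k}$ once $e\ge k$); if $\deg(h)\le e$ then $\big(h(x+\omega\cdot y)\big)_{\omega\in\{0,1\}^k}\in W_e$ for all $x,y_1,\dots,y_k$, since the $(e+1)$-fold difference of $h$ vanishes. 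Writing $d_i:=\deg(h_i)$, put $\mathcal W:=W_{d_1}\times\dots\times W_{d_c}$ and define $\Phi$ on $c$-tuples of patterns by $\Phi\big(v^{(1)},\dots,v^{(c)}\big):=\sum_{\omega\in\{0,1\}^k}(-1)^{|\omega|}\Gamma\big(v^{(1)}_\omega,\dots,v^{(c)}_\omega\big)$, so that for \emph{any} polynomials $f_1,\dots,f_c$, $D_{y_1,\dots,y_k}\big[\Gamma(f_1,\dots,f_c)\big](x)=\Phi\big((f_i(x+\omega\cdot y))_\omega:i\in[c]\big)$. The point is that $\mathcal W$ and $\Phi$ depend only on $\Gamma$, on $k=D+1$, and on the degrees $d_i$ — not on the polynomials $h_i$ themselves.

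The first step is to show that $\deg(F)=D$ together with regularity of $\B$ forces $\Phi\equiv 0$ on all of $\mathcal W$. From $\deg(F)=D=k-1$ we get $D_{y_1,\dots,y_k}F\equiv 0$, i.e.\ $\Phi\big((h_i(x+\omega\cdot y))_\omega:i\big)=0$ for all $x,y$; so it is enough to know that, for $\B$ sufficiently regular, the tuple $\big((h_i(x+\omega\cdot y))_\omega:i\big)$ attains \emph{every} value of $\mathcal W$ as $x,y_1,\dots,y_k$ vary. This is the (degree-sensitive) equidistribution of parallelepipeds for regular factors: it is Lemma~\ref{lem:parallel} (together with Lemma~\ref{lem:atomsize}, which guarantees every atom is nonempty) when $D\ge d$, and for general $k$ it follows by the same argument — apply Lemma~\ref{lem:equiaffine} to the parallelepiped affine system $\{\omega\mapsto x+\sum_i\omega_i y_i:\omega\in\{0,1\}^k\}$, which is a genuine affine system since it is downward closed under $\prec$, and Fourier-expand the distribution of the pattern: the degenerate $\Lambda$ (those with $h_\Lambda\equiv 0$) span the annihilator of $\mathcal W$ and produce a main term of order $|\mathcal W|^{-1}$, while each nondegenerate $\Lambda$ contributes at most $|\F|^{-s}$. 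So it suffices to take $s:=c\cdot 2^{D+1}+1$ (which exceeds $\dim\mathcal W$) and let $r^{(\ref{lem:degree})}(c,d,D)$ be the corresponding rank threshold from Lemma~\ref{lem:parallel} (resp.\ Lemma~\ref{lem:equiaffine}), so that $|\F|^{-s}$ is dominated by the main term and the support of the pattern is all of $\mathcal W$.

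It remains to transfer the identity to $\B'$. Let $h'_1,\dots,h'_c$ be arbitrary with $\deg(h'_i)\le d_i$ and set $G:=\Gamma(h'_1,\dots,h'_c)$. For every $x,y_1,\dots,y_k$ and each $i$, the pattern $\big(h'_i(x+\omega\cdot y)\big)_\omega$ lies in $W_{d_i}$ because $\deg(h'_i)\le d_i$; hence the $c$-tuple of these patterns lies in $\mathcal W$, and by the first step $D_{y_1,\dots,y_k}G(x)=\Phi\big((h'_i(x+\omega\cdot y))_\omega:i\big)=0$. Since $x$ and $y_1,\dots,y_k$ were arbitrary, $D_{y_1,\dots,y_k}G\equiv 0$, that is $\deg(G)\le k-1=D$, which is exactly the assertion.

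I expect the main obstacle to be the first step — establishing full-image equidistribution over $\mathcal W$ with a rank threshold depending only on $c,d,D$ — and, within it, the regime $k=D+1\le d$ (which happens exactly when $D<d$), since this is outside the range $d<k$ for which Claim~\ref{claim:dimSigma} and Lemma~\ref{lem:parallel} are stated. Handling it via Lemma~\ref{lem:equiaffine} directly is routine but needs care in the bookkeeping: when $d_i\ge k$ the block $W_{d_i}$ equals all of $\F^{\{0,1\}^k}$ and carries no degenerate relations, so the degenerate/nondegenerate split of $\Lambda$ must be tracked block by block and matched against $\mathcal W$ accordingly.
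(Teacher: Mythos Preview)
Your approach is correct and is essentially the argument from~\cite{BFHHL} that the paper defers to (the paper omits the proof entirely, writing only that ``it can be readily adapted from Theorem~4.1 in~\cite{BFHHL}''). The derivative test for degree, the passage to pattern tuples on parallelepipeds, and the use of equidistribution to deduce that the identity $\Phi\equiv 0$ holds on all of $\mathcal W$ is exactly how that proof proceeds. Your handling of the regime $k=D+1\le d$ via Lemma~\ref{lem:equiaffine} directly (rather than Lemma~\ref{lem:parallel}, which is stated only for $d<k$) is the right move and needs no new ideas: the parallelepiped forms $\{x+\omega\cdot y:\omega\in\{0,1\}^k\}$ are an affine system in the sense of Definition~\ref{defn:affine} for any $k$, the degenerate $\Lambda$ are precisely $\mathcal W^\perp$ by the case-1 clause of Lemma~\ref{lem:equiaffine} (read with $\deg h_i$ in place of $d$, as in~\cite{BFHHL}), and your choice $s=c\cdot 2^{D+1}+1$ makes the nondegenerate error strictly smaller than $|\mathcal W|^{-1}$, giving surjectivity.
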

We omit the proof here as it can be readily adapted from Theorem 4.1 in~\cite{BFHHL}.

\paragraph{Faithful composition.}

\begin{lemma}[Faithful composition lemma]\label{lem:complowdegree}Let $c,d,D \in \N$.
Let $\B=\{h_1,\ldots,h_c\}$ be a polynomial factor of degree at most $d$, and rank at least
$r^{(\ref{lem:degree})}(c,d,D)$. Let $\Gamma:\F^c \rightarrow \F$ be defined by $\Gamma(z)=\sum_{s \in S}a_s\prod_{i}z_i^{s_i}$, for some $S \subset \N^c$
and where $a_s \ne 0$ for all $s \in S$. Define $F:\F^n \to \F$ by
$$
F(x) = \sum_{s \in S} a_s \prod_{i=1}^c h_i(x)^{s_i}.
$$
Assume that $\deg(F) = D$. Then for every $s \in S$, $$\sum_{i=1}^c s_i \cdot \deg(h_i) \leq D.$$
\end{lemma}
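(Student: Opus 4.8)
The plan is to argue by contradiction: suppose some monomial $z^s = \prod_i z_i^{s_i}$ with $a_s \neq 0$ has $\sum_i s_i \deg(h_i) > D$, and show this forces $\deg(F) > D$, contradicting the hypothesis. The intuition is that if the factor $\B$ is regular enough, the polynomials $h_1,\ldots,h_c$ behave like ``generic'' or ``algebraically independent'' functions of their degrees, so no cancellation can occur across monomials of distinct multidegree; in particular the top-degree monomials survive. First I would pass to a convenient comparison point: apply Lemma~\ref{lem:degree} (Degree Preserving Lemma) with a carefully chosen alternative tuple $h_1',\ldots,h_c'$. Concretely, replace each $h_i$ by a ``generic'' polynomial $h_i'$ of the same degree $\deg(h_i)$ — for instance a sum of sufficiently many fresh monomials of that degree, chosen so that the $h_i'$ are genuinely algebraically independent over $\F$ (possible since we may take $n$ as large as we like, or work in an auxiliary polynomial ring and note degree is preserved under the substitution back). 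Since $\deg(h_i') \le \deg(h_i)$, Lemma~\ref{lem:degree} gives $\deg(G) \le D$ where $G(x) = \sum_{s \in S} a_s \prod_i h_i'(x)^{s_i}$.

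The key step is then to compute $\deg(G)$ directly. Because the $h_i'$ are algebraically independent (and each is homogeneous of its degree, or we may extract homogeneous parts), the monomials $\prod_i (h_i')^{s_i}$ for distinct $s \in \N^c$ have no algebraic relations, so their top-degree homogeneous parts are linearly independent as polynomials in $x$; hence no cancellation occurs among the leading terms coming from different $s$. Therefore
$$
\deg(G) = \max_{s \in S} \deg\!\l(\prod_{i=1}^c (h_i')^{s_i}\r) = \max_{s \in S} \sum_{i=1}^c s_i \deg(h_i'),
$$
using $a_s \neq 0$. If $\deg(h_i') = \deg(h_i)$ for all $i$ (which we arrange), this equals $\max_{s} \sum_i s_i \deg(h_i)$. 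Combined with $\deg(G) \le D$, this yields $\sum_i s_i \deg(h_i) \le D$ for every $s \in S$, as desired.

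The main obstacle is making the ``generic substitution'' rigorous: one must (i) choose the $h_i'$ so that degrees are exactly preserved and algebraic independence genuinely holds, and (ii) verify that the regularity hypothesis of Lemma~\ref{lem:degree} is the same one needed here — which is why the lemma is stated with rank at least $r^{(\ref{lem:degree})}(c,d,D)$, matching the Degree Preserving Lemma. One clean way around (i) is to work with $h_i'$ built from disjoint blocks of variables (so algebraic independence is immediate), taking $n' $ large; the degree of the composition is then read off combinatorially, and since Lemma~\ref{lem:degree} only controls $\deg(G) \le \deg(F) = D$ in the direction we need, no information is lost. A secondary point to check is the extraction of homogeneous parts when the $h_i$ are not homogeneous — here one restricts attention to the top-degree component of each $h_i'$, which is nonzero and still algebraically independent across $i$ by the disjoint-block construction, ensuring the leading term of the $s$-th summand is not killed.
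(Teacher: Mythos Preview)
Your approach is essentially the same as the paper's: the paper applies Lemma~\ref{lem:degree} with the explicit choice $h'_i=\prod_{j=1}^{d_i} x'_{i,j}$ on disjoint blocks of fresh variables (exactly your ``clean way around (i)''), so that $G(x')=\sum_{s\in S} a_s \prod_{i,j}(x'_{i,j})^{s_i}$ is a sum of distinct monomials indexed by $s$, whence $\deg(G)=\max_{s\in S}\sum_i s_i\deg(h_i)\le D$. Your discussion of genericity and homogeneous parts is unnecessary once you commit to this disjoint-monomial substitution, since distinct $s$ then give literally distinct monomials and no cancellation can occur.
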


\begin{proof}
Let $d_i=\deg(h_i)$. Define new variables $x'=\{x'_{i,j}: i \in [c], j \in [d_i]\}$ with $x'_{i,j} \in \F$. Define new polynomials $h'_i(x')=\prod_{j=1}^{d_i}x'_{i,j}$, where we note that $h'_1,\ldots,h'_c$ are defined over
disjoint sets of variables, and that $\deg(h'_i)=\deg(h_i)$. Define $G(x')=\Gamma(h'_1(x'),\ldots ,h'_c(x'))$. Since $\B$ has rank at least $r^{(\ref{lem:degree})}(c,d,D)$, we have
by Lemma~\ref{lem:degree} that $\deg(G) \leq D$. Expanding the definition of $\Gamma$ we have
$$
G(x')=\sum_{s \in S} a_s \prod_{i=1}^c \prod_{j=1}^{d_i} (x'_{i,j})^{s_i}.
$$
Note that each $s \in S$ corresponds to a unique monomial of degree $\sum_{i=1}^c d_i s_i$, and the monomials cannot cancel each other. The lemma follows.
\end{proof}

\paragraph{Hyperplane Restriction.}
Next, we show that the notion of rank is robust to hyperplane restrictions. More precisely, we have the following.
\begin{lemma}\label{lem:hyperplanerank}
Let $f \in \P_d(\F^n)$ such that $\rank(P) \geq r$. Let $H$ be a hyperplane in $\F^n$. Then the restriction of $f$ to $H$ has rank at least $r-d-1$.
\end{lemma}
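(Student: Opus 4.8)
The plan is to reduce to the case of a coordinate hyperplane and then split $f$ along that coordinate. First I would record two invariance facts about $\rank$: it is unchanged under an invertible affine change of the domain (if $f=\Gamma(h_1,\dots,h_r)$ and $\psi$ is an affine bijection of $\F^n$ then $f\circ\psi=\Gamma(h_1\circ\psi,\dots,h_r\circ\psi)$ with $\deg(h_i\circ\psi)=\deg(h_i)$, and symmetrically with $\psi^{-1}$; likewise $\deg(f\circ\psi)=\deg(f)$), and the same holds for the restriction of $f$ to an affine subspace once we fix an affine identification of that subspace with $\F^{n-1}$. Hence we may assume $H=\{x\in\F^n:x_n=0\}$, identified with $\F^{n-1}$ via $(x_1,\dots,x_{n-1})\mapsto(x_1,\dots,x_{n-1},0)$. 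We may also assume $d\ge 2$, since for $d=1$ the rank is $0$ or $\infty$ and the statement reduces to an elementary check (and in any case the claim is vacuous once $r\le d+1$, as $\rank\ge 0\ge r-d-1$).

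Next I would split $f$. Put $f_0(x_1,\dots,x_{n-1}):=f(x_1,\dots,x_{n-1},0)$; as a polynomial in $n-1$ variables this is exactly the restriction $f|_H$, and $\deg(f_0)\le\deg(f)\le d$. The polynomial $f(x)-f_0(x_1,\dots,x_{n-1})$ vanishes identically on $\{x_n=0\}$, hence is divisible by $x_n$ in $\F[x_1,\dots,x_n]$, so we may write $f(x)=f_0(x_1,\dots,x_{n-1})+x_n\cdot q(x)$ with $\deg(q)\le\deg(f)-1\le d-1$.

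Finally I would lift a low-rank representation of the restriction. Set $m:=\rank(f|_H)=\rank(f_0)$ and write $f_0=\Gamma(g_1,\dots,g_m)$ with $g_i:\F^{n-1}\to\F$ of degree $\le\deg(f_0)-1\le d-1$ and $\Gamma:\F^m\to\F$. Viewing each $g_i$ as a polynomial $\bar g_i\in\P_{d-1}(\F^n)$ independent of $x_n$ (its degree is unchanged), we obtain
$$f(x)=\Gamma(\bar g_1(x),\dots,\bar g_m(x))+x_n\cdot q(x),$$
which exhibits $f$ as $\tilde\Gamma(\bar g_1,\dots,\bar g_m,x_n,q)$ with $\tilde\Gamma(z,w_1,w_2):=\Gamma(z)+w_1w_2$, a function of $m+2$ polynomials each of degree $\le d-1$ (using $\deg(x_n)=1\le d-1$, which is where $d\ge 2$ enters). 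Hence $\rank(f)\le m+2$, i.e.\ $\rank(f|_H)\ge\rank(f)-2\ge r-2\ge r-d-1$.

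I do not expect a genuine obstacle here; the proof is entirely structural. The only point requiring care is the bookkeeping forced by the definition of $\rank$: that its building blocks must have degree one below that of the polynomial, that restriction to a hyperplane can only lower the degree (so the lifted $\bar g_i$ still have degree $\le d-1$, and the degenerate subcases $\deg(f|_H)\in\{0,1\}$ remain consistent with the bound), and that the single extra block $x_n$ has degree $1$ — which pins the minor hypothesis $d\ge 2$ and accounts for the (harmless) slack between the bound $r-2$ obtained above and the $r-d-1$ asserted in the statement.
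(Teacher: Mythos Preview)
Your argument is correct and in fact yields the sharper bound $\rank(f|_H)\ge r-2$, from which the stated $r-d-1$ follows immediately. Your route is genuinely different from the paper's. You split off the restriction algebraically, writing $f=f|_H+x_n\cdot q$ (this is a formal polynomial identity: $f-f|_H$ has zero $x_n^0$-part by construction, hence is divisible by $x_n$ in $\F[x_1,\dots,x_n]$), so that only two auxiliary polynomials $x_n,q$ are needed on top of a rank decomposition of $f|_H$. The paper instead considers the slice differences $f_i(x')=f(i,x')-f(0,x')$ for all $i\in\F$, observes that each is the restriction to $H$ of the directional derivative $D_{(i,0,\dots,0)}f$ and hence has degree $\le d-1$, and then uses the identity $D_{e_1}^{\,j}f=0$ for $j>d$ to show that $f_1,\dots,f_d$ already linearly span all the $f_i$; this yields $f=\Gamma'(x_1,f|_H,f_1,\dots,f_d)$ and the weaker inequality $\rank(f)\le\rank(f|_H)+d+1$. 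Your proof is both shorter and tighter; the only thing the paper's approach buys is that every auxiliary polynomial is a derivative of $f$, in line with the derivative-based framework used elsewhere (e.g.\ the ``moreover'' clause of Theorem~\ref{thm:blr} and the non-prime extension), but that feature is not needed for this lemma.
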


We note that the existing results prove a lower bound  of $r-|\F|$, but with a slight modification (which we show below) we are able to prove a lower bound of $r-d-1$.

\begin{proof}
Without loss of generality, let $H$ be defined by $x_1=0$. For $x \in \F^n$ let $x'=x_2 \ldots x_n \in \F^{n-1}$ so that $x=(x_1,x')$
and $f|_H(x')=f(0,x')$. Define $f_i:\F^{n-1} \to \F$ by
$$
f_i(x') = f(i,x') - f(0,x').
$$
Clearly, $f(x) = \Gamma(x_1, f|_H(x), f_1(x'), \ldots, f_{|\F|-1}(x'))$ for some explicit $\Gamma:\F^{|\F|+1} \to \F$. For $v_i=(i,0,\ldots,0) \in \F^n$, we have that $f_i$ is the restriction
of $D_{v_i} f$ to $H$, and hence $\deg(f_i) \le \deg(D_{v_i} f) \le d-1$. To conclude the proof, we show that for any $j>d$, $f_j(x')$ can be expressed a linear combination of $\{f_1(x'),\ldots,f_d(x')\}$. This will
imply that in fact, $f(x) = \Gamma'(x_1, f|_H, f_1(x'), \ldots, f_{d}(x'))$ for some $\Gamma':\F^{d+2} \to \F$ and, since $\deg(f_i)<\deg(f)$ for all $i$, will show that $\rank(f) \le \rank(f|_H)+d+1$.

To conclude the proof, fix $j>d$. We will show that $f_j(x')$ is a linear combination of $\{f_1(x'),\ldots,f_{j-1}(x')\}$, which by induction will show the claim.
As $\deg(f) \le d$ we have
$$
\underbrace{D_{x_1}D_{x_1}\ldots D_{x_1}}_{j \textrm{ times}}f(x)=0.
$$
Writing this explicitly, and restricting to $x=(0,x')$, we obtain that
$$
\sum_{i=0}^j (-1)^i {j \choose i} f(i,x')=0,
$$
which in turn implies that
$$
\sum_{i=1}^j (-1)^i {j \choose i} f_i(x')=0.
$$
Thus, $f_j(x')$ is a linear combination of $\{f_1(x'),\ldots,f_{j-1}(x')\}$, as claimed.
\end{proof}

\subsection{Algorithmic Aspects}
It is easy to see that the existential proof of the main theorem can be made algorithmic.
\begin{lemma}\label{lem:algoblr}Let $d,s\in \N$. There is a randomized algorithm that on input $f \in \P_d(\F^n)$ with $\l|\E[e(f(x))]\r| \geq |\F|^{-s}$, runs in time $O(|\F|^c \cdot n^d)$ and outputs $g_1,\ldots , g_c \in \P_{d-1}(\F^n)$, $c=c^{(\ref{thm:blr})}(d,s)$, and $\Gamma:\F^c \rightarrow \F$, such that $f(x)=\Gamma(g_1(x),\ldots g_c(x))$.
\end{lemma}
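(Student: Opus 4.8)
The plan is to revisit the existential proof of Theorem~\ref{thm:blr} and observe that every step is either (a) a deterministic computation on polynomials represented by their coefficient vectors, (b) a random sampling step, or (c) a search over a set whose size is bounded in terms of $|\F|$ and the (constant) parameters $d,s$ only. Concretely, the proof of Theorem~\ref{thm:blr} first invokes Lemma~\ref{lem:blr} to produce $c=c(d,s,t)$ derivatives $g_i=D_{h_i}f$ together with the interpolating function $\Gamma$; the proof of Lemma~\ref{lem:blr} itself samples $z=(z_1,\ldots,z_k)\in(\F^n)^k$ with $k=t+2s+3$, and the averaging argument that extracts a good $z$ can be made algorithmic by sampling polynomially many independent $z$'s and, for each, estimating $\Pr_{x}[\Gamma(D_{a\cdot z}f(x):a)=f(x)]$ via random sampling of $x$ — a Chernoff bound shows $\poly(n,|\F|^{c})$ samples suffice to find a $z$ that works with high probability. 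The reduction from the $|\F|^k-1$ derivatives $D_{a\cdot z}f$ to the constantly many $D_{b\cdot z}f$ with $b\in\calB$ is a finite linear-algebra computation (the coefficients $\lambda_{a,b}$ come from the explicit iterated-derivative identity), and computing a derivative $D_hf$ from the coefficient list of $f$ costs $O(n^d)$ field operations.

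Next I would handle the regularization step: the proof of Theorem~\ref{thm:blr} regularizes the factor $\B$ to an $r$-regular factor $\B'$ via Lemma~\ref{lem:reg}. Algorithmically, detecting whether a factor of constantly many bounded-degree polynomials is $r$-regular amounts to checking, for each of the $|\F|^{c}$ nonzero linear combinations $\sum a_i h_i$, whether it has low rank — and by Theorem~\ref{thm:blr} (applied inductively in lower degree, which is where the algorithm recurses), low rank is detected by checking $|\E_x[e(\sum a_i h_i(x))]| \ge |\F|^{-s}$, which is estimated by random sampling. If some combination is biased, the inductive call produces the lower-degree polynomials witnessing low rank, and we replace that polynomial, strictly decreasing a bounded potential; the number of iterations is bounded by $C^{(\ref{lem:reg})}_{r,d}(c)$, a constant. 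Each such operation touches only $\poly(|\F|^{c},n^d)$ data.

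Finally, once $\B'$ is regular, the proof shows $f$ is $\B'$-measurable, i.e.\ $f(x)=\Gamma'(h'_1(x),\ldots,h'_{c'}(x))$ for the appropriate $\Gamma'$. To recover $\Gamma'$ explicitly: for each atom value $b\in\F^{c'}$ we need the constant value of $f$ on that atom; by the equidistribution of atoms (Lemma~\ref{lem:atomsize}) every atom is nonempty and in fact has density $\approx\|\B'\|^{-1}$, so a modest number of random samples $x$ hits every atom with high probability, and evaluating $f$ at one such point per atom determines $\Gamma'(b)$. This costs $|\F|^{O(c')}\cdot n^d$ field operations. Collecting the bounds, the dominant terms are the $O(n^d)$ cost of each polynomial/derivative manipulation and the $|\F|^{O(c)}$ cost of the atom-indexed bookkeeping, giving the claimed $O(|\F|^c\cdot n^d)$ runtime after folding the constant parameters $d,s$ (and the induced constants $t$, $r$, $c'$) into the exponent $c$.

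\textbf{Main obstacle.} The one place that needs care is that the existential proof is recursive in the degree and each recursive level feeds a factor into the next via the regularity lemma; one must check that the recursion depth is $d=O(1)$, that each level only spawns a bounded number of lower-degree bias tests, and that the failure probabilities of all the sampling subroutines (estimating biases, extracting a good $z$, hitting all atoms) can be union-bounded over a $\poly(n,|\F|^{c})$-size collection of events while keeping the total success probability high — a routine but slightly tedious accounting that I would state as a claim and verify by tracking the branching factor of the recursion. There is no genuine mathematical difficulty beyond what Theorem~\ref{thm:blr} already provides; the content is purely the observation that "exists" can be replaced by "find by sampling" at every step.
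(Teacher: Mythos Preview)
Your plan is correct and is exactly the paper's approach: the paper gives no proof beyond pointing to Theorem~1.4 of \cite{bht_soda} (together with a remark on derandomization via the low-degree PRGs of Viola~\cite{viola2009sum} or Bogdanov~\cite{bog05}), and what you describe is precisely the algorithmization-by-sampling of the existential argument that that reference carries out. The one place to be slightly more careful is your regularization step: checking bias of the $|\F|^{c}$ linear combinations certifies that the factor is \emph{unbiased}, not literally that it has high \emph{rank} (Theorem~\ref{thm:blr} only gives the implication bias $\Rightarrow$ low rank, not the converse), so to guarantee the parallelepiped equidistribution of Lemma~\ref{lem:parallel} one either tests the richer but still $|\F|^{O_{d,c}(1)}$-sized family of $h_{\Lambda}$ from Lemma~\ref{lem:equiaffine} directly, or works with a Gowers-uniformity notion of regularity as in the cited reference --- a cosmetic adjustment, not a change of strategy.
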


The proof of the above follows similar to Theorem 1.4 in~\cite{bht_soda}. It can be derandomized using either Viola's generator~\cite{viola2009sum} or Bogdanov's generator~\cite{bog05} for low degree polynomials. To use Bogdanov's generator, one requires the field size to be at least superlogarithmic in $n$. For details, see the proof of Theorem 1.2 in~\cite{arnab}.

\subsection{Extension to non-prime fields}\label{sec:nprime}
In this section, we extend Theorem~\ref{thm:blr} to non-prime fields in the large characteristic case. When the field size is fixed, this was shown in \cite{BhBh15}. We focus on the setting when the field size can grow with $n$. Let $\K:=\F_{p^m}$ with $m \in \N$ and $p$ prime, where we assume that $p>d$.
Let $\chi:\K \to \C$ be a nontrivial additive character, that is $\chi(a+b)=\chi(a) \chi(b)$ for all $a,b \in \K$, and $\chi \not \equiv 1$.

\begin{thm}\label{thm:blrnprime}Let $d<p,s\in \N$. Let $f \in \P_d(\K^n)$. Suppose that $|\E_{x \in \K^n}[\chi(f(x)]| \geq |\K|^{-s}$. Then, there exist $g_1,\ldots g_c \in \P_{d-1}(\K^n)$, $c=c^{(\ref{thm:blrnprime})}(d,s)$, and $\Gamma:\K^c \rightarrow \K$, such that $f(x)=\Gamma(g_1(x),\ldots g_c(x))$.
\end{thm}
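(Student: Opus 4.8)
\textbf{Proof plan for Theorem~\ref{thm:blrnprime}.}
The plan is to reduce the non-prime case to the prime case (Theorem~\ref{thm:blr}) via the trace map $\Tr:\K \to \Fp$, exploiting that $\chi$ factors through the trace. Recall that every nontrivial additive character $\chi$ of $\K=\F_{p^m}$ has the form $\chi(a) = e(\Tr(\beta a))$ for some $\beta \in \K^*$, where $e(\cdot)$ is the standard additive character of $\Fp$ and $\Tr(a) = \sum_{i=0}^{m-1} a^{p^i}$. Absorbing $\beta$ into $f$ (replacing $f$ by $\beta f$, which does not change the degree since multiplication by a scalar is degree-preserving), we may assume $\chi(f(x)) = e(\Tr(f(x)))$. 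Now fix an $\Fp$-basis $\omega_1,\ldots,\omega_m$ of $\K$. Writing $x \in \K^n$ in coordinates, each of the $nm$ coordinate functions $\Fp^{nm} \to \Fp$ is $\Fp$-linear in the natural identification, and $f:\K^n \to \K$ becomes, after composing with these coordinates and projecting onto a fixed basis element, a tuple of polynomials $\tilde f_1,\ldots,\tilde f_m \in \P_d(\Fp^{nm})$ with $f(x) = \sum_j \tilde f_j(\tilde x)\omega_j$; the degree does not increase because the coordinate change is linear. Crucially, since $p > d$, the Frobenius powers appearing in $\Tr$ do not inflate the degree beyond $d$ in characteristic $p$ — here one uses that $\Tr(f(x))$, viewed as a polynomial over $\Fp$ in the $nm$ variables $\tilde x$, has degree at most $d$. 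Thus $F(\tilde x) := \Tr(f(x)) \in \P_d(\Fp^{nm})$ and $|\E_{\tilde x}[e(F(\tilde x))]| = |\E_x[\chi(f(x))]| \ge |\K|^{-s} = p^{-ms} = |\Fp|^{-ms}$.

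Next I would apply Theorem~\ref{thm:blr} to $F$ over $\Fp$ with bias parameter $ms$, obtaining $c = c^{(\ref{thm:blr})}(d, ms)$ polynomials $G_1,\ldots,G_c \in \P_{d-1}(\Fp^{nm})$ and $\Gamma_0:\Fp^c \to \Fp$ with $F = \Gamma_0(G_1,\ldots,G_c)$, each $G_i$ a derivative $D_{h_i}F$ of $F$. The issue is that this only controls $\Tr(f(x))$, i.e.\ one $\Fp$-linear functional of $f$, not $f$ itself. To recover all of $f$, I would repeat the argument after twisting: for each $\gamma$ in a basis of $\K$ over $\Fp$, the polynomial $\gamma f$ still has degree $d$, and $\Tr(\gamma f(x))$ need not be biased — but this is where one must be more careful. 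The cleaner route, following the structure used for the fixed-field non-prime case, is: the derivatives $D_{h}f$ of $f$ over $\K$ lie in $\P_{d-1}(\K^n)$, and one shows directly (as in the proof of Lemma~\ref{lem:blr} and Theorem~\ref{thm:blr}, but now over $\K$) that $f$ is a function of boundedly many such $\K$-derivatives. Concretely, one reruns the entire inductive machinery of Section~\ref{sec:equal} verbatim over $\K$ instead of $\Fp$: the regularity lemma (Lemma~\ref{lem:reg}), equidistribution of atoms (Lemma~\ref{lem:atomsize}), near-orthogonality of affine forms (Lemma~\ref{lem:equiaffine}), and equidistribution of parallelepipeds (Lemma~\ref{lem:parallel}) all hold over any field of characteristic $> d$, with $|\F|$ replaced by $|\K|$ throughout — the only place where primality was used is the very base of this tower, namely the bias-implies-low-rank-approximation step, which we have just supplied via the trace reduction and Theorem~\ref{thm:blr}. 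In other words: (i) the approximation $\Pr_x[f(x) \ne \Gamma'(g_1(x),\ldots,g_c(x))] \le |\K|^{-t}$ with $g_i$ being $\K$-derivatives of $f$ follows by the same Chebyshev argument as Lemma~\ref{lem:blr}, which makes no use of primality; and (ii) the upgrade from approximate to exact computation is the inductive proof of Theorem~\ref{thm:blr}, which only invokes the structural lemmas above, all of which survive over $\K$ once one knows the inverse theorem for the Gowers norm over $\K$ — but that, via Theorem~\ref{thm:invGow}'s proof, reduces to the bias statement, closing the loop.

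So the honest skeleton is an induction on $d$ exactly paralleling the proof of Theorem~\ref{thm:blr}: the base case $d=1$ is orthogonality of $\K$-linear forms; the inductive step uses Lemma~\ref{lem:blr} (which holds verbatim over $\K$), the regularity lemma over $\K$, and the parallelepiped equidistribution over $\K$ (whose proofs need characteristic $> d$, which we have since $p > d$, precisely as Taylor's theorem was invoked in Theorem~\ref{thm:invGow}). The trace reduction above is needed only to verify the one genuinely new ingredient — that the approximation step produces bounded rank independent of $|\K|$ — by pulling back to the already-proved prime-field Theorem~\ref{thm:blr} in $nm$ variables.

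\textbf{Main obstacle.} The delicate point is ensuring that passing between $\K$ and $\Fp^{nm}$ (or running derivatives over $\K$) does not inflate degrees, which is exactly why the hypothesis $p > d$ is imposed: in characteristic $p \le d$ the Frobenius $a \mapsto a^p$ and the identity $\Tr(f)$ can behave like operations of degree $p$, breaking the degree bookkeeping, and iterated derivatives of length $> d$ need not vanish. Granting $p > d$, the degree accounting is routine, and the remaining work is purely the verification that each lemma of Section~\ref{sec:equal} is insensitive to whether the field is prime — which, as the excerpt already notes for those lemmas ("taking care of the dependence on $|\F|$ now"), it is.
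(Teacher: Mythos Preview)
There is a genuine gap in your plan, and it sits precisely at your claim (i). The Chebyshev step in Lemma~\ref{lem:blr} (Claim~\ref{claim:blrchebyshev}) recovers $f(x)$ from its derivatives by rounding $\frac{1}{|\F|^k-1}\sum_{a\ne 0}\chi(D_{a\cdot z}f(x))\approx \mu\,\chi(-f(x))$ to the nearest value of the character; this works over a prime field because $e:\F_p\to\C$ is \emph{injective}. Over $\K=\F_{p^m}$ the character $\chi=e\circ\Tr$ is not injective---it takes only $p$ values and has kernel of size $p^{m-1}$---so the rounding step can recover only $\Tr(f(x))\in\F_p$, not $f(x)\in\K$. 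Hence Lemma~\ref{lem:blr} run over $\K$ yields $\Pr_x[\Tr(f(x))\ne \Gamma'(\Tr(g_1(x)),\ldots,\Tr(g_c(x)))]\le|\K|^{-t}$ (this is exactly the paper's Lemma~\ref{lem:blrnprime_approx}), \emph{not} the approximation of $f$ itself that your step (i) asserts. Feeding this into the Section~\ref{sec:equal} machinery over $\K$ would at best establish that $\Tr(f)$ has low $\F_p$-rank---which, as you yourself noted two sentences earlier, controls only one $\F_p$-linear functional of $f$. You correctly identified the obstruction (``$\Tr(\gamma f)$ need not be biased'') and then proposed a route that runs straight into the same wall.

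The missing idea, which the paper supplies, is to pass to the $d$-fold derivative polynomial $\Delta f(y_1,\ldots,y_d)=D_{y_1,\ldots,y_d}f(0)$ and exploit its \emph{multilinearity over $\K$}: scaling $y_1\mapsto ay_1$ scales $\Delta f$ by $a$. One first shows $\Tr(\Delta f)$ has low rank (via Gowers--Cauchy--Schwarz for the bias, then the prime-field theorem on $\F_p^{mnd}$), with factors $g_i$ that are derivatives of $\Delta f$ and hence multilinear in all but one block $y_{\ell_i}$. Then for every $a\in\K$ the substitution $y_1\mapsto ay_1$ gives $\Tr(a\,\Delta f(y))=\Gamma'(\Tr(a_1 g_1(y)),\ldots,\Tr(a_c g_c(y)))$ with the \emph{same} $g_i$'s (each $a_i\in\{1,a\}$). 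Since the family $\{\Tr(az):a\in\K\}$ determines $z\in\K$, this upgrades to $\Delta f=\Gamma(g_1,\ldots,g_c)$ as a $\K$-valued identity, and finally $f(x)=\frac{1}{d!}\Delta f(x,\ldots,x)+f_{<d}(x)$ (here is where $p>d$ is genuinely used, to invert $d!$) gives $\rank(f)\le c+1$. The degree bookkeeping you flag as the main obstacle is real but routine; the non-injectivity of $\chi$ on $\K$ is the actual crux.
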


We prove Theorem~\ref{thm:blrnprime} in this subsection.
The proof follows closely the proof for the prime field case, and we highlight the differences. First, we note that any additive character
$\chi:\K \to \C$ can be factored as $\chi(x)=e(a \Tr(x))$ where $\Tr:\K \to \F_p$ is the trace map and $e:\F_p \to \C$ is given as usual by $e(x)=\exp(2 \pi i x/p)$. A nontrivial character
corresponds to $a \ne 0$. We may assume without loss of generality that $a=1$ by replacing $f$ with $af$, since $a \Tr(f) = \Tr(a f)$ and since
if $af$ has low rank then so does $f$. So, from now on
we assume that $\chi(x)=e(\Tr(x))$.

We first show that $\Tr(f):\K^n \to \F_p$ can be well approximated by the traces of a few lower degree polynomials.

\begin{lemma}\label{lem:blrnprime_approx}Under the conditions of Theorem~\ref{thm:blrnprime}, there exist $g_1,\ldots g_c \in \P_{d-1}(\K^n)$, $c=c(d,s,t) = \binom{d+t+2s+3}{d}$, and $\Gamma:\F_p^c \rightarrow \F_p$, such that $$\Pr_{x \in \K^n}[\Tr(f(x)) \neq \Gamma(\Tr(g_1(x)),\ldots,\Tr(g_c(x)))]\leq |\K|^{-t}.$$
\end{lemma}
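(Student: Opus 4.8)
The plan is to mimic the proof of Lemma~\ref{lem:blr} (bias implies low rank approximation over prime fields), but now working with the map $\Tr(f):\K^n \to \F_p$ in place of $f$, and exploiting that $\Tr$ is $\F_p$-linear so that it commutes with taking derivatives: $\Tr(D_h f) = D_h(\Tr \circ f)$. First I would record the analogue of Claim~\ref{claim:der}: if $\mu = \E_{x \in \K^n}[\chi(f(x))] = \E_{x \in \K^n}[e(\Tr(f(x)))]$, then for all $x$, $\mu \cdot e(-\Tr(f(x))) = \E_{y \in \K^n}[e(\Tr(D_y f(x)))] = \E_{y \in \K^n}[e(D_y(\Tr \circ f)(x))]$, by exactly the same change of variables $y \mapsto y - x$. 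The hypothesis gives $|\mu| \ge |\K|^{-s}$.

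Next I would set up the second-moment argument. Fix $x \in \K^n$, pick $z = (z_1,\ldots,z_k) \in (\K^n)^k$ uniformly with $k = t + 2s + 3$, and for $a \in \F_p^k \setminus \{0\}$ define $W_a(z) := e(\Tr(D_{a \cdot z} f(x)))$, where now $a \cdot z = \sum a_i z_i \in \K^n$ using the natural embedding $\F_p \hookrightarrow \K$. The crucial observations carry over: (i) for $a \ne 0$, $a \cdot z$ is uniform in $\K^n$ since $a_1$ (say) is a nonzero scalar in $\F_p \subset \K$, so $\E_z[W_a(z)] = \E_y[e(\Tr(D_y f(x)))] = \mu\, e(-\Tr(f(x)))$; (ii) the $W_a(z)$ are pairwise independent over the choice of $z$ (this uses that $a,a'$ linearly independent over $\F_p$ implies $(a\cdot z, a'\cdot z)$ is uniform over $(\K^n)^2$); and (iii) $|e(\ell) - e(m)| \ge p^{-1} \ge |\K|^{-1}$ for distinct $\ell,m \in \F_p$. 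Running Chebyshev's inequality exactly as in Equation~\eqref{eq:chebyshev1}, with $p^k - 1$ summands (note: $a$ ranges over $\F_p^k \setminus \{0\}$, not $\K^k$), gives that for every $x$, with probability $\ge 1 - |\K|^{-t}$ over $z$ the empirical average $\frac{1}{p^k-1}\sum_{a \ne 0} W_a(z)$ is within $\frac{1}{2|\K|^{s+1}}$ of $\mu\, e(-\Tr(f(x)))$, hence by the same rounding function $\Gamma$ (now mapping into $\F_p$) we recover $\Tr(f(x))$ from the values $\{\Tr(D_{a\cdot z} f(x)) : a \in \F_p^k \setminus \{0\}\}$. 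Averaging over $x$ and then fixing a good $z$ gives one fixed $z$ such that $\Tr(f(x)) = \Gamma(\Tr(D_{a\cdot z}f(x)) : a \in \F_p^k\setminus\{0\})$ for a $1 - |\K|^{-t}$ fraction of $x$.

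Finally I would reduce the number of derivatives from $p^k - 1$ to a constant, as in the last claim of Section~\ref{sec:approx}. Setting $\calB = \{b \in \F_p^k : \sum_j |b_j| \le d\}$, I claim that for any $a \in \F_p^k$, $D_{a\cdot z} f = \sum_{b \in \calB} \lambda_{a,b} D_{b\cdot z} f$ with $\lambda_{a,b} \in \F_p$; the proof is the identical induction on $|a| = \sum |a_i|$, using that $f$ has degree $d$ (in the usual multi-degree sense, and $d < p$ so the relevant iterated derivatives of order $>d$ vanish identically) and the telescoping identity $\sum_{c \in \{0,1\}^m}(-1)^{\sum c_i} D_{c\cdot y}f(x) = 0$ for $m > d$. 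Applying $\Tr$ (which is $\F_p$-linear) to this identity turns it into $\Tr(D_{a\cdot z} f(x)) = \sum_{b \in \calB} \lambda_{a,b}\, \Tr(D_{b\cdot z} f(x))$, so the rounding function only needs to read off $|\calB| \le \binom{d+k}{d} = \binom{d+t+2s+3}{d}$ values. Setting $g_i = D_{h_i} f$ for the corresponding $h_i = b^{(i)} \cdot z$ (so $g_i \in \P_{d-1}(\K^n)$) and absorbing the linear combinations into a new function $\Gamma : \F_p^c \to \F_p$ finishes the proof. The only real points requiring care — and the step I would flag as the main obstacle — are getting the pairwise-independence and uniformity statements right when the ``scalars'' $a_i$ live in $\F_p$ but the vectors $z_i$ live in $\K^n$; once one restricts $a$ to $\F_p^k$ everything goes through since $\F_p$ is a field and $\K^n$ is an $\F_p$-vector space, and the derivative/trace commutation does the rest.
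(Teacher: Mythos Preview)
Your overall structure matches the paper's, but there is a genuine quantitative gap in the Chebyshev step. By restricting the averaging coefficients to $a \in \F_p^k \setminus \{0\}$ you have only $p^k - 1$ samples, and the second-moment bound becomes
\[
\Pr_z\Bigl[\Bigl|\tfrac{1}{p^k-1}\textstyle\sum_{a \ne 0} W_a(z) - \mu\,e(-\Tr f(x))\Bigr| \ge \tfrac{1}{2|\K|^{s+1}}\Bigr] \;\le\; \frac{4|\K|^{2s+2}}{p^k - 1}.
\]
With $|\K| = p^m$ and $k = t+2s+3$ the right-hand side is roughly $p^{m(2s+2)-(t+2s+3)}$, which is \emph{not} $\le |\K|^{-t}=p^{-mt}$ once $m \ge 2$. (Using the sharper separation $|e(\ell)-e(\ell')|\ge p^{-1}$ in place of $|\K|^{-1}$ does not rescue this: you still need $p^k \gtrsim |\K|^{2s+t}$.) In other words, the very restriction you introduced to make the reduction-to-$\calB$ step go through verbatim is what kills the variance bound as soon as $\K$ is a nontrivial extension of $\F_p$.

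The paper's fix is to run the proof of Lemma~\ref{lem:blr} with $\K$ in place of $\F$ throughout --- so $a$ ranges over $\K^k\setminus\{0\}$, giving $|\K|^k-1$ samples and the claimed $|\K|^{-t}$ bound with $k=t+2s+3$. The three properties it isolates (bias $\ge |\K|^{-s}$; only $p$ possible values; annihilation by $d{+}1$ additive derivatives) are exactly what the rounding step and the derivative-reduction step require. Your derivation of the $\calB$-reduction is correct for $a\in\F_p^k$; when $a\in\K^k$ the map $|\cdot|$ is no longer available and that part has to be reworked (for instance by expanding each $a_i$ in an $\F_p$-basis of $\K$, which effectively replaces $k$ by $mk$ in the reduction step).
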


\begin{proof}
The proof is identical to the proof of Lemma~\ref{lem:blr}, applied to the function $f'=\Tr(f):\K^n \to \F_p$. The reader can verify that the proof of
Lemma~\ref{lem:blr} does not really require that $\K$ is a prime field. Instead, in only relies on three properties of $f'$, which are true in the prime
field case, but are also true in the current case. These are:
(i) $f'(x)$ has bias at least $|\K|^{-s}$; (ii) $f'(x)$ takes few possible values ($p$ in our case); and (iii)
$f'(x)$ is annihilated by $d+1$ derivatives. Conditions (i) and (ii) follow by our assumptions.
Condition (iii) is true since
$$
D_{y_1,\ldots,y_{d+1}} f' = D_{y_1,\ldots,y_{d+1}} \Tr(f) = \Tr( D_{y_1,\ldots,y_{d+1}} f) = \Tr(0)=0.
$$
The functions $g'_1,\ldots,g'_c$ obtained in Lemma~\ref{lem:blr} are
derivatives of $f'$, that is $g'_i = D_{h_i} f'$ for some $h_i \in \K^n$. However since $f'=\Tr(f)$ we have that
$g'_i=D_{h_i} \Tr(f) = \Tr(D_{h_i} f)$, hence we can take $g_i = D_{h_i} f \in \P_{d-1}(\K^n)$.
\end{proof}

We next show that this implies that $\Tr(f)$ has low rank. Moreover, the factors are all traces of low degree polynomials over $\K^n$.

\begin{lemma}\label{lem:blrnprime_trace} Under the conditions of Theorem~\ref{thm:blrnprime}, there exist $g_1,\ldots g_c \in \P_{d-1}(\K^n)$, $c=c^{(\ref{lem:blrnprime_trace})}(d,s)$, and $\Gamma:\F_p^c \rightarrow \F_p$, such that $\Tr(f)(x)=\Gamma(\Tr(g_1(x)),\ldots, \Tr(g_c(x)))$.
\end{lemma}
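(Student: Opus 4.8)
The plan is to reproduce, over the prime field $\F_p$ in $mn$ variables, the passage from approximate to exact computation carried out in Section~\ref{sec:equal} for Theorem~\ref{thm:blr}, while performing all the ``inverse'' reasoning (regularization and bias $\Rightarrow$ rank) over $\K^n$ rather than over $\F_p^{mn}$, so that no quantity depends on $m=\log_p|\K|$. The argument is by induction on $d$, where we assume that Theorem~\ref{thm:blrnprime} and its consequences --- an analogue of the regularity lemma (Lemma~\ref{lem:reg}), of near orthogonality (Lemma~\ref{lem:equiaffine}) and of equidistribution of parallelepipeds (Lemma~\ref{lem:parallel}) --- hold for $\K$-polynomials of degree at most $d-1$. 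The base case $d=1$ is trivial: $\Tr(f):\F_p^{mn}\to\F_p$ is then an affine function with $|\E[e(\Tr f)]|\ge|\K|^{-s}>0$, hence constant.

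For the inductive step, we begin with Lemma~\ref{lem:blrnprime_approx}, which gives $g_1,\dots,g_c\in\P_{d-1}(\K^n)$ with $c$ bounded and a $\Gamma$ such that $\Tr(f)=\Gamma(\Tr(g_1),\dots,\Tr(g_c))$ on a $1-|\K|^{-t}$ fraction of $\K^n$, for a parameter $t$ to be chosen at the end. We then regularize the factor $\{g_1,\dots,g_c\}$ \emph{over $\K^n$}, using the degree-$(d-1)$ analogue of Lemma~\ref{lem:reg}, to obtain a factor $\B_\K=\{g'_1,\dots,g'_{c'}\}$ over $\K^n$ with $c'$ bounded, $\deg(g'_j)\le d-1$, which semantically refines $\{g_i\}$ and has $\K$-regularity rank as large as we later need; since the bound in Theorem~\ref{thm:blrnprime} at degree $d-1$ is independent of $|\K|$, making this rank large costs only an $m$-independent amount. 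Composing, $\Tr(f)=\Gamma'(\Tr(g'_1),\dots,\Tr(g'_{c'}))$ on a $1-|\K|^{-t}$ fraction of $\K^n$. We now regard $\B_{\F_p}:=\{\Tr(g'_1),\dots,\Tr(g'_{c'})\}$ as a polynomial factor over $\F_p^{mn}$ of degree at most $d-1$ with $c'$ polynomials; its atoms coincide with those of $\B_\K$.

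The crux is that $\B_{\F_p}$ behaves like a regular factor over $\F_p^{mn}$, with a parameter independent of $m$. Because $\Tr$ is $\F_p$-linear, every combination of the $\Tr(g'_j)$ that occurs in the Green--Tao equidistribution arguments --- an $\F_p$-linear combination in the atom-size estimate (Lemma~\ref{lem:atomsize}), and a combination along an affine system in the parallelepiped estimate (Lemmas~\ref{lem:equiaffine} and~\ref{lem:parallel}) --- equals the trace of the corresponding $\K$-linear combination of the $g'_j$, composed with the same affine forms (whose coefficients lie in $\F_p\subseteq\K$). By the regularity of $\B_\K$ together with the degree-$(d-1)$ cases of Theorem~\ref{thm:blrnprime} and of the near-orthogonality lemma, each such $\K$-combination either vanishes identically or has $\K$-bias at most $|\K|^{-s'}$; pushing through $\Tr$, the associated $\F_p$-exponential sum is then either $0$ or has modulus at most $|\K|^{-s'}\le|\F_p|^{-s'}$. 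The point is that this is a \emph{bias} bound, which transfers losslessly from $\K$ to $\F_p$, rather than a rank bound, which would cost a factor of $m$; and a bias bound of this form is exactly what the atom-size and parallelepiped-equidistribution lemmas use about a regular factor.

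Granting this, we re-run the proof of Theorem~\ref{thm:blr} over $\F_p^{mn}$ for the function $\Tr(f)$ and the factor $\B_{\F_p}$: for $t$ large enough (depending on $d$ and $c'$), a $1-|\F_p|^{-t/4}$ fraction of the atoms of $\B_{\F_p}$ are good, meaning the approximation is exact on all but a $|\F_p|^{-t/4}$ fraction of each; on a good atom one locates a $(d+1)$-dimensional parallelepiped all of whose non-base vertices lie in the good part of the atom, and since $\Tr(D_{y_1,\dots,y_{d+1}}f)=\Tr(0)=0$ the telescoping identity forces $\Tr(f)$ to be constant on that atom; finally, for an arbitrary atom and base point one finds a parallelepiped whose non-base vertices lie in good atoms and whose vector of $\B_{\F_p}$-values satisfies the parallelepiped constraints, which pins down $\Tr(f)$ as a function of $\B_{\F_p}$. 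Thus $\Tr(f)=\Gamma''(\Tr(g'_1),\dots,\Tr(g'_{c'}))$ with $c'$ bounded and $g'_j\in\P_{d-1}(\K^n)$, and we set $c^{(\ref{lem:blrnprime_trace})}(d,s):=c'$. The main obstacle throughout is precisely this $m$-dependence: a naive reduction to the prime-field theorem over $\F_p^{mn}$ would only supply the bias bound $|\E[e(\Tr f)]|\ge|\F_p|^{-ms}$ and hence bounds growing with $ms$; the remedy, as above, is to confine every inverse/regularization step to $\K^n$ and export only bias estimates to $\F_p^{mn}$, where the approximation-to-exact machinery never needs more.
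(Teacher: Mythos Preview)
Your argument is correct but takes a more elaborate route than the paper's. The paper simply identifies $\K \cong \F_p^m$, views $\Tr(f) \in \P_d(\F_p^{mn})$, and applies (the proof of) Theorem~\ref{thm:blr} over $\F_p^{mn}$, with Lemma~\ref{lem:blrnprime_approx} standing in for the approximation step Lemma~\ref{lem:blr}. Two short observations then finish the job. First, the $m$-dependence you worry about does not arise: the bias parameter $s$ enters the proof of Theorem~\ref{thm:blr} only through Lemma~\ref{lem:blr}, and that step has already been replaced by Lemma~\ref{lem:blrnprime_approx}, whose output count $c(d,s,t)$ is independent of $|\K|$; the remaining approximation-to-exact machinery over $\F_p^{mn}$ depends only on $d$, $c$, and $t$, not on the number of variables $mn$. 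Second, the trace structure is preserved automatically because, as recorded in the ``moreover'' clause of Theorem~\ref{thm:blr}, every polynomial produced by its proof is a derivative (or iterated derivative) of the input, and derivatives commute with $\Tr$.

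Your route --- regularizing the factor over $\K^n$ via an inductive appeal to Theorem~\ref{thm:blrnprime} at degree $d-1$, then exporting only bias estimates to $\F_p^{mn}$ --- also works and is arguably more self-contained, since it does not lean on the ``derivatives'' clause of Theorem~\ref{thm:blr}. The cost is that you must set up a separate induction on $d$ for Theorem~\ref{thm:blrnprime} and develop the degree-$(d-1)$ $\K$-analogues of Lemmas~\ref{lem:reg}, \ref{lem:equiaffine}, and~\ref{lem:parallel}; the paper sidesteps all of this by reusing the already-proven prime-field theorem wholesale.
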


\begin{proof}
Identify $\K \cong \F_p^m$. Under this identification, we can view $\Tr(f)$ as a polynomial in $\P_d(\F^{mn})$. By Lemma~\ref{lem:blrnprime_approx},
it can be well approximated by a function of $\Tr(g_1),\ldots,\Tr(g_c) \in \P_{d-1}(\F^{mn})$. Hence, by Theorem~\ref{thm:blr}, $\Tr(f)$ has low rank. Moreover,
all the polynomials obtained in the proof of Theorem~\ref{thm:blr} are derivatives of $\Tr(f),\Tr(g_1),\ldots,\Tr(g_c)$. But as we already observed, derivatives and traces commute,
hence all the polynomials in the factorization obtained in Theorem~\ref{thm:blr} are traces of derivatives of $f$, as claimed.
\end{proof}

To conclude, we show that since $\Tr(f)$ can be factored as a function of a few traces of low degree polynomials, then $f$ must have a low rank. To this end, define
$\Delta f \in \P_d(\K^{nd})$ to be the derivative polynomial of $f$, that is
$$
\Delta f(y_1,\ldots,y_d) = D_{y_1,\ldots,y_d} f (x) = D_{y_1,\ldots,y_d} f (0).
$$
Here $y_1,\ldots,y_d \in \K^n$. We set $y=(y_1,\ldots,y_d) \in \K^{nd}$.

\begin{lemma}\label{lem:blrnprime_deriv}
Under the conditions of Theorem~\ref{thm:blrnprime}, there exist $g_1,\ldots g_c \in \P_{d-1}(\K^{nd})$, $c=c^{(\ref{lem:blrnprime_deriv})}(d,s)$, and $\Gamma:\K^c \rightarrow \K$, such that $\Delta f(y)=\Gamma(g_1(y)),\ldots, g_c(y))$.
\end{lemma}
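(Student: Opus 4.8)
The goal is to deduce from Lemma~\ref{lem:blrnprime_trace} --- which says $\Tr(f)$ is a function of a few traces of low-degree polynomials over $\K^n$ --- that the derivative polynomial $\Delta f(y) = D_{y_1,\ldots,y_d}f(0)$, viewed as a polynomial over $\K^{nd}$, has low rank \emph{as a polynomial over $\K$} (not just its trace). The plan is to exploit the key structural feature of $\Delta f$: it is \emph{multilinear} and \emph{multiadditive} in the blocks $y_1,\ldots,y_d$, so it is determined by a symmetric $d$-linear form $B:\K^n \times \cdots \times \K^n \to \K$ via $\Delta f(y_1,\ldots,y_d) = d!\, B(y_1,\ldots,y_d)$ (using $d<p$ so $d!$ is invertible). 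The advantage of working with $\Delta f$ rather than $f$ directly is precisely this linearity, which lets one promote trace-level information to $\K$-level information.

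First I would apply Lemma~\ref{lem:blrnprime_trace} to get $\Tr(f)(x) = \Gamma(\Tr(g_1(x)),\ldots,\Tr(g_c(x)))$ with $g_i \in \P_{d-1}(\K^n)$. Taking $d$-th iterated derivatives of both sides and using that derivatives commute with $\Tr$, one obtains that $\Tr(\Delta f(y)) = \Tr(D_{y_1,\ldots,y_d}f(0))$ is a function of $\Tr$ of iterated derivatives of the $g_i$'s, each of which has degree $\le d-1$; after reindexing (and using the identity $D_{h_1,h_2}f = D_{h_1+h_2}f - D_{h_1}f - D_{h_2}f$ to reduce to single derivatives) we get $\Tr(\Delta f)(y) = \Gamma'(\Tr(\tilde g_1(y)),\ldots,\Tr(\tilde g_{c'}(y)))$ with $\tilde g_j \in \P_{d-1}(\K^{nd})$. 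The crucial point is now to show that $\Tr(\Delta f)$ having low rank forces $\Delta f$ itself to have low rank over $\K$. The idea: consider the $m$-dimensional $\F_p$-space of characters $\{\Tr(\alpha \cdot) : \alpha \in \K\}$. Applying Lemma~\ref{lem:blrnprime_trace} to $\alpha f$ for each $\alpha$ (or directly using that each $\Tr(\alpha \Delta f)$ is biased whenever $\Delta f$ correlates with $e(\Tr(\alpha\cdot))$ of something) and combining, one shows $\Delta f$ is measurable with respect to a bounded polynomial factor over $\K$. Concretely: since $\Delta f$ is multiadditive, restricting $y_i = a_i w$ for scalars $a_i \in \K$ and a single vector $w \in \K^n$ gives $\Delta f(a_1 w,\ldots,a_d w) = (\prod a_i)\, d!\, B(w,\ldots,w) = (\prod a_i)\, f_{\mathrm{top}}(w)$ where $f_{\mathrm{top}}$ is (a scalar multiple of) the top-degree homogeneous part of $f$; this is the same trick used in the commented-out proof of Lemma~\ref{lem:equiaffine} (Claim~\ref{claim:hom}). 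So low rank of $\Delta f$ would in turn give low rank of $f_{\mathrm{top}}$, and then $f = f_{\mathrm{top}} + (\text{lower degree})$ has rank at most $\rank(\Delta f) + 1$, which is what the overall Theorem~\ref{thm:blrnprime} needs. But for \emph{this} lemma, I just need to produce the $g_i$'s for $\Delta f$.

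The cleanest route to the present statement: $\Delta f \in \P_d(\K^{nd})$ and we know $\Tr(\Delta f)$ has low rank over $\F_p$. I would invoke the analogue of the standard fact (used in \cite{BhBh15} and in higher-order Fourier analysis over $\K$) that if $\Tr(P)$ has low rank then $P$ has low rank over $\K$, provided $\deg P < p$. One proves this by writing $\Tr(\Delta f)(y) = \Gamma'(\Tr(\tilde g_1(y)),\ldots)$, then noting that by linearity of $\Tr$ over $\F_p$ the polynomial $\Delta f(y) - \sum_j \beta_j \tilde g_j(y)$ has trace-bias $1$ for a suitable choice — more carefully, one decomposes: for each $\alpha \in \K^\times$, $\Tr(\alpha \Delta f)$ is again of this form (apply the whole argument to $\alpha f$), so $\Delta f$ is constant on the common atoms of the factor generated by the union over $\alpha$ of the $\tilde g_j^{(\alpha)}$'s; this is a $\K$-polynomial factor of bounded size, since $\alpha$ ranges over a set of size $|\K|$ but we only need $m$ basis elements $\alpha_1,\ldots,\alpha_m$ of $\K/\F_p$, giving $mc'$ polynomials total, all of degree $\le d-1$. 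Setting $\{g_1,\ldots,g_c\}$ to be this union (degree $\le d-1$ over $\K^{nd}$, with $c = c^{(\ref{lem:blrnprime_deriv})}(d,s) = m\cdot c'$ — wait, this depends on $m$; to remove the $m$-dependence one uses the equidistribution/regularization machinery exactly as in the prime case to collapse to a bounded number), we get $\Delta f(y) = \Gamma(g_1(y),\ldots,g_c(y))$.

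The main obstacle I anticipate is precisely removing the dependence on $m = [\K:\F_p]$ from the rank bound: the naive argument ("take a basis of $\K$ over $\F_p$, apply the trace result $m$ times") gives $c$ growing with $m$, whereas the statement demands $c = c^{(\ref{lem:blrnprime_deriv})}(d,s)$ independent of $m$ (and of $n$). Resolving this requires replaying the regularization argument from the proof of Theorem~\ref{thm:blr} directly over $\K$: regularize the factor $\{g_j^{(\alpha_\ell)}\}$ to an $r$-regular $\K$-polynomial factor $\B'$ of bounded size, show via the $\K$-analogues of Lemma~\ref{lem:atomsize} and Lemma~\ref{lem:parallel} (equidistribution of atoms and parallelepipeds, whose proofs go through verbatim over $\K$ once $\Tr(\alpha\cdot)$ is biased $\iff$ the polynomial has low rank, which is Lemma~\ref{lem:blrnprime_trace}) that $\Delta f$ is $\B'$-measurable, and conclude $\rank_\K(\Delta f) \le |\B'|$, a bound depending only on $d$ and $s$. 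The multiadditivity of $\Delta f$ is what guarantees the relevant iterated derivatives used in the parallelepiped step stay inside $\P_{d-1}$, so the induction structure of the prime-field proof is inherited cleanly.
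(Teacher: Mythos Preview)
You correctly identify multilinearity of $\Delta f$ in the blocks $y_1,\ldots,y_d$ as the structural feature that should let one ``promote trace-level information to $\K$-level information,'' but you never actually cash this in; instead you fall back on applying the trace lemma once for each of $m=[\K:\F_p]$ basis elements and then propose to regularize away the resulting $m$-dependence. That last step does not work. Regularization (Lemma~\ref{lem:reg}) only \emph{increases} the number of polynomials in a factor, so starting from $mc'$ polynomials you cannot regularize down to a bound independent of $m$. Moreover, the $\K$-analogues of Lemma~\ref{lem:atomsize} and Lemma~\ref{lem:parallel} that you invoke require the implication ``biased $\Rightarrow$ low $\K$-rank,'' which is exactly Theorem~\ref{thm:blrnprime}, not Lemma~\ref{lem:blrnprime_trace}; the latter only gives that $\Tr(f)$ is a function of a few \emph{traces}, which is a weaker statement and not enough to run the atom-equidistribution argument over $\K$ without circularity.

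The paper's proof exploits multilinearity directly and avoids all of this. One applies Lemma~\ref{lem:blrnprime_trace} once, to $\Delta f$ itself (its bias being $\ge |\K|^{-s2^d}$ by Gowers--Cauchy--Schwarz), obtaining $\Tr(\Delta f(y))=\Gamma'(\Tr g_1(y),\ldots,\Tr g_c(y))$ with each $g_i$ a derivative of $\Delta f$. Because $\Delta f$ is multilinear in $y_1,\ldots,y_d$, each such derivative splits into at most $d$ pieces, each multilinear in $d-1$ of the blocks; after this splitting (cost: $c\mapsto dc$) every $g_i$ is homogeneous of degree $0$ or $1$ in the block $y_1$. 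Now scale $y_1\mapsto a y_1$ for an arbitrary $a\in\K$: the left side becomes $\Tr(a\,\Delta f(y))$, while on the right each $\Tr(g_i(y))$ becomes $\Tr(a_i g_i(y))$ with $a_i\in\{1,a\}$. Hence knowing the full values $g_1(y),\ldots,g_c(y)\in\K$ lets one compute $\Tr(a\,\Delta f(y))$ for \emph{every} $a\in\K$, and since $(\Tr(az))_{a\in\K}$ determines $z$, this determines $\Delta f(y)$. So $\Delta f=\Gamma(g_1,\ldots,g_c)$ for some $\Gamma:\K^c\to\K$, with $c$ depending only on $d,s$. The missing idea in your proposal is precisely this scaling trick: one application of the trace lemma, combined with multilinearity, already yields all the characters $\Tr(a\,\cdot)$ simultaneously, so no union over a basis of $\K/\F_p$ and no $\K$-regularization is needed.
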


\begin{proof}
We use two basic facts about $\Delta f(y)$. First, it is a multilinear polynomial in each of $y_1,\ldots,y_d$. That is, for any $a_1,\ldots,a_d \in \K$,
$$
\Delta f(a_1 y_1 ,\ldots, a_d y_d) = a_1 \cdots a_d \Delta f(y_1,\ldots,y_d).
$$
Second, it is biased. By the Gowers-Cauchy-Schwarz inequality (see eg \cite{GT09}), and the fact that derivatives and traces commute, we have that
$$
|\E_{y \in \K^{nd}}[e(\Tr(\Delta f(y)))]| = \|e(\Tr(f))\|_{U^d}^{2^d} \ge \|e(\Tr(f))\|_{U^1}^{2^d} = |\E_{x \in \K^n}[e(\Tr(f(x)))]|^{2^d} \ge |\K|^{-s 2^d}.
$$
Thus, we can apply Lemma~\ref{lem:blrnprime_trace} to $\Delta f$ (with $s'=s 2^d$) and obtain that there exist polynomials $g_1,\ldots,g_c \in \P_d(\K^{nd})$ and a function
$\Gamma:\F_p^c \to \F_p$, where $c=c^{(\ref{lem:blrnprime_trace})}(d,s')$, such that
$$
\Tr(\Delta f(y))=\Gamma'(\Tr(g_1(y)),\ldots, \Tr(g_c(y))).
$$
Each of $g_i$ is a derivative of $\Delta f$. Since $\Delta f$ is multilinear in $y_1,\ldots,y_d$, each $g_i$ can be decomposed as the sum of $d$ terms, each is
a multilinear polynomial in $d-1$ of the $d$ sets of variables $y_1,\ldots,y_d$. With the price of increasing $c$ to at most $dc$,
we assume for simplicity that each $g_i$ is multilinear in $\{y_j: j \in [d] \setminus \ell_i\}$ for some $\ell_i \in [d]$.

Fix $a \in \K$. Define $y^a = (a y_1,y_2,\ldots,y_d) \in \K^{nd}$. We have that $\Delta f(y^a) = a \Delta f(y)$ since $f$ is multilinear. We also have $g_i(y^a) = a_i g_i(y)$
where $a_i=1$ if $\ell_i=1$ and $a_i=a$ otherwise. Thus
$$
\Tr(a \Delta f(y))=\Gamma'(\Tr(a_1 g_1(y)),\ldots, \Tr(a_c g_c(y))).
$$
To conclude, note that any $z \in \K$ is uniquely determined by $(\Tr(a z): a \in \K)$. Thus, if we know the value of $g_1(y),\ldots,g_c(y)$, we can compute $\Tr(a g_i(y))$ for
all $a \in \K, i \in [c]$, from that compute $\Tr(a \Delta f(y))$ for all $a \in \K$, and hence compute $\Delta f(y)$. We thus obtain that
$$
\Delta f(y)=\Gamma(g_1(y),\ldots,g_c(y))
$$
where $\Gamma:\K^c \to \K$ is as described above.
\end{proof}

To conclude, we relate the rank of $f$ to the rank of $\Delta f$. This is the only place in the proof where we use the assumption that $p>d$.
Assume that $\Delta f(y)=\Gamma(g_1(y),\ldots,g_c(y))$ where $g_1,\ldots,g_c \in \P_d(\K^{nd})$.
Let $f_d$ denote the homogeneous part of degree $d$ of $f$, and let $f_{<d} = f-f_d$. We have that
$$
\Delta f (x,\ldots,x) = d! \cdot f_d(x).
$$
If $d<p$ then $d!$ is invertible in $\K$, and hence
$$
f(x) = \frac{1}{d!}\Delta f(x,\ldots,x) + f_{<d}(x)=\frac{1}{d!} \Gamma(g_1(x,\ldots,x),\ldots,g_c(x,\ldots,x))+f_{<d}(x).
$$
Thus, $f$ has rank at most $c+1$. This concludes the proof of Theorem~\ref{thm:blrnprime}.

\section{Application: List decoding Reed-Muller codes over large fields}\label{sec:list}

\subsection{Notation and Preliminaries}

Let $\F$ be a prime finite field. A code $\CC \subset \{\F^n \to \F\}$ is a subset of functions from $\F^n$ to $\F$, where functions in the code are called codewords. The distance between
two functions $f,g:\F^n \to \F$ is the fraction of coordinates where they disagree,
$$
\dist(f,g) := \frac{1}{|\F|^n} \l|\{x \in \F^n: f(x) \ne g(x)\}\r|.
$$
The minimum distance of a code $\CC$ is
$$
\dist_{min}(\CC):=\min_{f\neq g \in \CC} \{\dist(f,g)\}.
$$
A code $\CC$  is linear if it is a linear subspace over $\F$. For a linear code, $\dist_{min}(\CC)=\min_{0 \ne f \in \CC} \{\dist(f,0)\}$.
For a code $\CC$ and a function $g:\F^n \to \F$, the set of codewords at distance at most $\rho$ from $g$ is denoted by
$$
B_{\CC}(g,\rho) := \{f \in \CC: \dist(f,g) \le \rho\}.
$$
The list decoding size of $\CC$ at radius $\rho$ is the maximal number of codewords at distance $\rho$ from any possible function,
$$
L_{\CC}(\rho) := \max_{g:\F^n \to \F} |B_{\CC}(g,\rho)|.
$$

The Reed-Muller code $\RM_{\F}(n,d)$ is the evaluations of all polynomials $f:\F^n \to \F$ of degree at most $d$. So using our previous notation, $\RM_{\F}(n,d) = \P_d(\F^n)$.
As we assume $d<|\F|$, its minimal distance is given by
$$
\dist_{\min}(\RM_{\F}(n,d)) = \min\l\{\Pr_{x \in \F^n}[f(x) \ne 0]: f:\F^n \to \F, f \ne 0, \deg(f) \le d\r\} = 1-\frac{d}{|\F|}.
$$
The main theorem we prove is that Reed-Muller codes, for constant degrees, are list decodable up to their minimal distance. We also extend this to estimate the number of codewords
in balls of larger radii.

\noindent  \textbf{Theorem~\ref{THM:listdecode}.} Let $d,s \in \N$. There exists $c=c(d,s)$ such that the following holds.
For any prime finite field $\F$ with $|\F|>d$ and any $n \in \N$,
$$
L_{\RM_{\F}(n,d)}\l(1-\frac{d}{|\F|}-\frac{1}{|\F|^s}\r) \le |\F|^c.
$$
Moreover, for any $1 \le e<d$,
$$
L_{\RM_{\F}(n,d)}\l(1-\frac{e}{|\F|}-\frac{1}{|\F|^s}\r) \le |\F|^{c \cdot n^{d-e}}.
$$
Both bounds are tight, up to the exact value of $c=c(d,s)$. Thus, this resolves the weight distribution problem asymptotically for all fields, first raised by MacWilliams and Sloane in 1977 \cite{MS77}.

 The proof will follow from a series of propositions which we state next.

Let $\RM_{\F}(n,d,k)$ be a subcode of $\RM_{\F}(n,d)$, which consists of polynomials of degree $\le d$ and rank $\le k$. We first reduce
the problem of list decoding Reed-Muller codes to list decoding a low rank subcode.

\begin{lemma}\label{lem:johnsonred}
Let $e \leq d,s \in \N$. There is $k=k(d,s)$ such that for any prime field $\F$ with $|\F|>d$ and any $n \in \N$,
$$
L_{\RM_{\F}(n,d)}\l(1-\frac{e}{|\F|}-\frac{1}{|\F|^s}\r) \leq |\F|^{2s} \cdot L_{\RM_{\F}(n,d,k)}\l(1-\frac{e}{|\F|}-\frac{1}{|\F|^s}\r).
$$
\end{lemma}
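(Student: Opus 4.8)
The plan is to use a Johnson-type / sunflower argument to show that "most" codewords in a ball around $g$ agree with $g$ on a common structured part, and then argue that this common part forces all but a few of them to be of low rank. Concretely, fix $g:\F^n \to \F$ and write $\rho = 1 - e/|\F| - |\F|^{-s}$. Suppose $f_1,\dots,f_N \in \RM_\F(n,d)$ all lie in $B_\CC(g,\rho)$. For each pair $i \ne j$, the polynomial $f_i - f_j$ is a nonzero degree-$\le d$ polynomial, so $\dist(f_i,f_j) \ge 1 - d/|\F|$, i.e.\ the agreement sets $A_i := \{x: f_i(x) = g(x)\}$ pairwise intersect in a set of density at most $d/|\F| - (1 - \rho - \ldots)$; more usefully, each $A_i$ has density $\ge 1 - \rho = e/|\F| + |\F|^{-s}$, while $f_i \ne f_j$ agree on density $\le d/|\F|$. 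The first step is therefore an inclusion–exclusion / second-moment estimate: counting incidences $\sum_i |A_i|$ against $\sum_{i \ne j} |A_i \cap A_j|$ shows that if $N$ is large, many of the $A_i$ must overlap heavily, which is impossible unless the corresponding $f_i$ coincide — so really the content is to extract from the list a large sub-list all of whose members pass through a common "heavy" configuration.

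The key step is to pass to the structured subcode. I would pick a random point $x \in \F^n$ (or a random low-dimensional affine subspace) and condition on the value $g(x)$; by averaging, for a $|\F|^{-2s}$-fraction-free choice we can find a single value $v \in \F$ and a large sub-collection of the $f_i$ (losing a factor $|\F|$) that all take a prescribed value on a large structured set. The honest way to get low rank is: apply the bias-vs-rank machinery. If $f = f_i - g'$ for a suitable structured approximant $g'$ of $g$ is biased (which heavy agreement with $g$ forces, via the identity relating agreement probability to the character sum $\E_x[e(c(f(x) - g(x)))]$ summed over $c \in \F$), then by Theorem~\ref{thm:blr}, $f_i$ differs from a bounded-rank polynomial by a bounded-rank polynomial, hence $\rank(f_i) \le k$ for $k = k(d,s)$. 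So the plan is: (i) show the number of distinct "structured centers" $g'$ one needs is at most $|\F|^{2s}$; (ii) show that for each such center, every $f_i$ at distance $\le \rho$ from $g$ that is \emph{not} within distance $\rho$ of a low-rank codeword would make $f_i - g'$ unbiased, contradicting heavy agreement; hence $B_\CC(g,\rho) \subseteq \bigcup_{g'} \{f_i : \rank(f_i) \le k\}$ up to the claimed $|\F|^{2s}$ multiplicative loss, i.e.\ $|B_{\RM_\F(n,d)}(g,\rho)| \le |\F|^{2s} \cdot L_{\RM_\F(n,d,k)}(\rho)$.

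More carefully, the cleanest route (matching the proof overview in the introduction, which attributes this step to Gopalan~\cite{Gopalan10}) is: for each codeword $f \in B_\CC(g,\rho)$, the difference $f - g$ (as a function, not a polynomial) agrees with $0$ on density $\ge e/|\F| + |\F|^{-s} > e/|\F|$, which for a degree-$d$ polynomial $f$ would be impossible if $f$ itself were regular of large rank — by equidistribution of atoms (Lemma~\ref{lem:atomsize}), a sufficiently regular degree-$d$ polynomial is nearly equidistributed, so it cannot concentrate on any fixed value beyond density $1/|\F| + o(1) < e/|\F|$ when $e \ge 2$, and for $e = 1$ a finer argument using the structure of $g$ is needed. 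Regularizing $f$'s leading structure, one peels off a bounded-rank "reason" for the concentration; collecting the possible reasons over all $f \in B_\CC(g,\rho)$ gives at most $|\F|^{2s}$ of them (this is where the exponent $2s$ comes from — it is roughly the number of values $c \in \F$ times a $|\F|^s$-type counting bound on the biased directions). Binning the list by its reason and observing that within each bin every codeword has rank $\le k$ yields the stated inequality.

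The main obstacle I expect is item (i): controlling the \emph{number} of distinct low-complexity centers by $|\F|^{2s}$ rather than something growing with $n$. Over fixed fields (as in~\cite{BL14}) one center suffices, but here, as the authors note in the proof overview, one must allow many potential centers, and the whole point is that this count stays polynomial in $|\F|$. Getting the bookkeeping right — identifying the center with a bounded-degree polynomial factor of bounded rank, counting how many such factors can arise from codewords in a single ball, and making sure the factor-of-$|\F|^{2s}$ absorbs both the "choice of value $c \in \F$" and the "choice of biased linear combination" losses — is the delicate part; the rest is a routine combination of Lemma~\ref{lem:atomsize}, Theorem~\ref{thm:blr}, and an averaging argument.
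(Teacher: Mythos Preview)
Your proposal has a genuine gap: you are trying to apply the bias-vs-rank machinery to the wrong object. Throughout, you attempt to deduce that individual codewords $f_i$ in the ball have bounded rank, by arguing that ``heavy agreement of $f_i$ with $g$'' forces $f_i - g'$ (for some structured $g'$) to be biased, or that a high-rank $f_i$ is nearly equidistributed and hence cannot match $g$ on density $\ge e/|\F|$. But $g$ is an \emph{arbitrary} function, not a polynomial: $f_i - g$ is not a polynomial, so Theorem~\ref{thm:blr} does not apply to it, and equidistribution of $f_i$ tells you nothing about $\Pr[f_i(x)=g(x)]$ (indeed $g$ could equal $f_i$ identically regardless of the rank of $f_i$). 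There is no reason the codewords in $B_{\RM_\F(n,d)}(g,\rho)$ should themselves have low rank, and the lemma does not assert this. You are also conflating this lemma with the later and genuinely harder Lemma~\ref{lem:proxy}; the ``many low-complexity centers'' issue you worry about in your last paragraph belongs there, not here.

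The actual argument applies bias-vs-rank to the \emph{pairwise differences} $f_i - f_j$, which \emph{are} degree-$d$ polynomials. Build a graph on the list $L$ with an edge between $f_i,f_j$ whenever $\rank(f_i-f_j)\le k:=c^{(\ref{thm:blr})}(d,2s)$. If $f_i,f_j$ are non-adjacent then $\rank(a(f_i-f_j))>k$ for all $a\ne 0$, so by Theorem~\ref{thm:blr} each such polynomial has bias $<|\F|^{-2s}$, giving $\dist(f_i,f_j)\ge 1-1/|\F|-|\F|^{-2s}$. Hence any maximal independent set $I\subseteq L$ is a code of minimum distance $\ge 1-1/|\F|-|\F|^{-2s}$, and the Johnson bound (Lemma~\ref{lem:johnson}) gives $|I|\le |\F|^{2s}$. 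By maximality, every $f\in L$ is within one edge of some $f_0\in I$, i.e.\ $f=f_0+h$ with $h\in \RM_\F(n,d,k)$ and $\dist(h,g-f_0)\le\rho$; thus the neighbourhood of each $f_0$ has size at most $L_{\RM_\F(n,d,k)}(\rho)$, and $|L|\le |I|\cdot L_{\RM_\F(n,d,k)}(\rho)$. The ``shifted centers'' you were looking for are simply the $|\F|^{2s}$ functions $g-f_0$ for $f_0\in I$; no regularization or structured approximation of $g$ is needed at this stage.
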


So, from now on we restrict our attention to $\RM_{\F}(n,d,k)$. Recall that $\Delta(\F)$ is the probability simplex over $\F$, that we naturally embed $\F \subset \Delta(\F)$.
For $g:\F^n \to \F$ let $p(g):\F^n \to \Delta(\F)$ be this embedding extended to functions. With this notation, for $f,g:\F^n \to \F$ we have $\dist(f,g) = 1-\ip{p(f),p(g)}$.
So, from now on we extend our study to functions $\varphi:\F^n \to \Delta(\F)$, which can be viewed as randomized functions. The definition of the codewords in $\C$ which are close to a function can be extended to randomized functions following the above discussion:
$$
B_{\CC}(\varphi,\rho) = \{f \in \CC: \ip{p(f),\varphi} \ge 1-\rho\}.
$$
Let $\cF=\{h_1,\ldots,h_c: \F^n \to \F\}$. We say that $\varphi$ is $\cF$-measureable if $\varphi=\Gamma(\cF)$ for some function $\Gamma:\F^{|\cF|} \to \Delta(\F)$.
Recall that $\E[\varphi | \cF]:\F^n \to \Delta(\F)$ as the average of $\varphi$ with respect to $\cF$,
$$
\E[\varphi | \cF](x) = \E\l[\varphi(y): y \in \F^n, \cF(x)=\cF(y)\r].
$$
Clearly, $\E[\varphi|\cF]$ is $\cF$-measurable. Moreover, for any $\xi:\F^n \to \Delta(\F)$
which is $\cF$-measurable, we have
$$
\ip{\xi, \varphi} = \ip{\xi, E[\varphi | \cF]}.
$$
We next show that the list decoding problem for low rank codes can be further reduced to the case where the center $g$ is a measurable with respect to a small polynomial factor
of bounded degree. More accurately, it can be list decoded to this latter problem.

\begin{lemma}\label{lem:proxy}
Fix $d,k,s \in \N$. There exist $c=c(d,k,s) \in \N$ such that the following holds. Let $\F$ be a prime field with $|\F|>d$ and let $n \in \N$.
For any $\varphi:\F^n \to \Delta(\F)$ there exists a family of $|\F|^{c}$ sets of polynomials $\cF_i \subset \RM_{\F}(n,d-1)$, $1 \le i \le |\F|^{c}$,
of size $|\cF_i| \le c$ each, such that
$$
\forall f \in \RM_{\F}(n,d,k) \; \exists 1 \le i \le |\F|^{c}, \left| \ip{p(f),\varphi} -  \ip{p(f),\E[\varphi | \cF_i]} \right| \le |\F|^{-s}.
$$
\end{lemma}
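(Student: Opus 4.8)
The strategy is to pass to the Fourier side, extract from $\varphi$ a bounded ``relevant structure'' together with a polynomially long list of relevant degree-$(d-1)$ directions, and build the family $\{\cF_i\}$ by combining boundedly many of those directions in all possible ways. For $a\in\F$ let $\psi_a:\F^n\to\C$ be the Fourier coefficients of $\varphi$, $\psi_a(x)=\sum_{\beta\in\F}\varphi(x)_\beta\,e(a\beta)$, so that $\psi_0\equiv1$, $|\psi_a|\le1$, and for every $f$ and every factor $\cF$,
$$\ip{p(f),\varphi}-\ip{p(f),\E[\varphi\,|\,\cF]}=\frac1{|\F|}\sum_{a\ne0}\E_x\big[(\psi_a-\E[\psi_a\,|\,\cF])(x)\,e(-af(x))\big].$$
If $f$ has rank $\le k$, write $f=\Gamma(g_1,\dots,g_k)$ with $g_j\in\P_{d-1}(\F^n)$; Fourier-expanding $e(-a\Gamma)$ over $\F^k$ (total $\ell_1$-mass $\le|\F|^{k/2}$) and using that $e(-h)$ is $\cF$-measurable whenever $h$ is, the displayed quantity is at most $|\F|^{k/2}$ times the maximum, over $a\ne0$ and over $h$ in the $\F$-linear span $W_f:=\spana\{g_1,\dots,g_k\}$ (of dimension $\le k$) that are not $\cF$-measurable, of $\big|\E_x[(\psi_a-\E[\psi_a\,|\,\cF])(x)\,e(-h(x))]\big|$. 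So it suffices to produce, for every rank-$\le k$ polynomial $f$, some $\cF_i$ of degree $\le d-1$ relative to which every $h\in W_f$ is either measurable or $\tau$-uncorrelated with all residuals $\psi_a-\E[\psi_a\,|\,\cF_i]$, where $\tau:=|\F|^{-s-k-1}$.

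Next I extract the relevant directions of $\varphi$. The total energy $\sum_{a\ne0}\|\E[\psi_a\,|\,\cF]\|_2^2$ is at most $|\F|$ and increases by $\ge\tau^2$ whenever a degree-$\le d-1$ polynomial correlating above $\tau$ with a residual is adjoined, so an energy-increment argument yields a collection $D_0$ of at most $|\F|^{2(s+k)+3}$ degree-$\le d-1$ polynomials such that every degree-$\le d-1$ polynomial $\tau$-correlating with some $\psi_a$ is, modulo a strictly lower-rank polynomial, an $\F$-linear combination of $D_0$. Recursively unfolding the strictly-lower-rank parts --- which, if they still $\tau$-correlate with some $\psi_a$, must be biased and hence, by Theorem~\ref{thm:blr}, of bounded rank, so the recursion has bounded depth --- one obtains a list $h_1,\dots,h_L$ with $L=|\F|^{O_{d,k,s}(1)}$ of degree-$\le d-1$ ``atomic directions'', with the property that the $\varphi$-relevant part of the structure space of any rank-$\le k$ polynomial is carried by the $\F$-span of $O_{d,k}(1)$ of the $h_\ell$ with coefficients in a bounded set.

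The family $\{\cF_i\}$ is then: the trivial factor, together with, for each sub-multiset of size $\le O_{d,k}(1)$ of $\{h_1,\dots,h_L\}$ and each bounded integer combination thereof, the regular factor obtained by applying the polynomial regularity lemma (Lemma~\ref{lem:reg}) to it --- each of degree $\le d-1$ and size $O_{d,k,s}(1)$; there are $|\F|^{O_{d,k,s}(1)}$ of them, so we set $c=c(d,k,s)$ accordingly. To verify the requirement, fix $f=\Gamma(g_1,\dots,g_k)$ of rank $\le k$ and, using the previous paragraph, split $W_f$ (modulo lower-degree and strictly-lower-rank terms, absorbed into finer atomic directions) into a part lying in the span of $O_{d,k}(1)$ atomic directions and a part uncorrelated with every $\psi_a$. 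The former determines a member $\cF_{i(f)}$ of the family; the latter, being uncorrelated with all $\psi_a$, neither affects $\E[\varphi\,|\,\cF_{i(f)}]$ nor correlates above $\tau$ with its residuals --- here one applies the equidistribution-of-atoms and equidistribution-of-parallelepipeds lemmas (Lemmas~\ref{lem:atomsize} and~\ref{lem:parallel}) jointly to $\varphi$ and the regularized factor. Hence every $h\in W_f$ is $\cF_{i(f)}$-measurable up to a residual-uncorrelated part, and the reduction of the first paragraph gives $\big|\ip{p(f),\varphi}-\ip{p(f),\E[\varphi\,|\,\cF_{i(f)}]}\big|\le|\F|^{k/2}\cdot|\F|\cdot\tau\le|\F|^{-s}$.

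The main obstacle is the middle step: making precise the ``$\varphi$-relevant part of a structure space'', and showing that (i) the list of atomic directions stays polynomial in $|\F|$ under the recursive unfolding of low-rank correlated directions --- this is exactly where Theorem~\ref{thm:blr} is needed, to guarantee those low-rank parts are of genuinely bounded rank so the recursion terminates quickly --- and (ii) each rank-$\le k$ codeword's relevant part is a bounded combination of atomic directions, so that a single member of the $|\F|^{O(1)}$-size family already handles it. Keeping each $\cF_i$ of bounded size, rather than regularizing all of $D_0$ at once (which would blow up via the regularity lemma), is precisely what this direction-by-direction construction achieves; the degree-preservation and faithful-composition lemmas (Lemmas~\ref{lem:degree} and~\ref{lem:complowdegree}) are used along the way to keep track of degrees when combining atomic directions.
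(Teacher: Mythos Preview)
Your outline has the right spirit---decompose $\varphi$, extract a short list of degree-$(d-1)$ ``directions'', and cover each low-rank $f$ by a bounded subcollection---but the middle step, which you yourself flag as the main obstacle, is a genuine gap and not just a technical detail to be filled in.

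Concretely: your energy increment produces a set $D_0$ of $|\F|^{O(1)}$ polynomials such that no degree-$(d-1)$ polynomial is $\tau$-correlated with any residual $\psi_a-\E[\psi_a\,|\,D_0]$. But this does \emph{not} imply that every $h$ which $\tau$-correlates with some $\psi_a$ lies in $\spana(D_0)$ modulo a lower-rank polynomial, as you assert; uncorrelation with a residual after conditioning on $D_0$ is a much weaker statement than membership in the span of $D_0$. Likewise, the claim that a lower-rank polynomial which $\tau$-correlates with $\psi_a$ ``must be biased'' is unjustified: $\psi_a$ is an arbitrary bounded function, not the constant character. Most importantly, even granting a list of atomic directions, you give no argument for why the $\varphi$-relevant part of an arbitrary $W_f$ is carried by $O_{d,k}(1)$ of them. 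A priori each $g_j$ in a decomposition of $f$ could interact with a \emph{different} element of $D_0$, and since $|D_0|$ is polynomial in $|\F|$, bounded subcollections need not capture it. This is exactly the step where ``keeping each $\cF_i$ of bounded size'' conflicts with the global energy increment, and your sketch does not resolve the conflict.

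The paper avoids this entirely by proving a stronger statement (Theorem~\ref{thm:streamlined}) by induction on a degree profile $\bk\in\N^{d-1}$ in lexicographic order, relative to an auxiliary base factor $\cF$. The two ingredients that replace your middle step are a Fourier-type decomposition on the simplex (Lemma~\ref{lem:fourier}) and a Frieze--Kannan weak-regularity lemma on the simplex (Lemma~\ref{lem:weakreg}), which together decompose $\varphi-\E[\varphi\,|\,\cF]$ as $\sum_{t\le\ell}\gamma_t\,q(w_t)+\xi$ with $\ell\le|\F|^{O(1)}$, $w_t\in\CC$, and $\xi$ uniformly small against all of $\CC$. For any $f$ not already handled by $\cF$, one shows some linear form of $f$'s decomposition correlates with some $w_t$; decomposing $w_t$ via a regularized factor $\cH_t$ forces one of $f$'s polynomials to drop in rank relative to $\cF\cup\cH_t$. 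Hence $f\in\RM_{\F}(n,d,\bk',\cF_t)$ with $\bk'<\bk$, and induction applies. The point is that only a \emph{bounded} number of polynomials is added at each inductive step, while the $|\F|^{O(1)}$ factor from the number of $w_t$'s multiplies only the number of families $\cF_i$, never their size---this is precisely the decoupling your argument lacks.
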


As a corollary, we bound the list decoding size in $\RM_{\F}(n,d,k)$ by the list decoding size when the centers are measurable functions for a system of a few polynomials.

\begin{cor}\label{cor:proxy}
Let $\CC=\RM_{\F}(n,d,k)$. Then for any $0 \le \rho \le 1$,
$$
B_{\CC}(\varphi, \rho) \subset \bigcup_{1 \le i \le |\F|^{c}} B_{\CC}(\E[\varphi|\cF_i], \rho+|\F|^{-s}).
$$
\end{cor}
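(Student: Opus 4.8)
The plan is to derive the corollary directly from Lemma~\ref{lem:proxy} by unwinding the definition of the balls $B_\CC(\cdot,\cdot)$; no new ingredient is needed. First I would fix $\varphi:\F^n \to \Delta(\F)$ and let $\cF_1,\ldots,\cF_{|\F|^c} \subset \RM_{\F}(n,d-1)$ be the family of polynomial systems furnished by Lemma~\ref{lem:proxy}, each of size at most $c=c(d,k,s)$. The point to record here is that this family depends only on $\varphi$ (and on the fixed parameters $d,k,s$), not on any particular codeword, so the index set over which the union on the right-hand side is taken is chosen in advance.

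Next, take an arbitrary $f \in B_\CC(\varphi,\rho)$ with $\CC=\RM_{\F}(n,d,k)$. By definition this means $f \in \RM_{\F}(n,d,k)$ and $\ip{p(f),\varphi} \ge 1-\rho$. Applying Lemma~\ref{lem:proxy} to this $f$, we obtain an index $1 \le i \le |\F|^c$ with $\left|\ip{p(f),\varphi} - \ip{p(f),\E[\varphi|\cF_i]}\right| \le |\F|^{-s}$, and hence $\ip{p(f),\E[\varphi|\cF_i]} \ge \ip{p(f),\varphi} - |\F|^{-s} \ge 1-\rho-|\F|^{-s}$. Since $\E[\varphi|\cF_i]:\F^n \to \Delta(\F)$ is again a valid (randomized) center, this is exactly the statement that $f \in B_\CC(\E[\varphi|\cF_i],\rho+|\F|^{-s})$, so $f$ lies in the union on the right. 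As $f$ was arbitrary, the claimed containment follows.

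There is essentially no obstacle: the only thing to be careful about is that the quantifier order in Lemma~\ref{lem:proxy} (for every low‑rank $f$ there exists an index $i$) is compatible with a union over one fixed family $\{\cF_i\}_{i}$, which it is precisely because that family is selected before $f$. The trivial edge cases deserve a remark rather than a worry: if $\rho+|\F|^{-s}\ge 1$ every ball on the right equals $\CC$ and the inclusion is vacuous, and if $B_\CC(\varphi,\rho)=\emptyset$ there is nothing to prove.
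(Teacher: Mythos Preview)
Your proposal is correct and matches the paper's intended argument: the paper states this as an immediate corollary of Lemma~\ref{lem:proxy} without giving a separate proof, and your derivation is precisely the straightforward unwinding of definitions that justifies this.
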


Finally, we prove bounds for the list decoding problem for low rank codes, where the center is measurable with respect to a polynomial factor. In
fact, we can even ignore the restriction that the code is low rank, as the restriction on the center is sufficient to obtain the bounds.

\begin{lemma}\label{lem:final}
Fix $d,s,c \in \N$. There exists $c'=c'(d,s,c)$ such that the following holds. Let $\F$ be a prime field with $|\F|>d$ and let $n \in \N$.
Let $\calF \subset \RM_{\F}(n,d-1)$ of size $|\calF| \le c$, and let $\varphi:\F^n \to \Delta(\F)$ be $\calF$-measurable.
Then
$$
\l|B_{\RM_{\F}(n,d)}\l(\varphi, 1-\frac{e}{|\F|}-\frac{1}{|\F|^s}\r)\r| \le |\F|^{c' \cdot n^{d-e}}.
$$
In particular, for any $k \in \N$,
$$
\l|B_{\RM_{\F}(n,d,k)}\l(\varphi, 1-\frac{e}{|\F|}-\frac{1}{|\F|^s}\r)\r| \le |\F|^{c' \cdot n^{d-e}}.
$$
\end{lemma}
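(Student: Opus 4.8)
The plan is to prove the bound by induction on $d$ (the ``in particular'' clause is immediate since $\RM_{\F}(n,d,k)\subseteq\RM_{\F}(n,d)$), with $e=1$ and $e=d$ as base cases. The first step is to normalize the center: apply the Polynomial Regularity Lemma (Lemma~\ref{lem:reg}) to replace $\calF$ by a polynomial factor $\calF'=\{h_1,\dots,h_C\}$ of degree at most $d-1$, of bounded size $C=C(c,d,s)$, and of rank at least a threshold $R=R(c,d,s)$ chosen large enough for Lemmas~\ref{lem:atomsize}, \ref{lem:parallel} and \ref{lem:complowdegree} to apply. Since $\calF'\succeq_{sem}\calF$, the function $\varphi$ is still $\calF'$-measurable, say $\varphi=\Gamma_\varphi(\calF')$. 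When $d=1$ the factor $\calF'$ is empty and $\varphi$ is a constant distribution in $\Delta(\F)$, so every non-constant linear $f$ has $\ip{p(f),\varphi}=1/|\F|<e/|\F|+|\F|^{-s}$; hence the list consists of constants only and has size at most $|\F|\le|\F|^{c'n^{0}}$.

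The heart of the argument is a pseudorandomness dichotomy for a candidate codeword $f\in\RM_{\F}(n,d)$. I would analyse the joint distribution of $(f(x),\calF'(x))$ for uniform $x$: expanding the two indicator functions through additive characters, $\Pr_x[f(x)=a,\calF'(x)=b]$ equals $|\F|^{-(C+1)}$ plus a sum over $(t,t_1,\dots,t_C)\neq 0$ of terms of modulus $\l|\E_x[e(tf(x)+\sum_i t_ih_i(x))]\r|$. For $t=0$ this is at most $|\F|^{-s'}$ for a prescribed large $s'$, by the regularity of $\calF'$ and Theorem~\ref{thm:blr}. For $t\neq 0$, either the same bound holds, or — by Theorem~\ref{thm:blr} — the polynomial $tf+\sum_i t_ih_i$ is a function of at most $K_0=K_0(c,d,s)$ polynomials of degree $\le d-1$, and solving for $f$ exhibits $f$ itself as a function of at most $K:=K_0+C$ polynomials of degree $\le d-1$. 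Consequently, if $f$ is not of this form then $(f(x),\calF'(x))$ is $|\F|^{-s'+O(C)}$-close in statistical distance to the product of the uniform distribution on $\F$ with the (near-uniform, by Lemma~\ref{lem:atomsize}) law of $\calF'$, and a direct computation gives $\ip{p(f),\varphi}=\sum_{a,b}\Pr_x[f=a,\calF'=b]\cdot(\Gamma_\varphi(b))_a=1/|\F|\pm|\F|^{-s'+O(C)}<e/|\F|+|\F|^{-s}$ (using $e\ge 1$ and $s'$ large), so $f$ is not in the list. Thus every codeword in $B_{\RM_{\F}(n,d)}(\varphi,1-e/|\F|-|\F|^{-s})$ is a function of at most $K$ polynomials of degree $\le d-1$; in particular it has rank at most $K$.

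It remains to bound the number of such low-rank codewords inside the ball. If $e=1$ this already suffices: regularizing the factor defining such an $f$ and invoking the Faithful Composition Lemma (Lemma~\ref{lem:complowdegree}), $f$ is determined by a bounded number of degree-$\le d-1$ polynomials together with a bounded-degree combining polynomial, giving at most $|\F|^{O(n^{d-1})}=|\F|^{O(n^{d-e})}$ possibilities. If $e=d$, a refinement of the computation above — comparing $\ip{p(f),\varphi}$ with the average over atoms $A$ of $\max_a\nu_{f,A}(a)$, where $\nu_{f,A}$ is the distribution of $f$ on $A$ — forces a low-rank codeword within distance $1-d/|\F|-|\F|^{-s}$ of $\varphi$ to be $\calF'$-measurable, and Lemma~\ref{lem:complowdegree} bounds the number of these by $|\F|^{O(1)}=|\F|^{c'n^{0}}$. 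For $1<e<d$ I would follow~\cite{BL14}: a low-rank codeword $f$ in the ball has a degree-one ``factor'', i.e.\ there is a linear polynomial $\ell$ together with a bounded-complexity representation of $f$ in terms of $\ell$, $\calF'$ and a few auxiliary polynomials of degree $\le d-1$, such that the ``cofactor'' $g$ has degree $\le d-1$ and lies in the ball of radius $1-(e-1)/|\F|-|\F|^{-s''}$ around a derived center of complexity at most $d-2$. The inductive hypothesis, applied in degree $d-1$ with parameter $e-1$, bounds the number of such $g$ by $|\F|^{c''n^{(d-1)-(e-1)}}=|\F|^{c''n^{d-e}}$, and the extracted data ($\ell$ and the auxiliary polynomials) contributes only a further $|\F|^{O(n)}$ factor, so the total is at most $|\F|^{c'n^{d-e}}$ for suitable $c'=c'(d,s,c)$.

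I expect the last case to be the main obstacle. The regularization and the character/equidistribution computations of the first two paragraphs carry over from the fixed-field setting almost verbatim, the one new point being that every rank threshold stays independent of $|\F|$ precisely because we invoke Theorem~\ref{thm:blr} rather than the Ackermann-type inverse theorems. What needs care is the extraction of the degree-one factor: one must arrange that the derived center genuinely drops in complexity to $d-2$, so that the induction $(d,e)\mapsto(d-1,e-1)$ closes, and that the remaining data enters through a family of size only $|\F|^{O(n)}$. This is exactly the step carried out in~\cite{BL14} over fixed fields, and the plan is to transport that analysis while verifying that no quantity introduced is forced to depend on $|\F|$.
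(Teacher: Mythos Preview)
Your reduction to low-rank codewords via the pseudorandomness dichotomy is sound and parallels what the paper does implicitly, but your plan for bounding the low-rank codewords diverges sharply from the paper's argument, and the part you flag as the ``main obstacle'' is precisely where your outline stops being a proof.

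The paper does \emph{not} induct on $d$ or extract a linear factor. Instead, for each $f$ in the ball it regularizes $\calF'\cup\{f\}$ to a factor $\calF''=\calF'\cup\{h''_1,\dots,h''_{c''}\}$ that syntactically refines $\calF'$ (this is where the choice of rank functions $r_1\gg r_2$ matters), writes $f=F(\calF'')$, and then replaces every $h'_i,h''_j$ by a generic polynomial on \emph{disjoint} fresh variables of the same degree (concretely, sums of $r$ disjoint products). The degree-preserving lemma guarantees $\deg(\widetilde f)\le d$, and equidistribution transfers the agreement bound $\Pr[\widetilde f=\widetilde g]\ge e/|\F|+|\F|^{-2s}$ to these new polynomials. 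Because the new $\widetilde{h''_i}$ is now on its own block of variables, a completely elementary Schwartz--Zippel style computation (conditioning on the other blocks, then bounding $\Pr[q_{d'}=0]$ and $\Pr[\sum_i\alpha_i h^i=0]$ separately) forces $\deg(h''_i)\le d-e$ for every $i$. The count then concludes directly: $f$ is determined by the bounded-degree composition $F$ and by $c''$ polynomials of degree $\le d-e$, giving $|\F|^{O_{d,s,c}(n^{d-e})}$ in one shot, uniformly in $e$.

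Your inductive scheme $(d,e)\mapsto(d-1,e-1)$ is not obviously wrong, but the step ``there is a linear $\ell$ and a cofactor $g$ of degree $\le d-1$ in the ball of radius $1-(e-1)/|\F|-|\F|^{-s''}$ around a derived center of complexity $\le d-2$'' is doing all the work and you have not said why such an $\ell$ exists, what $g$ is, or why the center's complexity drops. Even in~\cite{BL14} this is not how the corresponding lemma is proved; that paper also uses the disjoint-variable substitution and a direct degree bound rather than a linear-factor extraction. So as written, the inductive branch is a gap rather than a different route. If you want to salvage your outline, the cleanest fix is to abandon the induction after your dichotomy step and instead regularize $\calF'\cup\{f\}$ and run the substitution/degree-bound argument above.
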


With the above in place, we are ready to prove our main theorem of the section.
\begin{proof}[Proof of Theorem~\ref{THM:listdecode}]
Let $\rho:=1-\frac{e}{|\F|}-\frac{1}{|\F|^s}$. By Lemma~\ref{lem:johnsonred} there is $k:=k(d,s)$ such that
$$
L_{\RM_{\F}(n,d)}(\rho) \le |\F|^{2s}\cdot L_{\RM_{\F}(n,d,k)}\l(\rho+\frac{1}{|\F|^{2s}}\r).
$$
Let $g:\F^n \to \F, \varphi=p(g)$. Let $\CC=\RM_{\F}(n,d,k)$. Then, by Corollary~\ref{cor:proxy}, for some $c=c(d,s,k)$ we have
$$
\l|B_{\CC}(\varphi, \rho)\r| \le  \sum_{i=1}^{|\F|^{c}} \l|B_{\CC}\l(\E[\varphi|\cF_i], 1-\frac{e}{|\F|}-\frac{1}{|\F|^{2s}}\r)\r|,
$$
where each $\cF_i \subset \RM_{\F}(n,d-1)$ of size $|\cF_i| \le c$.
Finally, by Lemma~\ref{lem:final}, for some $c'=c'(d,s,c)$, we have that for every $1 \le i \le |\F|^{c}$,
$$
\l|B_{\CC}\l(\E[\varphi|\cF_i], 1-\frac{e}{|\F|}-\frac{1}{|\F|^{2s}}\r)\r| \le |\F|^{c' n^{d-e}}.
$$
We conclude that
$$
L_{\RM_{\F}(n,d)} \l (1 - \frac{e}{|\F|} - \frac{1}{|\F|^s} \r) \le |\F|^{2s+c+c' n^{d-e}}.
$$
\end{proof}

We prove Lemma~\ref{lem:johnsonred}, Lemma~\ref{lem:proxy} and Lemma~\ref{lem:final} in the following subsections.

\subsection{Proof of Lemma~\ref{lem:johnsonred}}

We state the Johnson bound first, which provides bounds on the list decoding size for any code, based just on the minimal distance of the code~\cite{johnson}.

\begin{lemma}[Johnson bound]\label{lem:johnson}Let $\CC \subseteq \{\F^n \to \F\}$. Suppose that $\dist_{min}(\CC) \geq 1-\frac{1}{|\F|}-\eps$. Then, $$L_{\CC}\l(1-\frac{1}{|\F|}-\sqrt{\eps}\r) \leq 1/\eps^2.$$
\end{lemma}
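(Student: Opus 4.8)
The plan is to prove the Johnson bound by the standard second-moment (double-counting) argument for $q$-ary codes, phrased in the simplex-embedding language of the preliminaries. Fix a function $g:\F^n \to \F$ and suppose $f_1,\dots,f_L \in \CC$ are distinct codewords all lying in $B_\CC\!\l(g, 1-\tfrac{1}{|\F|}-\sqrt{\eps}\r)$; the goal is to bound $L$. Write $q=|\F|$ and $N=|\F|^n$. For $x\in\F^n$ and $\sigma\in\F$ set $s_{x,\sigma}=|\{i\in[L]:f_i(x)=\sigma\}|$, so that $\sum_{\sigma}s_{x,\sigma}=L$ for every $x$. First I would extract two aggregate bounds. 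Summing agreements with $g$,
$$
\sum_{x\in\F^n} s_{x,g(x)} \;=\; \sum_{i=1}^{L}\l|\{x:f_i(x)=g(x)\}\r| \;\ge\; L\l(\tfrac1q+\sqrt{\eps}\r)N ,
$$
using $\dist(f_i,g)\le 1-\tfrac1q-\sqrt{\eps}$; and counting ordered pairs of codewords agreeing at a point,
$$
\sum_{x}\sum_{\sigma} s_{x,\sigma}(s_{x,\sigma}-1) \;=\; \sum_{i\ne j}\l|\{x:f_i(x)=f_j(x)\}\r| \;\le\; L(L-1)\l(\tfrac1q+\eps\r)N ,
$$
using $\dist(f_i,f_j)\ge \dist_{\min}(\CC)\ge 1-\tfrac1q-\eps$. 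Since $\sum_\sigma s_{x,\sigma}^2=\sum_\sigma s_{x,\sigma}(s_{x,\sigma}-1)+L$, this gives $\tfrac1N\sum_x\sum_\sigma s_{x,\sigma}^2 \le L(L-1)(\tfrac1q+\eps)+L$.

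The heart of the matter is a matching \emph{lower} bound on $\tfrac1N\sum_x\sum_\sigma s_{x,\sigma}^2$ that is strong enough to be useful. The naive Cauchy--Schwarz bound $\ge \tfrac{L^2}{q}$, equivalently the purely geometric/inner-product form of the Johnson bound, turns out to be vacuous here, precisely because the decoding radius $1-\tfrac1q-\sqrt{\eps}$ is within $o(1)$ of the maximal distance $1-\tfrac1q$. The fix is to exploit, at each coordinate $x$, that the value $g(x)$ is distinguished: pulling out its count and applying Cauchy--Schwarz to the remaining $q-1$ values,
$$
\sum_\sigma s_{x,\sigma}^2 \;\ge\; s_{x,g(x)}^2 + \frac{(L-s_{x,g(x)})^2}{q-1}.
$$
Averaging over $x$, setting $A:=\tfrac1N\sum_x s_{x,g(x)}$, and using convexity ($\E[Z^2]\ge(\E Z)^2$ applied to $Z=s_{x,g(x)}$ and to $L-Z$) yields $A^2+\tfrac{(L-A)^2}{q-1}\le L(L-1)(\tfrac1q+\eps)+L$.

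Finally I would use the algebraic identity $A^2+\tfrac{(L-A)^2}{q-1}=\tfrac{L^2}{q}+\tfrac{q}{q-1}\l(A-\tfrac{L}{q}\r)^2$ together with $A-\tfrac Lq\ge L\sqrt{\eps}$ (from the first aggregate bound) to reduce the inequality to $\tfrac{L^2\eps}{q-1}\le L\l(1-\tfrac1q-\eps\r)$, i.e. $L\le \tfrac{(q-1)(1-1/q-\eps)}{\eps}\le \tfrac{1}{\eps^2}$ in the range in which it is applied (and when the stated radius is negative, the ball is empty so $L=0$). The main obstacle is conceptual rather than computational: one must retain the per-coordinate distribution of codeword values instead of collapsing to a single inner product — this is the extra ingredient of the $q$-ary Johnson bound over the binary case, and is exactly what is needed because the radius is so close to $1-\tfrac1q$. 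Once that is in place, solving the quadratic-in-$L$ inequality, with the identity above as the bookkeeping device, is routine.
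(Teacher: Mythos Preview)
The paper does not prove this lemma; it is quoted from \cite{johnson} and used as a black box in the proof of Lemma~\ref{lem:johnsonred}. Your write-up supplies the standard double-counting proof of the $q$-ary Johnson bound, and the argument is correct through the inequality $L \le \frac{(q-1)(1-1/q-\eps)}{\eps}$: the two aggregate counts, the per-coordinate lower bound $\sum_\sigma s_{x,\sigma}^2 \ge s_{x,g(x)}^2 + \frac{(L-s_{x,g(x)})^2}{q-1}$, the convexity step, and the algebraic identity are all sound. This is in fact the sharp form of the bound and is slightly stronger than what the paper states; note that the paper's own application (Equation~\eqref{eq:johnson1}) uses $L \le |\F|^{2s}$ when $\eps=|\F|^{-2s}$, which is consistent with your bound $L < (q-1)/\eps$ rather than the cruder $1/\eps^2$.

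The one soft spot is the very last inequality $\frac{(q-1)(1-1/q-\eps)}{\eps} \le \frac{1}{\eps^2}$, which you hedge with ``in the range in which it is applied.'' Indeed that step is equivalent to $(q-1)(1-1/q-\eps)\eps \le 1$, which fails for, say, $\eps$ of order a constant and $q$ large. It does hold once $\eps \le 1/(q-1)$, and in the paper's sole invocation one has $\eps=|\F|^{-2s}\le |\F|^{-2}$, so your caveat is justified. If you want a self-contained statement, either replace the conclusion by the sharper bound $L \le (q-1)(1-1/q-\eps)/\eps$, or add the hypothesis $\eps \le 1/|\F|$; either version suffices everywhere the lemma is used.
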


\begin{proof}[Proof of Lemma~\ref{lem:johnsonred}]
Set $k=k(d,s)=c^{(\ref{thm:blr})}(d,2s)$. Fix arbitrary $g:\F^n \to \F$. Let
$$
L=\l\{f \in \RM_{\F}(n,d): \dist(f,g) \leq 1-\frac{e}{|\F|}-\frac{1}{|\F|^s}\r\}.
$$
Let $m=|L|$ and $L=\{f_1,\ldots , f_m\}$. Construct a graph $G=(L,E)$ where $(f_i,f_j) \in E$ if $\rank(f_i-f_j) \leq k$. Let $I \subseteq L$ be a maximal independent set.

\begin{claim}$\dist_{min}(I) \geq 1-\frac{1}{|\F|}-\frac{1}{|\F|^{2s}}$.
\end{claim}
\begin{proof}Let $f=f_i-f_j \neq 0$ for $f_i,f_j \in I$. Since $\rank(f) > k(d,s)=c^{(\ref{thm:blr})}(d,2s)$, and therefore, $\rank(a \cdot f)>k$ for all $a \in \F, a \neq 0$, we have by Theorem~\ref{thm:blr} that $\E\l[e\l(a \cdot f(x)\r)\r] \leq |\F|^{-2s}$. Thus,
$$
1-\dist(f_i,f_j)=\Pr_{x \in \F^n}[f(x)=0] = \frac{1}{|\F|} \sum_{a \in \F}\E\l[e\l(a \cdot f(x)\r)\r] \le \frac{1}{|\F|} + \frac{1}{|\F|^{2s}}.
$$
\end{proof}
By the above claim, using the Johnson bound on $I$, we have that
\begin{equation}\label{eq:johnson1}
L_{I}\l(1-\frac{1}{|\F|}-\frac{1}{|\F|^{s}}\r) \leq |\F|^{2s}.
\end{equation}

Next, consider any $f \in I$. Say $h_1,\ldots , h_D \in \RM_{\F}(n,d,k)$ are such that $(f+h_i,f) \in E$. As $\dist(g,f+h_i)\le1-\frac{e}{|\F|}-\frac{1}{|\F|^s}$, we have that $\dist(g-f,h_i)\le 1-\frac{e}{|\F|}-\frac{1}{|\F|^s}.$ Thus,
\begin{equation}\label{eq:johnson2}
D \leq L_{RM_{\F}(n,d,k)}\l(1-\frac{e}{|\F|}-\frac{1}{|\F|^s}\r).
\end{equation}
Combining Equation~\eqref{eq:johnson1} and Equation~\eqref{eq:johnson2} we conclude that $$L_{\RM_{\F}(n,d)}\l(1-\frac{e}{|\F|}-\frac{1}{|\F|^s}\r) \leq |\F|^{2s} \cdot L_{\RM_{\F}(n,d,k)}\l(1-\frac{e}{|\F|}-\frac{1}{|\F|^s}\r).
$$
\end{proof}

\subsection{Proof of Lemma~\ref{lem:proxy}}
The proof of Lemma~\ref{lem:proxy} requires several refinements of $\RM_{\F}(n,d,k)$. First,
for $\cF \subset \RM_{\F}(n,d-1)$ a family of polynomials of degree $\le d-1$,
define $\RM_{\F}(n,d,k,\cF)$ to be the family of degree $d$ polynomials, which can be decomposed as a function of
the polynomials in $\cF$, and $k$ additional polynomials of degree $\le d-1$.

For $\bk=(k_1,\ldots,k_{d-1}) \in \N^{d-1}$ let $|\bk|=\sum k_i$.
The code $\RM_{\F}(n,d,\bk,\cF)$ is a subcode of $\RM_{\F}(n,d,|\bk|,\cF)$, defined as
family of degree $d$ polynomials, which can be decomposed as a function of the polynomials in $\cF$, and $|\bk|$ additional polynomials, with $k_i$ polynomials of degree $i$,
for $1 \le i \le d-1$. The following statement of the theorem allows for a streamlined inductive proof.

\begin{thm}\label{thm:streamlined}
Fix $d,s \in \N, \bk \in \N^{d-1},\cF \subset \RM_{\F}(n,d-1)$ and let $\CC=\RM_{\F}(n,d,\bk,\cF)$. There exist $c=c(d,\bk,s,|\cF|) \in \N$ such that the following holds.
For any $\varphi:\F^n \to \Delta(\F)$ there exists a family of $|\F|^c$ sets of polynomials $\cF_i \subset \RM_{\F}(n,d-1)$, $1 \le i \le |\F|^c$,
of size $|\cF_i| \le c$ each, such that
$$
\forall f \in \CC \; \exists 1 \le i \le |\F|^c, \left| \ip{p(f),\varphi} -  \ip{p(f),\E[\varphi | \cF_i]} \right| \le |\F|^{-s}.
$$
\end{thm}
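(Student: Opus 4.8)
The plan is to prove the statement by induction on the pair $(d,\bk)$, ordered so that we first induct on $d$ and, for fixed $d$, on $|\bk|$ (or on $\bk$ in some well-founded order, say lexicographically from the top degree down). The base case is $\bk = 0$: here $\CC = \RM_{\F}(n,d,0,\cF)$ consists precisely of the functions of the form $\Gamma(\cF)$ that happen to have degree $\le d$, so every $f \in \CC$ is itself $\cF$-measurable. Taking the single family $\cF_1 = \cF$, we get $\ip{p(f),\varphi} = \ip{p(f),\E[\varphi|\cF]}$ exactly, since $p(f)$ is $\cF$-measurable and averaging $\varphi$ over the atoms of $\cF$ does not change the inner product against an $\cF$-measurable function. (If one wants $\cF$ to be regular, first apply the Polynomial Regularity Lemma~\ref{lem:reg} to replace $\cF$ by a semantic refinement $\cF'$ of bounded size and high rank; $p(f)$ remains $\cF'$-measurable.) So the base case holds with $c = |\cF|$.

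For the inductive step, suppose $|\bk| \ge 1$, and let $i_0$ be the largest index with $k_{i_0} \ge 1$. A polynomial $f \in \CC$ can be written $f = \Gamma(\cF, h_1,\ldots,h_{|\bk|})$ where the $h_j$'s have the prescribed degrees and $h_1$, say, has degree $i_0$. I would first regularize the factor $\cF \cup \{h_1,\ldots,h_{|\bk|}\}$ to high rank using Lemma~\ref{lem:reg}; this costs only a bounded increase in the number of polynomials and, by the faithful composition / degree-preserving lemmas (Lemmas~\ref{lem:complowdegree} and~\ref{lem:degree}), does not change degrees in an uncontrolled way. The key idea, following Gopalan~\cite{Gopalan10} and Bhowmick--Lovett~\cite{BL14}, is to ``peel off'' the top-degree generator $h_1$: condition on the value of $h_1$. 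Writing $\varphi' = \E[\varphi \mid h_1]$ restricted to a level set, or more precisely decomposing $\F^n$ according to the atoms of the single polynomial $h_1$, the inner product $\ip{p(f),\varphi}$ is an average over the $|\F|$ level sets of $h_1$ of conditional inner products, each of which is an instance of the problem for the \emph{smaller} code $\RM_{\F}(n,d',\bk',\cF)$ where we have removed $h_1$ from the generating set (reducing $k_{i_0}$ by one), and $d'$ is suitably adjusted. Applying the induction hypothesis to each of these $|\F|$ sub-instances produces, for each, a family of $|\F|^{c'}$ candidate low-complexity centers; taking products (one choice per level set of $h_1$) over all $|\F|$ level sets gives $|\F|^{|\F| \cdot c'}$ candidates — but this is too many, since it is not polynomial in $|\F|$.

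To keep the number of candidate centers polynomial in $|\F|$ is exactly the main obstacle, and it is the new difficulty relative to~\cite{BL14} (where $|\F|$ is constant, so $|\F|^{|\F|}$ is still constant). The fix, as hinted in the Proof Overview, is that we do not need to guess the behaviour of $\varphi$ on \emph{each} level set of $h_1$ independently: since $h_1$ has degree $\le d-1$ and we will force the whole factor to be regular, the atoms of $h_1$ (and of the refined factor) are equidistributed — Lemma~\ref{lem:atomsize} and the equidistribution of parallelepipeds (Lemma~\ref{lem:parallel}) — so a single good low-complexity proxy $\E[\varphi|\cF_i]$, chosen uniformly across the atoms, approximates $\ip{p(f),\varphi}$ for \emph{all} $f$ simultaneously up to $|\F|^{-s}$, provided the enlarged factor $\cF_i$ includes $h_1$. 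Concretely, I would: (i) enumerate the $|\F|^{c''}$ possible "types" of the top-degree part of $f$ modulo the factor (there are only polynomially many because $\Gamma$ is a low-degree polynomial by Lemma~\ref{lem:complowdegree}, so the relevant data is a bounded tuple of $\F$-coefficients); (ii) for each type, set up the proxy factor $\cF_i$ to consist of $\cF$ together with the corresponding $h_1$ and the polynomials produced by the inductive call on the residual code; (iii) verify via the equidistribution lemmas that $\ip{p(f),\varphi}$ and $\ip{p(f),\E[\varphi|\cF_i]}$ differ by at most $|\F|^{-s}$ whenever $f$ has that type. Counting: the number of candidate families is (number of types) $\times$ (families from the inductive call) $= |\F|^{c''} \cdot |\F|^{c'} = |\F|^{O(1)}$, and the size of each family is bounded by $|\cF| + |\bk| + (\text{bound from induction}) = O(1)$, which closes the induction with $c(d,\bk,s,|\cF|)$ defined by the obvious recursion. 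Setting $\bk = (k,0,\ldots,0)$ (or summing the relevant coordinates) recovers Lemma~\ref{lem:proxy} as the special case $d' = d$, $|\bk| = k$.
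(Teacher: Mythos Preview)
Your proposal has a genuine gap at the heart of the inductive step. The whole point of the theorem is that the families $\cF_i$ must be determined by $\varphi$ (and the fixed $\cF$) alone, \emph{before} we see any particular $f\in\CC$. In your step (ii) you propose to put ``the corresponding $h_1$'' into $\cF_i$; but $h_1$ comes from the decomposition $f=\Gamma(\cF,h_1,\ldots,h_{|\bk|})$ of a specific $f$, and different codewords $f\in\CC$ use completely different, unrelated polynomials $h_1,\ldots,h_{|\bk|}$. Enumerating the ``type'' of $\Gamma$ (which is indeed a bounded amount of data by Lemma~\ref{lem:complowdegree}) does not help: fixing $\Gamma$ still leaves $h_1$ ranging over all of $\P_{i_0}(\F^n)$. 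So your list of $\cF_i$'s either depends on $f$ (which is not allowed) or has size growing with $n$ (which is not allowed either). The ``peel off $h_1$ by conditioning on its level sets'' idea cannot be rescued by equidistribution lemmas for the same reason: those lemmas tell you the atoms of $\cF\cup\{h_1\}$ are uniform, but they do not manufacture $h_1$ out of $\varphi$.

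The paper's proof solves exactly this problem with two ingredients you are missing. First, a weak (Frieze--Kannan type) regularity lemma on the simplex, Lemma~\ref{lem:weakreg}, is applied to $\varphi-\E[\varphi|\cF]$ with respect to the family $\{q(w):w\in\CC\}$: this extracts, \emph{from $\varphi$ alone}, a list of $\ell\le |\F|^{O_{d,\bk,s,|\cF|}(1)}$ codewords $w_1,\ldots,w_\ell\in\CC$ such that the residual is almost orthogonal to every $q(f)$, $f\in\CC$. Second, a Fourier-type decomposition over the simplex (Lemma~\ref{lem:fourier}) is used to show that if $\ip{p(f),\varphi}$ is not already well-approximated by $\ip{p(f),\E[\varphi|\cF]}$, then some linear combination of the generators $h_i$ of $f$ has low rank relative to the generators $\cH_t$ of some $w_t$. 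This puts $f$ into $\RM_\F(n,d,\bk',\cF\cup\cH_t)$ with $\bk'<\bk$ lexicographically, and one recurses. The crucial feature is that the sets $\cF_t=\cF\cup\cH_t$ come from the $w_t$'s, hence from $\varphi$, not from $f$.
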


\paragraph{The simplex.}

Recall that for $f:\F^n \to \F$ we have $p(f):\F^n \to \Delta(\F)$. Define $q(f):=p(f)-\frac{1}{|\F|}$, so that $\sum_{y \in \F} q(f)(x)_y=0$ for all $x \in \F^n$.
For $a \in \F^n, b \in \F$, define $\ell_{a,b}:\F^{n} \to \F$ by $\ell_{a,b}(x)=\ip{a,x}+b$. We prove the following analogue of Fourier expansion over the simplex. We will refer to it as the Fourier simplex decomposition.

\begin{lemma}\label{lem:fourier}
Let $g:\F^n \to \F$. Then,
$$
q(g)(x)=\sum_{a \in \F^n, 0 \ne b \in \F} \alpha_{a,b}q(\ell_{a,b})(x),
$$ where
$$
\alpha_{a,b}=\langle q(g),q(\ell_{a,b})\rangle - \langle q(g),q(\ell_{a,0})\rangle
$$
are unique and satisfy $\alpha_{a,b} \in [-1,1]$. We denote the $\alpha_{a,b}$ by $\widehat{q(g)}(a,b)$.
\end{lemma}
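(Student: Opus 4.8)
The plan is to set up a suitable inner-product space on functions $\F^n \to \R^{\F}$ and find an explicit basis (or spanning set) with respect to which the "characters" $q(\ell_{a,b})$ behave nicely. Observe first that for fixed $x$, the vector $q(g)(x) = p(g)(x) - \tfrac1{|\F|}\mathbf{1}$ lies in the hyperplane $V_0 := \{v \in \R^{\F}: \sum_y v_y = 0\}$, which has dimension $|\F|-1$. The claim is that the collection $\{q(\ell_{a,b}): a \in \F^n,\ 0 \ne b \in \F\}$ spans the (finite-dimensional) space of all functions $\F^n \to V_0$, and moreover that the stated coefficients $\alpha_{a,b}$ realize $q(g)$. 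I would first establish the algebraic identity at a single point: for any affine form $\ell$, $q(\ell)(x)$ depends only on the value $\ell(x) \in \F$, so it suffices to understand, for each residue $j \in \F$, the vector $q_j := e_j - \tfrac1{|\F|}\mathbf 1 \in V_0$. The key linear-algebra fact is that $\{q_j - q_0 : j \ne 0\}$, equivalently $\{e_j - e_0 : j \ne 0\}$, is a basis of $V_0$; this is what forces the subtraction $\langle q(g),q(\ell_{a,0})\rangle$ in the formula for $\alpha_{a,b}$ and is responsible for uniqueness.

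Next I would verify the formula for $\alpha_{a,b}$ by computing the relevant inner products. Using $\langle q(f), q(h)\rangle = \E_x \ip{q(f)(x), q(h)(x)}$ (as in the preliminaries on the simplex), and the standard fact that $\Pr_x[\ell_{a,b}(x) = \ell_{a',b'}(x)]$ equals $1$ if $(a,b)=(a',b')$, equals $1/|\F|$ if $a = a'$ but the lines are distinct, and equals $1/|\F|$ (or is handled by a character sum) otherwise, one gets explicit values for $\langle q(\ell_{a,b}), q(\ell_{a',b'})\rangle$. Concretely, for $a \ne 0$, $\langle q(\ell_{a,b}), q(\ell_{a,b'})\rangle = \delta_{b,b'} - \tfrac1{|\F|}$ since the two affine forms $\ell_{a,b}, \ell_{a,b'}$ with the same linear part $a$ take every pair of values $(b + t, b' + t)$ with $t$ uniform, and for $a \ne a'$ the Gram entries vanish by independence of $\ip{a,x}$ and $\ip{a',x}$. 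From this Gram structure one reads off that the map $g \mapsto (\widehat{q(g)}(a,b))$ inverts $\alpha \mapsto \sum \alpha_{a,b} q(\ell_{a,b})$, giving both existence and uniqueness of the expansion. The bound $\alpha_{a,b} \in [-1,1]$ follows because $\langle q(g), q(\ell_{a,b})\rangle = \Pr_x[g(x) = \ell_{a,b}(x)] - \tfrac1{|\F|} \in [-\tfrac1{|\F|}, 1-\tfrac1{|\F|}]$ (using $\ip{p(g)(x), p(h)(x)} = \mathbf 1[g(x)=h(x)]$), so each of the two terms in $\alpha_{a,b}$ lies in an interval of length $1$, and a small extra argument shows their difference stays in $[-1,1]$ — alternatively, note $\alpha_{a,b} = \Pr_x[g(x)=\ell_{a,b}(x)] - \Pr_x[g(x)=\ell_{a,0}(x)]$ is a difference of two probabilities, hence in $[-1,1]$ directly.

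The step I expect to be the main obstacle is getting the indexing and the rank-deficiency exactly right: the family $\{q(\ell_{a,b}) : a \in \F^n, b \in \F\}$ (allowing $b=0$) is \emph{overcomplete} — for fixed $a$ the $|\F|$ vectors $q(\ell_{a,b})$ sum to zero and span only a $(|\F|-1)$-dimensional space — so one must argue carefully that restricting to $b \ne 0$ removes exactly the redundancy and yields a genuine basis of the image space, and that the constant function (the $a=0,b=0$ term, which is $q$ of a constant, equal to $q_0$, not zero!) is correctly accounted for. I would handle this by treating the $a=0$ case separately: $q(\ell_{0,b})(x) = q_b$ is a constant function of $x$, and $\{q_b : b \ne 0\}$ is a basis of $V_0$, while for $a \ne 0$ the functions $x \mapsto q(\ell_{a,b})(x)$ for varying $a \ne 0$ and $b \ne 0$ are orthogonal across distinct $a$ and span the complement. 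Once the Gram matrix is block-diagonalized by $a$, each block is the $(|\F|-1)\times(|\F|-1)$ matrix $I - \tfrac1{|\F|}J$, which is invertible, and inverting it reproduces precisely the stated formula $\alpha_{a,b} = \langle q(g), q(\ell_{a,b})\rangle - \langle q(g), q(\ell_{a,0})\rangle$ (the $-\langle q(g),q(\ell_{a,0})\rangle$ being the contribution of the $\tfrac1{|\F|}J$ correction). This completes the proof.
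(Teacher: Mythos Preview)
Your proposal is correct and follows essentially the same route as the paper: show that $\{q(\ell_{a,b}): a \in \F^n,\ 0 \ne b \in \F\}$ is a basis of the space of maps $\F^n \to V_0$ by exploiting the block structure of the Gram matrix (orthogonality across distinct $a$, and the $(|\F|-1)\times(|\F|-1)$ block $I-\tfrac{1}{|\F|}J$ within each $a$), then invert to recover $\alpha_{a,b}=\ip{q(g),q(\ell_{a,b})}-\ip{q(g),q(\ell_{a,0})}$ and read off the bound as a difference of two probabilities. One small slip: for $a\ne a'$ the random variables $\ip{a,x}$ and $\ip{a',x}$ need not be independent (take $a'=2a$), but the orthogonality you want still holds because $\ip{a-a',x}$ is uniform --- this is exactly how the paper argues it.
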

\begin{proof}
We first construct a basis for the subspace $V \subseteq (\R^{|\F|})^{|\F|^n}$ defined as follows. We index the coordinates of $v \in (\R^{|\F|})^{|\F|^n}$
as $v_{x,y}$, with $x \in \F^n, y \in \F$. Then
$$
V=\l\{v_{x,y}  \in \R^{|\F|^{n+1}}: \sum_{y \in \F} v_{x,y}=0 \; \forall x \in \F^n\r\}.
$$
Note that $q(g) \in V$ for any $g:\F^n \to \F$. Also, $\dim(V)=|\F|^n(|\F|-1)$.
We next establish that the set of vectors
$$
I=\l\{q(\ell_{a,b}):a \in \F^n, 0 \neq b \in \F \r\} \subseteq V
$$
is a basis for $V$. First, note that $|I|=|\F|^n(|\F|-1)$. To prove linear independence of $I$, suppose that
\begin{equation}\label{eq:li}
\Lambda:=\sum_{a \in \F^n,0 \ne b \in \F}\alpha_{a,b}q(\ell_{a,b})=0.
\end{equation}
Let $V_a=\spana\{q(\ell_{a,b}): 0 \ne b \in \F\}.$ Then, by Equation~(\ref{eq:li}), $\sum_a v_a=0$ where $v_a = \sum_{0 \ne b \in \F}\alpha_{a,b}q(\ell_{a,b}) \in V_a$. We now note that $\langle v_a, v_{a'} \rangle =0$ if $a \neq a'$. Indeed,
$$
\langle v_a,v_{a'} \rangle=\l\langle \sum_{b \neq 0} \alpha_{a,b}q(\ell_{a,b}),\sum_{b' \neq 0}\alpha_{a',b'}q(\ell_{a',b'}) \r\rangle=0,
$$ since for any $a \ne a' \in \F^n$ and any $b,b' \in \F$,
$$
\langle q(\ell_{a,b}), q(\ell_{a',b'})\rangle=\Pr_{x \in \F^n}\l[\ip{a,x}+b = \ip{a',x}+b'\r] - \frac{1}{|\F|} = 0.
$$
In particular, $\langle v_a, v_a \rangle =0$ for all $a \in \F^n$ which implies that $v_a=0$. Fix an arbitrary $a \in \F^n$. We now show that $v_a=\sum_{b \neq 0} \alpha_{a,b}q(\ell_{a,b})=0$ implies that $\alpha_{a,b}=0$ for all $b \neq 0$. Indeed, fix $x \in \F^n$ such that $\ip{a,x}=0$. Then $v_a(x)_y = \alpha_{a,y} - \sum_{b \ne 0} \alpha_{a,b}$ if $y \ne 0$, and $v_a(x)_0 = - \sum_{b \ne 0} \alpha_{a,b}$. As we have that $v_a(x)_y = 0$ for all $y \in \F$, it must be that $\alpha_{a,b}=0$ for all $0 \ne b \in \F$.

Thus, $I$ indeed forms a basis for $V$. The uniqueness of the $\alpha_{a,b}$ follows from the linear independence of $I$.
So far, we have established that
\begin{equation}\label{eq:basis}
q(g)(x)=\sum_{a \in \F^n,0 \ne b \in \F}\alpha_{a,b}q(\ell_{a,b})(x).
\end{equation}
Using the simple fact that $$\langle q(\ell_{a,b}), q(\ell_{a',b'}) \rangle=\Pr[\ip{a,x}+b=\ip{a',x}+b']-\frac{1}{|\F|},$$  we record the following observation.
$$\langle q(\ell_{a,b}), q(\ell_{a',b'}) \rangle = \bigg \{
\begin{array}{ll}
0&\textrm{if } a \neq a' \\
1-\frac{1}{|\F|}&\textrm{if } a=a', b=b' \\
-\frac{1}{|\F|} &\textrm{if } a=a', b \ne b'
\end{array}
$$
Taking inner product on both sides of Equation~\eqref{eq:basis} with $q(\ell_{a,b})$ we get,
$$
\langle q(g),q(\ell_{a,b}) \rangle=\l(1-\frac{1}{|\F|}\r)\alpha_{a,b} -\frac{1}{|\F|}\l(\sum_{b' \neq 0, b}\alpha_{a,b'}\r)=\alpha_{a,b} - \frac{1}{|\F|} \sum_{b' \neq 0}\alpha_{a,b'}.
$$
Summing for all $b \neq 0$, we obtain that
\begin{equation}\label{eq:alpha}
\sum_{b \neq 0}\langle  q(g),q(\ell_{a,b}) \rangle = \frac{1}{|\F|} \sum_{b' \neq 0}\alpha_{a,b'}.
\end{equation}
Thus,
\begin{equation}\label{eq:alpha1}
\alpha_{a,b}=\langle q(g),q(\ell_{a,b}) \rangle + \sum_{b' \neq 0}\langle  q(g),q(\ell_{a,b'}) \rangle.
\end{equation}
Next, we observe that $\sum_{b \in \F} q(\ell_{a,b})=0$. This is since
$$
\sum_{b \in \F} q(\ell_{a,b})_{x,y} = \sum_{b \in \F} \l( \Pr[\ip{a,x}+b=y] - \frac{1}{|\F|}\r) = 1-1=0.
$$
So we have
$$
\alpha_{a,b}=\langle q(g),q(\ell_{a,b}) \rangle - \langle q(g),q(\ell_{a,0}) \rangle.
$$
Since $\langle q(g),q(\ell_{a,b}) \rangle \in [-\frac{1}{|\F|},1-\frac{1}{|\F|}]$ for all $b \in \F$, we obtain that $\alpha_{a,b} \in [-1,1]$.
This finishes the proof
\end{proof}

\paragraph{Weak regularity on the simplex.}

We prove the following lemma. In the following, $X,Y$ are arbitrary finite sets, where we will later apply the lemma to $X=\F^n, Y=\F$. The proof is similar to Frieze-Kannan weak regularity~\cite{FK99} but generalized to the simplex.
\begin{lemma}\label{lem:weakreg}Let $\eps>0$ be arbitrary. Let $\varphi:X \to \Delta(Y)$ be arbitrary. Let $\calF$ be a collection of functions $f:X \to Y$. Then, there exist $f_1,\ldots , f_c \in \calF$, $c \leq 1/\eps^2$ such that
$$
\varphi=\frac{1}{|Y|}+\sum_{i=1}^c \alpha_i q(f_i)+h,
$$ where $|\alpha_i| \leq 1$ and $h$ satisfies that for all $f \in \calF$, $$\l|\langle h,q(f)\rangle \r| \leq \eps.$$
\end{lemma}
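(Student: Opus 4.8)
The plan is to run a greedy energy-increment argument, exactly in the spirit of the Frieze--Kannan weak regularity lemma but carried out in the Hilbert space of functions $X \to \R^Y$ equipped with the inner product $\langle \psi_1, \psi_2 \rangle = \E_{x \in X} \langle \psi_1(x), \psi_2(x) \rangle$ (the same inner product used implicitly throughout this section). Writing $q(\varphi) := \varphi - \frac{1}{|Y|}$ for the centering of $\varphi$, the two elementary facts I would record first are: $\| q(f) \|^2 = 1 - \frac{1}{|Y|}$ for every $f : X \to Y$, and $\| q(\varphi) \|^2 = \E_{x \in X}\big[\sum_y \varphi(x)_y^2\big] - \frac{1}{|Y|} \le 1 - \frac{1}{|Y|}$, the last inequality because $\sum_y \varphi(x)_y^2 \le (\sum_y \varphi(x)_y)^2 = 1$ for a probability vector $\varphi(x) \in \Delta(Y)$. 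In particular $\| q(\varphi) \| \le \| q(f) \|$ for all $f \in \calF$.

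Next I would iterate the following. Set $h_0 := q(\varphi)$. Given $h_i$, if $|\langle h_i, q(f) \rangle| \le \eps$ for every $f \in \calF$, then stop and output $h := h_i$ together with the list $f_1, \dots, f_i$ of functions accumulated so far; otherwise choose $f_{i+1} \in \calF$ with $|\langle h_i, q(f_{i+1}) \rangle| > \eps$, set $\alpha_{i+1} := \langle h_i, q(f_{i+1}) \rangle / \| q(f_{i+1}) \|^2$, and put $h_{i+1} := h_i - \alpha_{i+1} q(f_{i+1})$. Since $\alpha_{i+1} q(f_{i+1})$ is the orthogonal projection of $h_i$ onto the line through $q(f_{i+1})$, Pythagoras gives $\| h_{i+1} \|^2 = \| h_i \|^2 - \langle h_i, q(f_{i+1}) \rangle^2 / \| q(f_{i+1}) \|^2 < \| h_i \|^2 - \eps^2$, using $\| q(f_{i+1}) \|^2 \le 1$. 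As $\| h_0 \|^2 \le 1 - \frac{1}{|Y|} < 1$, the iteration terminates after $c \le 1/\eps^2$ steps, and unwinding the recursion yields $\varphi = \frac{1}{|Y|} + \sum_{i=1}^c \alpha_i q(f_i) + h$, where $h$ satisfies $|\langle h, q(f) \rangle| \le \eps$ for all $f \in \calF$ by the stopping rule.

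The only remaining point is the coefficient bound $|\alpha_i| \le 1$, and here I would use that the residual norms are non-increasing, $\| h_{i-1} \| \le \| h_0 \| = \| q(\varphi) \|$, together with Cauchy--Schwarz: $|\alpha_i| = |\langle h_{i-1}, q(f_i) \rangle| / \| q(f_i) \|^2 \le \| h_{i-1} \| / \| q(f_i) \| \le \| q(\varphi) \| / \| q(f_i) \| \le 1$. Note that the list $f_1, \dots, f_c$ is permitted to contain repetitions; I would \emph{not} try to merge equal terms, since it is precisely by keeping each contribution separate that every individual $\alpha_i$ stays in $[-1,1]$. There is no genuinely hard step here: the argument is the standard greedy/energy-increment scheme, and the only care needed is in fixing the normalization of the inner product and in the inequality $\sum_y \varphi(x)_y^2 \le 1$, which is exactly what forces $\| q(\varphi) \| \le \| q(f) \|$ and hence the bound on the $\alpha_i$.
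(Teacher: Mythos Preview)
Your proof is correct and follows essentially the same greedy energy-increment scheme as the paper. The only cosmetic difference is the step size: the paper takes $\alpha_i = \langle h_{i-1}, q(f_i)\rangle$ (the raw inner product) and bounds $|\alpha_i| \le \sqrt{\delta_{i-1}-\delta_i} \le 1$ via the telescoping energy drop, whereas you take the orthogonal-projection coefficient $\alpha_i = \langle h_{i-1}, q(f_i)\rangle / \|q(f_i)\|^2$ and bound it by Cauchy--Schwarz together with $\|q(\varphi)\| \le \|q(f)\|$; both choices yield the same termination count and the same conclusion.
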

\begin{proof}
Let $\varphi'=\varphi-\frac{1}{|Y|}$. We will define a sequence of functions $\varphi_i \in \F^n \to \R^{\F}$. Initialize $\varphi_0:=0$.
Given $\varphi_i$, if there exists $f_i \in \calF$ such that $\l|\langle \varphi'-\varphi_i,q(f_i)\rangle\r| = \alpha_i > \eps$, set $\varphi_{i+1}:=\varphi_i+\alpha_{i+1}q(f_i)$.
We show that the process terminates quickly. To that end, define $\delta_i:=||\varphi'-\varphi_i||_2^2$. Then
\begin{eqnarray*}
\delta_{i+1}&=&||\varphi'-\varphi_i-\alpha_i q(f_i)||_2^2\\
&=&\delta_i +\alpha_i^2||q(f_i)||_2^2-2\langle \varphi'-\varphi_i,\alpha_i q(f_i)\rangle \\
&=&\delta_i+\alpha_i^2 (1-1/|Y|)-2\alpha_i^2 \\
& \leq & \delta_i -\alpha_i^2\\
& \leq & \delta_i -\eps^2.
\end{eqnarray*}

Additionally, $\delta_0=||\varphi'||_2^2 \leq 1$ and  $\delta_i \geq 0$ for all $i$. Thus, the process terminates after $\le 1/\eps^2$ steps.
At the end of the process, we have
$$\varphi'=\sum_{i=1}^c \alpha_i q(f_i)+h,$$ where $h$ satisfies that for all $f \in \calF$, $\l|\langle h,q(f)\rangle \r| \leq \eps$ and  $|\alpha_i| \le \sqrt{\delta_i - \delta_{i+1}} \le 1$.
\end{proof}

\begin{proof}[Proof of Theorem~\ref{thm:streamlined}]
The proof is by induction on $d,s,\bk,|\cF|$. For $\bk$, we use the lexicographic order on $\N^{d-1}$ which is well founded to define a Noetherian induction. Let $\CC=\RM_{\F}(n,d,\bk,\cF)$, and fix $\varphi:\F^n \to \Delta(\F)$ and $s \ge 1$, and set $e=|\cF|$.

We first argue that we may assume that $\cF$ is regular. For a rank function $R_1:\N \to \N$ to be determined later (as a function of $d,\bk,s$), regularize $\cF$ to obtain an $R_1$-regular factor $\cF'$. Note that
$$
\RM_{\F}(n,d,\bk,\cF) \subset \RM_{\F}(n,d,\bk,\cF').
$$
Thus, we may instead study $\RM_{\F}(n,d,\bk,\cF')$. So, we simply assume from now on that $\cF$ is $r_1$-regular for some $r_1=R_1(d,\bk,s,|\cF|)$ to be determined later.

Let $f \in \CC$. By definition, we can decompose $f$ as
$$
f=\Gamma(\cH, \cF),
$$
where $\cH=\{h_1,\ldots,h_k\}$ is a family of $k$ polynomials of degree $\le d-1$ and where $\Gamma:\F^{k+e} \to \F$ is some function.
We argue that we can also assume that $\cH \cup \cF$ is regular. If $\cH \cup \cF$ is not $r_2=r_1-1$ regular, then
$$
\rank\left(\sum a_i h_i + \sum b_i f_i\right) \le r_2,
$$
for some $a_i,b_i \in \F$, not all zero. Let $d'$ be the maximal degree of a polynomial appearing in the linear combination with a nonzero coefficient. It cannot be that all these
polynomials are in $\{f_i\}$, as we assumed that the rank of $\cF$ is at least $r_1$. So, $a_i \ne 0$ for some $i$ where $\deg(h_i)=d'$. This means $h_i$ can be expressed
as a function of the other polynomials in $\cH \cup \cF$, and an additional set $\cH'$ of $r_1-2$ polynomials of degrees $\le d'-1$. So, if we construct $\bk'$ from $\bk$ by
reducing the number of polynomials of degree $d'$ by one, and increasing the number of polynomials of degrees $\le d'-1$ by $r_2$, then in fact we have
$$
f \in \RM_{\F}(n,d,\bk',\cF).
$$
Thus, we may apply the theorem by induction in order to handle these polynomials, since $\bk'<\bk$ in the lexicographic order. So, we assume from now on that $\cF \cup \cH$ is $r_2$-regular.

Let $\psi = \E[\varphi|\cF]$. We will include $\cF$ as one of our sets $\cF_i$, and hence handle any $f$ for which $|\ip{q(f),\psi}-\ip{q(f),\varphi}| \le |\F|^{-s}$. So, from now on
we consider only $f$ for which $|\ip{q(f),\psi} - \ip{q(f),\varphi}| \ge |\F|^{-s}$. Decomposing $\Gamma$ to its Fourier simplex decomposition (Lemma~\ref{lem:fourier}), and applying this to decompose $f$, we obtain that
$$
q(f)(x) = \sum_{a \in \F^{k}, b \in \F^e, 0 \ne c \in \F} \widehat{\Gamma}(a,b,c) \cdot q\left(\sum a_i h_i(x) + \sum b_i f_i(x)+c\right),
$$
where $|\widehat{\Gamma}(a,b,c)| \le 1$. Note that whenever $a=0$, we have
$$
\ip{q\left(\sum b_i f_i+c\right), \varphi} = \ip{q\left(\sum b_i f_i+c\right), \psi},
$$
since $\sum b_i f_i+c$ is $\cF$-measurable. Hence, there must exist $0 \ne a \in \F^{k}, b \in \F^e, c \ne 0$, such that
$$
\left| \ip{q\left(\sum a_i h_i(x) + \sum b_i f_i(x)+c\right), \varphi-\psi} \right| \ge |\F|^{-(s+k+e+1)}.
$$
As $\E[\varphi-\psi]=0$ we equivalently have
$$
\left| \ip{p\left(\sum a_i h_i(x) + \sum b_i f_i(x)+c\right), \varphi-\psi} \right| \ge |\F|^{-(s+k+e+1)}.
$$
Next, we decompose by Lemma~\ref{lem:weakreg} both $\varphi$ and $\psi$, and
subtract the decompositions obtain that
$$
\varphi-\psi = \sum_{t=1}^{\ell} \gamma_t \cdot  q(w_t) + \xi,
$$
where $\gamma_t \in [-1,1]$, $w_t \in \CC$, $\xi:\F^n \to \R^{\F}$ satisfies that $|\ip{\xi,f}| \le |\F|^{-2(s+k+e+1)}$ for all $f \in \CC$, and $\ell \le |\F|^{4(s+k+e+1)}$.
There must exist $t \in [\ell]$ such that
$$
\left| \ip{p\left(\sum a_i h_i(x) + \sum b_i f_i(x)+c\right), q(w_t)}\right| \ge |\F|^{-5(s+k+e)}.
$$
As $w_t \in \CC$ we can decompose it as as a function of $k$ polynomials of degree $\le d-1$ and $\cF$. Let $R_3:\N \to \N$ be large enough
to be determined later (as a function of $d,\bk,s$). We regularize these polynomials to be $R_3$-regular, and obtain a collection of $\le c_1(d,\bk,|\cF|,s)$ polynomials.
We choose $R_1$ large enough so that $R_1(e) > c_1(d,\bk,e,s)$. This ensures that $\cF$ does not change in the regularization process. Hence we have
$$
w_t = \Gamma_t(\cH_t \cup \cF),
$$
where $\cH_t \cup \cF$ is $R_3$-regular, $|\cH_t|=k_t \le c_1(d,\bk,e,s)$, $\cH_t=\{h_{t,1},\ldots,h_{t,k_t}\}$ and $\Gamma_t:\F^{k_t+e} \to \F$ is some function.
Decomposing $\Gamma_t$ to its Fourier decomposition, and applying this to decompose $w_t$, we obtain that
$$
q(w_t)(x) = \sum_{a \in \F^{k_t}, b \in \F^e, c \ne 0} \widehat{\Gamma}_t(a,b,c) \cdot q\left(\sum a_i h_{t,i}(x) + \sum b_i f_i(x)+c\right).
$$
So, there must exist $a' \in \F^{k_t}, b' \in \F^e, c' \ne 0$ such that
$$
\left| \ip{p\left(\sum a_i h_i(x) + \sum b_i f_i(x)+c\right), q\left(\sum a'_i h_{t,i}(x) + \sum b'_i f_i(x)+c'\right)}  \right| \ge |\F|^{-5(s+k+e+1)-(k_t+e+1)},
$$
which equivalently means that, for $b''_i = b_i-b'_i$ and $c''=c-c'$, that
$$
\left| \Pr_{x \in \F^n} \left[\sum a_i h_i(x) - \sum a'_i h_{t,i}(x) + \sum b''_i f_i(x) + c'' = 0 \right] - \frac{1}{|\F|} \right| \ge |\F|^{-(5(s+k+e+1)+(k_t+e+1))}.
$$
This implies that
$$
\rank\left(\sum a_i h_i(x) - \sum a'_i h_{t,i}(x) + \sum b''_i f_i(x)\right) \le r_4=r_4(d,k,e,s).
$$
Let $d'$ be the maximal degree of a polynomial appearing in the linear combination with a nonzero coefficient. By choosing $R_3$ large enough,
we guarantee that it cannot be the case that all the polynomials of degree $d'$ are in $\cH_t \cup \cF$. So, $a_i \ne 0$ for some $i$ such that $\deg(h_i) = d'$. So, we can express $h_i$ as a function of all the other polynomials
in $\cH \cup \cH_t \cup \cF$, and an additional set of $r_4$ polynomials of degree $\le d'-1$. Thus, we define $\cF_t = \cF \cup \cH_t$, and construct $\bk'$ from $\bk$ by decreasing the number of polynomials
of degree $d'$ by one, and increase the number of polynomials of any lower degree by $r_4$, then $\bk' < \bk$ and we obtain that in fact
$$
f \in \RM_{\F}(n,d,\bk',\cF_t).
$$
Crucially, the sets $\cF_t$ were obtained depending only on $\varphi$ and $\cF$. Thus, we can apply the theorem by induction to each of them.
Let $\{\cF_{t,i}: 1 \le i \in |\F|^{c'}\}$ be the sets guaranteed by the theorem, where $c' \le c(d,\bk',s,\cF_t)$. We conclude the proof by taking their union, which has size
$\le |\F|^{4(s+k+e)} \cdot |\F|^{c(d,\bk',s,|\cF_t|)}$, which is bounded by $|\F|^c$ for a large enough $c=c(d,\bk,s,|\cF|)$.
\end{proof}

\subsection{Proof of Lemma~\ref{lem:final}}
The proof of Lemma~\ref{lem:final} is similar to the authors' previous work~\cite{BL14}. Let $\calF=\{h_1,\ldots , h_c\}$ a family of polynomials of degree $\le d-1$, and let $\varphi:\F^n \to \Delta(\F)$ be $\cF$-measurable. Let $\rho:=1-\frac{e}{|\F|}-\frac{1}{|\F|^{s}}$. Fix $f \in \CC$ such that
$$
\langle p(f),\varphi\rangle \ge 1-\rho.
$$
For $a \in \F^c$ define
$$
A_a:=\{x \in \F^n:h_1(x)=a_1,\ldots , h_c(x)=a_c\}.
$$
Define $\Gamma_f:\F^c \to \F$ by setting $\Gamma_f(a)$ to be the most common value $f$ attains on $A_{a}$. Then
\begin{align*}
\Pr[f(x)=\Gamma_f(h_1(x),\ldots,h_c(x))] &= \sum_{a \in \F^c}  \Pr[x \in A_a] \cdot \max_{y^* \in \F} \Pr[f(x)=y^* | x \in A_{a}]\\
&\ge \sum_{a \in \F^c} \Pr[x \in A_{a}] \cdot \E\l[\langle p(f),\varphi\rangle  | x \in A_{a}\r]\\
& = \E[\l[\langle p(f),\varphi\rangle\r]\\
& \ge 1-\rho.
\end{align*}

Let $r_1, r_2:\N \rightarrow \N$ be two non decreasing functions to be specified later, and let $C_{r,d}^{(\ref{lem:reg})}$ be as given in Lemma~\ref{lem:reg}. We will require that for all $m \ge 1$, \begin{equation}\label{eq:r1r2}  r_1(m)\geq r_2(C_{r_2,d}^{(\ref{lem:reg})}(m+1))+C_{r_2,d}^{(\ref{lem:reg})}(m+1)+1.
\end{equation}

Let $\B$ be the factor defined by $\calF$. As a first step, we $r_1$-regularize $\calF$ by Lemma~\ref{lem:reg}. This gives an $r_1$-regular factor $\B'$ of degree at most $d$, defined by polynomials $\calF'=\{h_1',\ldots , h_{c'}':\F^n \rightarrow \F\}$, such that $\B' \succeq_{sem} \B$, $c' \leq C_{r_1,d}^{(\ref{lem:reg})}(c)$ and $\rank(\B')  \geq  r_1(c')$. Let $G_f:\F^{c'} \rightarrow \F$ be defined such that
$$
G_f(h_1'(x),\ldots , h_{c'}'(x))=\Gamma_f(h_1(x),\ldots , h_c(x)).
$$
Then
\begin{equation}\label{eq:fool}
\Pr[G_f(h_1'(x),h_2'(x),\ldots , h_{c'}'(x))=f(x)] \geq 1-\rho.
\end{equation}

Appealing again to Lemma~\ref{lem:reg}, we $r_2$-regularize $\B_f:=\B' \cup \{f\}$. We get an $r_2$-regular factor $\B'' \succeq_{syn} \B'$ defined by the collection $\calF''=\{h_1',\ldots , h_{c'}',h''_1,\ldots , h''_{c''}\}\subseteq \RM_{\F}(n,d-1)$. Note that it is a syntactic refinement of $\B'$ as by our choice of $r_1$, $$\rank(\B') \geq r_1(c')  \geq  r_2(C_{r_2,d}^{(\ref{lem:reg})}(c'+1))+C_{r_2,d}^{(\ref{lem:reg})}(c'+1)+1 \geq r_2(|\B''|)+|\B''|+1.$$
We will choose $r_2$ such that for all $m \ge 1$,
\begin{equation}\label{eq:r2atom}
r_2(m) = \max\l(r^{(\ref{lem:atomsize})}(d,2s+m),r^{(\ref{lem:degree})}(m,d,d)\r).
\end{equation}
Since $f$ is measurable with respect to $\B''$, there exists $F:\F^{c'+c''} \to \F$ such that
$$
f(x)=F(h_1'(x),\ldots , h_{c'}'(x), h''_1(x),\ldots , h''_{c''}(x)).
$$

As will see soon, our goal is to analyze the structure of $F$. We next show that we can have each polynomial in the factor have a disjoint set of inputs.
Let $r \in\N$ be large enough to be determined later. Let $n_1 = r \sum_{i=1}^{c'} \deg(h'_i)$ and $n_2 = r \sum_{i=1}^{c''} \deg(h''_i)$.
Define $y \in \F^{n_1}$ indexed as $y_{i,j,k}$, with $i \in [c'], j \in [r], k \in [\deg(h'_i)]$, and define
$z \in \F^{n_2}$ indexed as $z_{i,j,k}$, with $i \in [c''], j \in [r], k \in [\deg(h''_i)]$.
Define new polynomials $\widetilde{h'_i}(y), \widetilde{h''_i}(z)$ as follows:
\begin{align*}
\widetilde{h'_i}(y)=\sum_{j=1}^r \prod_{k=1}^{\deg(h'_i)} y_{i,j,k} \qquad \forall i \in [c'],\\
\widetilde{h''_i}(z)=\sum_{j=1}^r \prod_{k=1}^{\deg(h''_i)} z_{i,j,k} \qquad \forall i \in [c''].
\end{align*}
Note that the polynomials $\{h'_i: i \in [c']\},\{h''_i: i \in [c'']\}$ are defined over disjoint sets of variables, and that $\deg(\widetilde{h'_i})=\deg(h'_i)$ and $\deg(\widetilde{h''_i})=\deg(h''_i)$.
Define new functions $\widetilde{f}:\F^{n_1+n_2} \rightarrow \F$ and $\widetilde{g}:\F^{n_1} \rightarrow \F$ as  follows:
\begin{align*}
&\widetilde{f}(y,z)=F(\widetilde{h'_1}(y),\ldots , \widetilde{h'_{c'}}(y), \widetilde{h''_1}(z), \ldots , \widetilde{h''_{c''}}(z)),\\
&\widetilde{g}(y)=G_f(\widetilde{h'_1}(y), \ldots , \widetilde{h'_{c'}}(y)).
\end{align*}

\begin{claim}\label{claim:tild}For a large enough $r=r(d,c',c'',s)$ it holds that $\deg(\widetilde{f}) \le d$ and
$$
\l|\Pr_{y \in \F^{n_1}, z \in \F^{n_2}}[\widetilde{f}(y,z)=\widetilde{g}(y)] - \Pr_{x \in \F^n}[f(x)=G_f(h_1'(x),h_2'(x),\ldots , h_c'(x))]\r|\leq  \frac{1}{|\F|^{s+1}}.
$$
\end{claim}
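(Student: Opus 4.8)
The plan is to recognize that \emph{both} probabilities in the claim have the form $\Pr[\,\cdot\in S\,]$ for the single set
$$
S := \{(a,b)\in\F^{c'}\times\F^{c''} : F(a,b)=G_f(a)\},
$$
where the ``$\cdot$'' is, on one side, the distribution of $\B''(x)=(h'_1(x),\ldots,h'_{c'}(x),h''_1(x),\ldots,h''_{c''}(x))$ for uniform $x\in\F^n$, and on the other side the distribution of $W:=(\widetilde{h'_1}(y),\ldots,\widetilde{h'_{c'}}(y),\widetilde{h''_1}(z),\ldots,\widetilde{h''_{c''}}(z))$ for uniform $y,z$; this is immediate from $f(x)=F(h'(x),h''(x))$, $\widetilde{f}(y,z)=F(\widetilde{h'}(y),\widetilde{h''}(z))$ and $\widetilde{g}(y)=G_f(\widetilde{h'}(y))$. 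I would then prove that each of these two distributions on $\F^{c'+c''}$ lies within total variation distance $\tfrac{1}{2|\F|^{s+1}}$ of the uniform distribution; since $|\Pr_\mu[S]-\Pr_\nu[S]|\le d_{\mathrm{TV}}(\mu,\nu)$ and both probabilities against the uniform distribution equal $|S|/|\F|^{c'+c''}$, the triangle inequality gives the claimed estimate. The degree part is handled separately and cheaply: $f=F(h'(\cdot),h''(\cdot))$ has degree $\le d$, the new polynomials satisfy $\deg(\widetilde{h'_i})=\deg(h'_i)$ and $\deg(\widetilde{h''_i})=\deg(h''_i)$, and $\B''$ is $r_2$-regular with $r_2$ chosen as in~\eqref{eq:r2atom}, so the Degree Preserving Lemma (Lemma~\ref{lem:degree}) yields $\deg(\widetilde{f})\le\deg(f)\le d$.

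For the distribution of $\B''(x)$: by~\eqref{eq:r2atom} the rank of $\B''$ is at least $r^{(\ref{lem:atomsize})}(d,2s+|\B''|)$, so Lemma~\ref{lem:atomsize} gives that every one of the $|\F|^{c'+c''}$ atoms of $\B''$ has mass $|\F|^{-(c'+c'')}\pm|\F|^{-(2s+c'+c'')}$; summing over atoms, $d_{\mathrm{TV}}(\B''(x),\mathrm{unif})\le\tfrac12|\F|^{-2s}\le\tfrac{1}{2|\F|^{s+1}}$, using $s\ge1$.

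The bulk of the argument is the distribution of $W$. The structural observation is that $\widetilde{h'_i}$ depends only on the block of variables $\{y_{i,j,k}\}_{j,k}$, these blocks are disjoint over $i$, likewise $\widetilde{h''_i}$ over the $z$-variables, and $y,z$ are independent; hence $W$ is a product of $c'+c''$ independent $\F$-valued random variables, each a sum of $r$ i.i.d.\ copies of a single monomial $\prod_{k=1}^{m}u_k$ with uniform $u_k\in\F$ and $1\le m\le d-1$. Conditioning on $u_2,\ldots,u_m$, for every nonzero frequency $\lambda$ one finds $\E_u[e(\lambda\prod_k u_k)]=\Pr[\exists\,k\ge2:u_k=0]=1-(1-\tfrac1{|\F|})^{m-1}$, which by Bernoulli's inequality is at most $(m-1)/|\F|\le(d-2)/|\F|$. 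Raising to the $r$th power (independence of the $r$ summands), every nonzero-frequency Fourier coefficient of each coordinate of $W$ has absolute value $\le((d-2)/|\F|)^r$, so each coordinate marginal is within total variation $O_d\big(|\F|\,((d-2)/|\F|)^r\big)$ of uniform on $\F$, and by tensorization $d_{\mathrm{TV}}(W,\mathrm{unif})\le(c'+c'')\,|\F|\,((d-2)/|\F|)^r$. Since $\F$ is prime with $|\F|\ge d+1$, I would then pick $r=r(d,c',c'',s)$ large enough that this is $\le\tfrac{1}{2|\F|^{s+1}}$ for \emph{all} admissible fields at once: once $r>s+2$ the quantity $(c'+c'')\,|\F|^{s+2}\,((d-2)/|\F|)^r$ is decreasing in $|\F|$, so it suffices to make it $\le\tfrac12$ at $|\F|=d+1$, which holds for large $r$ because $(d-2)/(d+1)<1$. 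Combining the two total variation bounds finishes the proof.

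The step I expect to be the real obstacle is keeping $r$ independent of $|\F|$. A crude, field-independent bound on the single-monomial bias — say $\le 1-2^{-(d-1)}$ — would only give $d_{\mathrm{TV}}(W,\mathrm{unif})\le(c'+c'')(1-2^{-(d-1)})^r$, which for any fixed $r$ is a positive constant and cannot be driven below the target $|\F|^{-(s+1)}$ once $|\F|$ grows. The resolution, and the only place the hypothesis $|\F|>d$ is used, is that the bias of each monomial is in fact $O_d(1/|\F|)$ — so that each extra unit of $r$ contributes an additional factor of essentially $|\F|^{-1}$, comfortably outpacing the target.
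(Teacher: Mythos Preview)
Your proof is correct and follows essentially the same approach as the paper: both probabilities are rewritten as the probability that the respective distribution on $\F^{c'+c''}$ lands in the set $S=\{(a,b):F(a,b)=G_f(a)\}$, Lemma~\ref{lem:atomsize} handles the distribution of $\B''(x)$, Lemma~\ref{lem:degree} gives the degree bound, and the disjoint-variable distribution is shown to be nearly uniform for large $r$. The paper merely asserts the last point (``for $r$ large enough, as the polynomials are evaluated on disjoint variables, it also holds that $p_2(a)=\frac{1\pm|\F|^{-2s}}{|\F|^{c'+c''}}$''); you supply the explicit Fourier computation and, importantly, the argument that $r$ can be taken independent of $|\F|$ via the bound $|\E[e(\lambda\prod_k u_k)]|\le (m-1)/|\F|$ --- this is exactly the content the paper leaves implicit.
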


\begin{proof}
The bound $\deg(\widetilde{f}) \le \deg(f) \le d$ follows from Lemma~\ref{lem:degree} since $r_2(|\calF''|) \ge r^{(\ref{lem:degree})}_d(|\calF''|)$. To establish the bound on $\Pr[\widetilde{f}=\widetilde{g}]$, for each $a \in \F^{c'+c''}$ let
$$
p_1(a) = \Pr_{x \in \F^n}[(h'_1(x),\ldots,h'_{c'}(x),h''_1(x),\ldots,h''_{c''}(x))=a].
$$
Applying Lemma~\ref{lem:atomsize} and since our choice of $r_2$ satisfies $\rank(\calF'') \ge r^{(\ref{lem:atomsize})}(d,s+2|\calF''|)$, we have that $p_1$ is nearly uniform over $\F^{c'+c''}$,
$$
p_1(a) = \frac{1 \pm |\F|^{-2s}}{|\F|^{c'+c''}}.
$$
Similarly, let
$$
p_2(a) = \Pr_{y \in \F^{n_1},z \in \F^{n_2}} [(\widetilde{h'_1}(y),\ldots,\widetilde{h'_{c'}}(y),\widetilde{h''_1}(z),\ldots, \widetilde{h''_{c''}}(z))=a].
$$
For $r$ large enough, as the polynomials are evaluated on disjoint variables, it also holds that $$p_2(a) = \frac{1 \pm |\F|^{-2s}}{|\F|^{c'+c''}}.$$

For $a \in \F^{c'+c''}$, let $a' \in \F^{c'}$ be the restriction of $a$ to first $c'$ coordinates,  $a'=(a_1,\ldots ,a_{c'})$. Thus
\begin{align*}
\Pr_{y \in \F^{n_1},z \in \F^{n_2}}[\widetilde{f}(y,z)=\widetilde{g}(y)] &=
\sum_{a \in \F^{c'+c''}} p_2(a) 1_{F(a)=G_f(a')} \\
&= \sum_{a \in \F^{c'+c''}} p_1(a) 1_{F(a)=G_f(a')} \pm |\F|^{-2s} \\
&= \Pr_{x \in \F^n}[f(x)=G_f(h_1'(x),h_2'(x),\ldots , h_c'(x))] \pm |\F|^{-2s}.
\end{align*}
\end{proof}

So, we obtain that
$$
\Pr_{y \in \F^{n_1},z \in \F^{n_2}}[\widetilde{f}(y,z)=\widetilde{g}(y)] \ge \Pr_{x \in \F^n} [f(x) = G_f(h'_1(x),\ldots,h'_{c'}(x))] - |\F|^{-2s} \ge \frac{e}{|\F|}+|\F|^{-2s}.
$$
In the remaining part of the proof, we show that $\deg(\widetilde{h_j''}) \le d-e$. Since, $$\rank(\B'') \ge r_2(|\B''|) \ge r^{(\ref{lem:degree})}(|\B''|,d,d),$$ this implies that $\deg(F) \le d$ by Lemma~\ref{lem:complowdegree}.
This immediately proves that the number of $f \in B_{\CC}(\varphi, \rho)$ is bounded by
$$
|B_{\CC}(\varphi, \rho)| \le (\#\textrm{ of } F) (\# \textrm{ of } h''_1,\ldots,h''_{c''}) \le |\F|^{(c'+c'')^d}|\F|^{O\l(c''n^{d-e}\r)}=|\F|^{O_{d,s,c}\l(n^{d-e}\r)}.
$$

To conclude, we prove the following.
\begin{claim}
$\deg(\widetilde{h''_i}) \le d-e$ for all $i \in [c'']$.
\end{claim}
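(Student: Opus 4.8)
The plan is to argue by contradiction: suppose $d_{i_0}:=\deg(\widetilde{h''_{i_0}})=\deg(h''_{i_0})\ge d-e+1$ for some $i_0\in[c'']$, and assume (discarding from $\cH''$ any $h''_i$ that $F$ does not actually depend on, which only helps) that $F$ depends on every $\widetilde{h''_i}$. First I would convert the claim into a statement about $\widetilde f$ alone. Since $\widetilde{h'_1},\dots,\widetilde{h'_{c'}},\widetilde{h''_1},\dots,\widetilde{h''_{c''}}$ are defined over pairwise disjoint blocks of variables, when we expand $\widetilde f=F(\widetilde{h'},\widetilde{h''})$ the distinct monomials of $F$ cannot cancel, so — exactly as in the proof of Lemma~\ref{lem:complowdegree} — $\deg_z\widetilde f$ equals the maximum of $\sum_l\beta_l\deg(h''_l)$ over the monomials $\prod_l(\widetilde{h'_l})^{\alpha_l}\prod_l(\widetilde{h''_l})^{\beta_l}$ occurring in $F$; in particular $\deg_z\widetilde f\ge d_{i_0}\ge d-e+1$, while $\deg\widetilde f\le d$ by Claim~\ref{claim:tild}. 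Thus it suffices to show that the agreement $\Pr_{y,z}[\widetilde f(y,z)=\widetilde g(y)]\ge \frac{e}{|\F|}+\frac{1}{|\F|^{2s}}$ forces $D:=\deg_z\widetilde f\le d-e$.

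The driving fact would be the minimum distance of $\RM_\F$: since $D\le d<|\F|$, any polynomial in the $z$-variables of $z$-degree at most $D$ that is nonconstant attains each value of $\F$ on at most a $D/|\F|$ fraction of its domain. Fix $y$; then $\widetilde f(y,\cdot)$ has $z$-degree $\le D$, and the coefficients of its degree-$D$ $z$-monomials are polynomials in $y$ of degree $\le d-D$ which are not all identically zero, so by the Schwartz--Zippel bound only a $\frac{d-D}{|\F|}$ fraction of $y$ makes $\widetilde f(y,\cdot)$ drop below $z$-degree $D$. For every $y$ outside this exceptional set, $\widetilde f(y,\cdot)$ is nonconstant (as $D\ge 1$) of $z$-degree $\le D$, so $\Pr_z[\widetilde f(y,z)=\widetilde g(y)]\le D/|\F|$; bounding the agreement by $1$ on the exceptional set and averaging yields
$$
\frac{e}{|\F|}+\frac{1}{|\F|^{2s}}\ \le\ \Pr_{y,z}[\widetilde f(y,z)=\widetilde g(y)]\ \le\ \frac{d-D}{|\F|}+\frac{D}{|\F|}\ =\ \frac{d}{|\F|}.
$$
When $e=d$ this is a contradiction unless $D=0$, which settles the extremal case (so, for list decoding up to the minimal distance, $f$ is already $\cF'$-measurable).

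For $e<d$ the crude bound $1$ on the exceptional atoms is too lossy, and here the argument must follow~\cite{BL14}: instead of Schwartz--Zippel, one uses the regularity of $\B''$ together with Lemma~\ref{lem:atomsize} to see that on each atom of $\cF'$ the restriction of $\widetilde f$ behaves like a generic polynomial of the appropriate degree in the $\cH''$-coordinates, which shrinks the set of degenerate atoms (those where $\widetilde f(y,\cdot)$ has small $z$-degree) far below what Schwartz--Zippel alone gives; this is then combined with the observation that taking $d-e$ iterated $z$-directional derivatives of $\widetilde f-\widetilde g$ annihilates $\widetilde g$, reducing matters to the tight instance above. I expect this last step, for general $e$, to be the main obstacle: it genuinely relies on the regularity of the factor $\B''$ rather than on the degree of $\widetilde f$ alone, and getting the exceptional-atom count tight enough is exactly the delicate part carried over from~\cite{BL14}.

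Granting the claim, the count is immediate: $F$ has total degree $\le d$ (by Lemma~\ref{lem:complowdegree}, since $\deg\widetilde f\le d$ and $\B''$ is regular enough), so there are at most $|\F|^{(c'+c'')^d}$ possibilities for $F$; each $h''_j$ has degree $\le d-e$ in $n$ variables, hence at most $|\F|^{O_{d,e}(n^{d-e})}$ possibilities; and $f\in B_{\CC}(\varphi,\rho)$ is recovered from the tuple $(F,h''_1,\dots,h''_{c''})$. Therefore $|B_{\CC}(\varphi,\rho)|\le|\F|^{(c'+c'')^d}\cdot|\F|^{O(c''\,n^{d-e})}=|\F|^{O_{d,s,c}(n^{d-e})}$, as required.
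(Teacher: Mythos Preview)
Your argument is complete only for $e=d$; for $e<d$ your bound collapses to $\Pr[\widetilde f=\widetilde g]\le d/|\F|$, which is vacuous, and you defer to an unexecuted regularity/derivative argument borrowed from~\cite{BL14}. This is the real gap, and your speculation about how to close it is off-track: the paper handles all $e\le d$ at once by an elementary device you are missing, with no further use of regularity of $\B''$ and no iterated derivatives.

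The key is to isolate a \emph{single} polynomial $h:=\widetilde{h''_i}$ rather than the full $z$-block. Let $w'$ be the variables feeding $h$ and $w''$ all remaining variables (the $y$-blocks \emph{and} the other $z$-blocks). Because the blocks are disjoint, $\widetilde f=\Gamma'(w'',h(w'))$ depends on $w'$ only through the scalar $h(w')$. Expand $\Gamma'(w'',t)=\sum_{j=0}^{d'}q_j(w'')\,t^{j}$ with $q_{d'}\not\equiv 0$; since $\deg\widetilde f\le d$ and (taking $r>d^2$) $\deg(h^{j})=j\,d_0$ where $d_0=\deg h$, one gets $\deg q_{d'}\le d-d'd_0$. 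Now bound the agreement: $\Pr[q_{d'}(w'')=0]\le(d-d'd_0)/|\F|$ by Schwartz--Zippel, and conditioned on $q_{d'}(w'')\ne0$ the equation $\sum_j q_j(w'')\,h(w')^{j}=\widetilde g(w'')$ is a \emph{univariate} equation of degree $d'$ in the value $h(w')$; since $h(w')$ is $|\F|^{-4s}$-close to uniform on $\F$ for large $r$, it holds with probability $\le d'/|\F|+|\F|^{-4s}$. Thus
\[
\frac{e}{|\F|}+\frac{1}{|\F|^{2s}}\ \le\ \Pr[\widetilde f=\widetilde g]\ \le\ \frac{d-d'd_0}{|\F|}+\frac{d'}{|\F|}+|\F|^{-4s}\ =\ \frac{d-d'(d_0-1)}{|\F|}+|\F|^{-4s},
\]
giving $d_0-1<(d-e)/d'\le d-e$, hence $d_0\le d-e$. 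The whole point is that univariate root counting in the \emph{value} $h(w')$ yields $d'/|\F|$, not the $d'd_0/|\F|$ that Schwartz--Zippel in the $w'$-variables would give; this is precisely the slack that makes the inequality nontrivial for general $e$. Your approach, by treating all of $z$ at once, throws this slack away.
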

\begin{proof}
To simplify notations, let $h=h''_i$, $n' = r \cdot \deg(h''_i)$, let $w' \in \F^{n'}$ denote the inputs to $\widetilde{h''_i}$, namely $\{z_{i,j,k}: j \in [r], k \in \deg(h''_i)\}$, and let $w'' \in \F^{n_1+n_2-n'}$ denote all the remaining inputs from $y,z$. Let $n''=n_1+n_2-n'$. Then we have
$$
\widetilde{f}(w',w'')=\Gamma'(w'',h(w')), \quad \widetilde{g}(w'')=\Gamma''(w'').
$$
Let $d_0:=\deg(h)$, where our goal is to prove that $d_0 \le d-e$. Note that as $\deg(\widetilde{f}) \le d$ by Claim~\ref{claim:tild}, we must have $\deg(\Gamma') \le d$. Thus we can expand
$$
\Gamma'(w''_1,\ldots , w''_{n''},t)=\sum_{i=0}^{d'} q_i(w''_1,\ldots,w''_{n''}) t^i,
$$
where $d' \le d$ and $q_{d'} \ne 0$. Moreover, by choosing $r > d^2$, we have that $\deg(h^i) = i \cdot \deg(h)$ for any $i \le d$. Thus, we have $\deg(q_i) \le d - i \cdot d_0$.
We have
$$
\Pr[\widetilde{f}(w'',h(w'))=\widetilde{g}(w'')] = \Pr\l[(q_0-\Gamma'')(w'')+\sum_{i=1}^{d'}q_i(w'')  h(w)^i=0\r].
$$
We upper bound this probability as a combination of two terms. Consider any fixing of $w''$. The probability that $q_{d'}(w'')=0$ is bounded by
$$
\Pr[q_{d'}(w'')=0] \le \frac{\deg(q_{d'})}{|\F|} \le \frac{d - d' d_0}{|\F|}.
$$
Otherwise, we have $q_{d'}(w'') \ne 0$. In such a case, by choosing $r$ large enough (as a function of $s$) we have that $\l |\Pr[h(w)=a] - |\F|^{-1} \r| \le |\F|^{-4s}$ for all $a \in \F$; and hence, if we set $\alpha_i = q_i(w'')$ for $1 \le i \le d'$ and $\alpha_0 = q_0(w'')-\Gamma''(w'')$, then
$$
\Pr_{w' \in \F^{n'}} \l[\sum_{i=0}^{d'} \alpha_i h(w')^i=0\r] = \Pr_{\beta \in \F} \l[\sum_{i=0}^{d'} \alpha_i \beta^i=0\r] \pm |\F|^{-4s} \le \frac{d'}{|\F|} + |\F|^{-4s},
$$
where $\beta \in \F$ is a uniform field element. Combining these bounds, we have that
$$
\Pr[\widetilde{f}(w'',h(w'))=\widetilde{g}(w'')] \le \frac{d - d' d_0}{|\F|} + \l( 1- \frac{d - d' d_0}{|\F|} \r) \frac{d'}{|\F|} + |\F|^{-4s}
$$
Recalling that $\Pr[\widetilde{f}(w'',h(w'))=\widetilde{g}(w'')] \ge \frac{e}{|\F|} + |\F|^{-2s}$, we obtain that
$$
\frac{e}{|\F|} < \frac{d - d' d_0}{|\F|} + \frac{d'}{|\F|}= \frac{d - d' (d_0-1)}{|\F|}.
$$
Thus, $d_0-1 < \frac{d-e}{d'} \le d-e$ and hence $d_0 \le d-e$ as claimed.

\end{proof}

\newcommand{\etalchar}[1]{$^{#1}$}

\end{document}